\def\llncs{0}
\newtheorem{theorem}{Theorem}[section]
\newtheorem{corollary}[theorem]{Corollary}
\newtheorem{definition}[theorem]{Definition}
\newtheorem{lemma}[theorem]{Lemma} % for llncs
\newtheorem{proposition}[theorem]{Proposition}
\theoremstyle{definition}
\theoremstyle{remark}
\newtheorem{remark}[theorem]{Remark}
\newtheorem{notation}{Notation}
\newtheorem{construction}{Construction}
\newcommand{\A}{\mathcal{A}}
\newcommand{\B}{\mathcal{B}}
\newcommand{\D}{\mathcal{D}}
\renewcommand{\O}{\mathcal{O}}
\newcommand{\N}{\mathbb{N}}
\newcommand{\C}{\mathbb{C}}
\newcommand{\E}{\mathop{\mathbb{E}}}
\newcommand{\Ind}{\mathds{1}}
\newcommand{\poly}{poly}
\newcommand{\wt}[1]{\widetilde{#1}}
\newcommand{\ol}[1]{\overline{#1}}
\DeclareMathOperator{\Sim}{Sim}
\DeclareMathOperator{\row}{RowSpace}
\newcommand{\SSM}{ \textsc{\textbf{CHRS}}}
\newcommand{\SSMM}{\textsc{\textbf{CHRS-}}}
\newcommand{\Swap}{\textsc{\textbf{Swap}}}
\newcommand{\Zero}{\mathfrak{Z}}
\newcommand{\Rep}[1]{\smash[b]{\underset{#1}{Rep}}}
\newcommand{\Set}[1]{\smash[b]{\underset{#1}{Set}}}
\title{Translating Between the Common Haar Random \\State Model and the Unitary Model}
\author{}
\date{}
\institute{}
\author{
    Eli Goldin\thanks{Email: \texttt{eli.goldin@nyu.edu}. Partially supported by a National Science Foundation Graduate Research Fellowship.}
    \vspace{0.2cm}
    % \\{\small Department of Computer Science}
    \\{\small New York University\vspace{0.3cm}}
    \and
    Mark Zhandry\thanks{Email: \texttt{mzhandry@gmail.com}}
    \vspace{0.2cm}
    % \\{\small Department of Computer Science}
    \\{\small NTT Research\vspace{0.3cm}}
}
\begin{document}

\maketitle

\begin{abstract}Black-box separations are a cornerstone of cryptography, indicating barriers to various goals. A recent line of work has explored black-box separations for \emph{quantum} cryptographic primitives. Namely, a number of separations are known in the Common Haar Random State (CHRS) model, though this model is not considered a complete separation, but rather a starting point. A few very recent works have attempted to lift these separations to a unitary separation, which are considered complete separations. Unfortunately, we find significant errors in some of these lifting results.

We prove general conditions under which CHRS separations can be generically lifted, thereby giving simple, modular, and bug-free proofs of complete unitary separations between various quantum primitives. Our techniques allow for simpler proofs of existing separations as well as new separations that were previously only known in the CHRS model.
\end{abstract}

\newpage
\tableofcontents
\newpage

\section{Introduction}\label{sec:intro}

As cryptography transitions into a quantum world, it is important to understand the relationship between various quantum cryptographic concepts. In particular, while one-way functions are widely considered to be the most basic \emph{classical} object, it has recently become apparent that there is a vast world of primitives that live even below one-way functions. To help understand this new world (often dubbed ``MicroCrypt''), there has recently been much work on showing black-box separations between various tasks.

In this work, we will focus on a recently popular approach, which is the common Haar-random state (CHRS) model~\cite{ARXIV:CheColSat24,TCC:AnaGulLin24}. This model gives all users -- protocols and adversaries alike -- many copies of a single Haar-random quantum state. Some primitives such as quantum commitments, one-way state puzzles, and 1-copy pseudorandom states (1-PRS) exist in this model, but several recent works~\cite{ARXIV:CheColSat24,TCC:AnaGulLin24,EPRINT:AnaGulLin24a} have shown that a number of other primitives do \emph{not} exist, such as many-time PRSs, one-way state generators, and commitments and key agreement with \emph{classical} communication. Thus, these primitives are all separated from quantum commitments/one-way puzzles/1-PRSs in this model.

The power of the CHRS model comes from its simplicity, as it is essentially the most basic idealized model one can imagine. However, the CHRS model is an \emph{isometry} mapping a small input state (in this case, the empty state) into a larger output state. Importantly, the isometry is \emph{irreversible}, and there is no way to coherently eliminate a copy of the state. Such an oracle does not adequately reflect ``real-world'' techniques that are available, as quantum (unitary) circuits are reversible. For this reason, a separation in the CHRS model is not considered a full separation, though it is still useful as a starting point. Instead, it is preferable to use a \emph{unitary} (reversible) oracle.

\begin{remark}One may ask if it is even better to use a \emph{classical} oracle for separations. Such a classical oracle separation would capture systems based off of classical functions, and would indeed be relevant for separating primitives that live ``above'' one-way functions. However, for the inherently quantum MicroCrypt primitives we focus on here, the primitives in general cannot be represented using classical functions, but instead will be represented using unitary operations. Therefore, a unitary separation seems essentially as good as a classical one.
\end{remark}

Unfortunately, unitary oracles are much harder to work with, as now one has to reason about query complexity and deal with issues like reversing computation or adaptivity. Very recently, several works have started to ``lift'' the CHRS model separations into a unitary separation~\cite{ARXIV:CheColSat24,EPRINT:BosCheNeh24,EPRINT:BMMMY24}\footnote{The first version of~\cite{ARXIV:CheColSat24} only included the CHRS model, but was more recently updated to include extensions to the unitary model.} to give a unitary black-box separation. The main result common to these works is to show that quantum commitments do not imply many-time PRSs, though the different works have different variants of this statement. These works all employ somewhat similar ideas, but differ significantly on the underlying details.

\paragraph{Our Work.} Our main result, which is inspired by the particular techniques of~\cite{EPRINT:BMMMY24}, is the following:
\begin{theorem}[informal]\label{thm:main_inf} For any 
``typical''\footnote{Here, a ``typical'' primitive is one whose security experiment only queries the underlying primitive a bounded polynomial number of times. We note that certain primitives such as quantum one-wayness or many-time PRSs technically allow an arbitrarily polynomial number of queries, but we can also consider bounded-query versions which are ``typical'' by this definition, and will be sufficient for our purposes.} primitive $A$, $A$ exists in a slightly modified CHRS model if and only if it exists in a unitary oracle model. In particular, separations that exist in the modified CHRS model also hold in a unitary oracle model.
\end{theorem}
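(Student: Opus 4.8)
The plan is to prove both directions of the equivalence from a single technical core: for any algorithm making a polynomial number of queries, the (modified) CHRS oracle and one fixed Haar-random unitary oracle $(U, U^\dagger)$ can each be used to simulate the other up to negligible statistical error. Given this, the equivalence of \emph{existence} follows by composition — substitute the simulator for the oracle inside any construction of $A$ to move the construction from one model to the other, and transport an adversary in the reverse direction the same way. Separations transfer for the same reason: if a separation ``$B$ exists, $A$ does not'' holds in the modified CHRS model, then relative to the \emph{same} $(U,U^\dagger)$ we get ``$B$ exists'' from one direction of the equivalence and ``$A$ does not exist'' from the other, which is exactly a unitary separation.

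\paragraph{Simulating CHRS from a unitary (the easy direction).} I expect the ``slight modification'' to be precisely the minimal change that makes the CHRS oracle reversible — concretely, the supply of copies of a Haar-random $\ket{\psi}$, augmented with the ability to coherently map $\ket{\psi}\mapsto\ket{0^n}$, equivalently a (controlled) reflection about $\ket{\psi}$. This is simulated \emph{perfectly} by $(U,U^\dagger)$: take $\ket{\psi} := U\ket{0^n}$, prepare each copy with one query to $U$ on $\ket{0^n}$, and implement the reflection as $U R_{0^n} U^\dagger$, where $R_{0^n}$ is the reflection about $\ket{0^n}$ (apply $U^\dagger$, reflect, apply $U$). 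Since $U$ is Haar-random, $\ket{\psi}$ has exactly the right distribution, so there is no loss. Inserting this into a construction of $A$ in the modified CHRS model yields a construction relative to $(U,U^\dagger)$, and any adversary against the latter becomes an adversary against the former by running the same simulation internally. This already gives one direction of the iff outright.

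\paragraph{Simulating a Haar unitary from CHRS (the hard direction), and the main obstacle.} The reverse simulation is where all the difficulty lies: one must answer a $q$-query algorithm's arbitrarily interleaved forward and inverse queries to a Haar-random $U$ using only the modified CHRS oracle. The naive attempt — lazily sample the columns $U\ket{x}$ as fresh Haar-random vectors — breaks on $U^\dagger$ queries, since a lazily defined random unitary has no consistent inverse to reveal; this is exactly the irreversibility the \emph{unmodified} CHRS model cannot repair, and, I suspect, where the prior lifting attempts flagged in the introduction went wrong. The fix has two pieces: (i) route the distinguished column $U\ket{0^n}$ through the genuinely shared state $\ket{\psi}$ so that honest parties and adversaries see one consistent object; and (ii) handle the remainder of $U,U^\dagger$ with a purified / path-recording-style lazy sampler for Haar-random unitaries, whose consistency under interleaved forward and inverse queries is precisely what the added reflection capability of the modified model is there to supply. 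One then shows this simulator is statistically $\mathrm{poly}(q)\cdot 2^{-\Omega(n)}$-close to a true $(U,U^\dagger)$; substituting it moves a construction of $A$ into the modified CHRS model, while the easy-direction simulation run in reverse transports an adversary back. The ``typical'' hypothesis enters here, keeping the honest construction within the query budget for which $\mathrm{poly}(q)\cdot 2^{-\Omega(n)}$ is negligible (adversaries being polynomial-query for free). The real work — and the step I expect to be the obstacle — is designing and analyzing this forward-and-inverse lazy sampler and pinning down the exact minimal augmentation of the CHRS model that makes it consistent while staying weak enough that the known CHRS separations survive it, getting the statistical bound and its interaction with the security experiment's query budget right rather than off by the sort of subtle factor that undermined earlier proofs.
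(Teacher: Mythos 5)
Your proposal targets the wrong unitary oracle, and that is the crux of the gap. You take ``unitary oracle model'' to mean a full Haar-random $(U,U^\dagger)$, and consequently the hard direction of your argument requires a lazy sampler for Haar-random unitaries under interleaved forward and inverse queries, driven by nothing more than state copies and a reflection. You correctly identify this as the obstacle, but it is not a detail to be filled in: simulating a Haar-random $U$ (on all of $\C^{2^n}$) from polynomially many copies of a related \emph{state} is far beyond what the given resources can support, and indeed no such simulator is known or expected. The paper avoids this entirely by choosing a much smaller unitary oracle: the \textbf{SWAP} oracle $S_{\ket\psi}$, which swaps $\ket{0}$ and $\ket\psi$ and acts as the identity everywhere else. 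This is a unitary acting nontrivially only on a two-dimensional subspace, and that is precisely what makes both directions of the indifferentiability tractable — simulating $S_{\ket\psi}$ reduces to implementing a reflection about a single known state, the problem Ji--Liu--Song solve with $t$ copies and error $O(1/\sqrt{t})$.

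The second mismatch is the ``slight modification'' of the CHRS model. You read it as ``copies of $\ket{\psi}$ plus a reflection about $\ket{\psi}$.'' The paper instead defines the $\SSMM$ model, in which parties receive copies of the single state $\ket{\psi-} = \frac{1}{\sqrt 2}(\ket{0}-\ket{\psi})$ and nothing else. This is important because the key algebraic observation is that $S_{\ket\psi}$ is exactly $-R_{\ket{\psi-}}$ restricted to $\mathrm{span}\{\ket 0,\ket\psi\}$ (and identity elsewhere), so ``swap'' becomes ``phase flip on $\ket{\psi-}$,'' and the Ji--Liu--Song reflection built from copies of $\ket{\psi-}$ implements the oracle directly with a \emph{stateless} simulator. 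Statelessness gives reset indifferentiability and hence composition for multi-stage games too, which your ``$\ket\psi$ plus a reflection oracle'' formulation would not automatically yield. The easy direction in the paper is also cleaner than yours: given $\Swap$, one query on $\ket{-}\ket{0}$ (controlled) produces $\ket{\phi_n-}$ exactly, so the $\SSMM$ oracle is simulated perfectly and statelessly; no $U^\dagger$ arises.

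Finally, your diagnosis of where the prior works' bugs were is off. Those works already used the $\Swap$ oracle, not a Haar unitary; their difficulty was that the \emph{stateful} simulator for $\Swap$ built from copies of $\ket\psi$ becomes entangled with the output (so swapping out copies breaks purity of the simulated PRS state and cannot be shared across independent security-game invocations). The paper's $\ket{\psi-}$-based simulator is specifically designed to sidestep that entanglement issue, and the cost is the $\ket\psi \to \ket{\psi-}$ model change, with a separate no-go result explaining why one cannot generically recover $\ket{\psi-}$ from $\ket\psi$. You would do well to recenter your proof around the $\Swap$ oracle and $\ket{\psi-}$ states, and to phrase the simulation guarantee as (reset) indifferentiability so that the composition step transporting primitives and adversaries is a black-box citation rather than an ad hoc substitution.
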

Here, the CHRS variant provides all parties copies of the state $\ket{\psi-}:=\ket{\psi}-\ket{0}$ for a Haar-random $\ket{\psi}$, rather than $\ket{\psi}$ itself. 

It may seem that $\ket{\psi}$ and $\ket{\psi-}$ should have roughly the same power, and in one direction we show that this is true:
\begin{theorem}[informal]\label{thm:main_inf3} For any primitive $A$ whose security is described by a ``single-stage game'',\footnote{A single-stage game is a potentially interactive game between a \emph{single} adversary and challenger, as opposed to ``multi-stage'' games which have multiple isolated adversaries.} if $A$ exists in the plain CHRS model, it also exists in a $\ket{\psi-}$ model and hence in a unitary model.
\end{theorem}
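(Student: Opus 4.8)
The plan is to reduce the statement to a single implication: by Theorem~\ref{thm:main_inf} the $\ket{\psi-}$ model and the unitary model are equivalent, so it suffices to show that if $A$ exists in the plain CHRS model then it exists in the unitary model (that automatically also places it in the $\ket{\psi-}$ model, which gives the full conclusion). The unitary oracle I would use is $U := I - 2\,\ket{\psi-}\bra{\psi-}/\|\ket{\psi-}\|^2$, the reflection about $\ket{\psi-}$; since $\langle 0|\psi\rangle = 2^{-\Omega(n)}$ for a Haar-random $\ket{\psi}$, this $U$ agrees up to $2^{-\Omega(n)}$ with the unitary that exchanges $\ket{0^n}$ and $\ket{\psi}$ and is the identity on the orthogonal complement of $\mathrm{span}\{\ket{0^n},\ket{\psi}\}$. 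The overall structure is the usual one for transferring an existence result: port the honest construction from the CHRS model to the unitary model, and port an attack in the other direction.

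Porting the construction is immediate. Given the plain-CHRS construction $C$ that is secure against all plain-CHRS adversaries, define the unitary-model construction $C'$ to run $C$ verbatim, except that whenever $C$ asks for a fresh copy of $\ket{\psi}$, $C'$ produces it as $U\ket{0^n}$ (and uncomputes such a copy, when needed, via $U^\dagger = U$). Because the primitive is ``typical'', $C$ queries its oracle only a bounded polynomial number of times, so $C'$ is a legitimate bounded-query algorithm, and since the states it feeds to $C$ are \emph{exactly} copies of $\ket{\psi}$, $C'$ inherits the correctness of $C$ with no loss. If the security game's challenger itself touches the oracle, define $Ch'$ from $Ch$ in the same way.

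The security direction is the heart of the argument. Given a unitary-model adversary $\mathcal{A}'$ that wins the game against $C'$, I would build a plain-CHRS adversary $\mathcal{A}$ that wins against $C$: $\mathcal{A}$ runs $\mathcal{A}'$ internally, relays $\mathcal{A}'$'s protocol messages to the real challenger $Ch$ (whose behaviour coincides with that of $Ch'$, since $Ch'$ only ever uses $U$ to manufacture genuine $\ket{\psi}$-copies), and must answer $\mathcal{A}'$'s queries to $U$ using only the copies of $\ket{\psi}$ that $\mathcal{A}$ is given. One cannot hope to implement $U$ outright from $\ket{\psi}$-copies — rotating $\ket{0^n}$ into $\ket{\psi}$ by Grover-type steps takes $\approx 2^{n/2}$ of them since the overlap is $2^{-\Omega(n)}$, which is precisely the obstruction that keeps CHRS separations from lifting for free — but a \emph{bounded} number of queries can be answered by a lazy simulation: wherever a query register holds $\ket{0^n}$, swap in a fresh copy of $\ket{\psi}$; on a subsequent query to a register holding such a swapped-in copy, swap it back out; and handle superpositions and entanglement coherently by maintaining a ``database'' register that records the simulated oracle's behaviour so far, in the style of compressed-oracle arguments. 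The main obstacle is the error analysis: one must show this coherent simulation is within $2^{-\Omega(n)}$ of the true oracle $U$, the key point being that $\mathcal{A}'$ can develop amplitude on the Haar-random, unknown state $\ket{\psi}$ only through queries the database already tracks, so the branches on which the simulation diverges carry total weight at most $2^{-\Omega(n)}$. This is exactly where the single-stage hypothesis enters: a single monolithic adversary can thread one coherent database through the whole execution, whereas several mutually isolated adversaries sharing the common oracle cannot maintain a shared database, so the simulation — and with it the lift — breaks, in line with the known impossibilities in the unitary model for genuinely multi-stage primitives. Putting the pieces together, $\mathcal{A}$'s winning probability against $C$ is at least $\mathcal{A}'$'s winning probability against $C'$ minus $2^{-\Omega(n)}$; hence $C$ secure in the plain CHRS model implies $C'$ secure in the unitary model, and Theorem~\ref{thm:main_inf} then also realizes $A$ in the $\ket{\psi-}$ model.
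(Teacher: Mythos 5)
Your high-level plan---realize the primitive in the unitary model, then invoke Theorem~\ref{thm:main_inf} to get the $\ket{\psi-}$ model as well---is logically sound, and you correctly identify that (i) the construction side is trivial since $U\ket{0}=\ket{\psi}$ exactly, and (ii) the crux is a stateful simulation of the swap oracle from copies of $\ket{\psi}$, with the stateful-ness being precisely why the theorem is restricted to single-stage games. However, your route is genuinely different from the paper's, and the key step is treated too loosely to stand on its own.

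The paper does \emph{not} simulate $\Swap$ directly from copies of $\ket{\psi}$. Instead it factors the argument through the $\ket{\psi-}$ model: it first shows $\ket{\psi}$-model security implies $\ket{\psi-}$-model security via Construction~\ref{cons:ssmmtossm} (post-select $\ket{\psi-}$ on ``not $\ket{0}$'' to recover $\ket{\psi}$) together with the stateful simulator of Construction~\ref{cons:statesim}, whose correctness is an \emph{exact} density-matrix identity (Lemma~\ref{lem:binom}: for a balanced distribution, $\E\ketbra{\psi-}^{\otimes T}$ equals a binomial mixture of $\Rep$ states built from integer numbers of $\ket{\psi}$-copies). Only then does it move to $\Swap$ via the separate reset-indifferentiability of Theorem~\ref{thm:ssmmtoswap}. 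Your plan instead collapses both steps into a single lazy ``swap in / swap out'' simulation of $S_{\ket{\psi}}$, which is the approach of the prior works that this paper is explicitly trying to modularize and debug.

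There are two concrete gaps in your treatment of that simulation. First, the claimed error $2^{-\Omega(n)}$ is wrong. Simulating the swap oracle (or the underlying reflection $R_{\ket{\psi}}$) from $t$ copies of $\ket{\psi}$ incurs diamond-norm error $O(1/\sqrt{t})$ per query---see Proposition~\ref{prop:reflsim}---so a $T$-query adversary sees error $O(T/\sqrt{t})$, which is only an arbitrarily small inverse polynomial, not exponentially small. The intuition that ``branches on which the simulation diverges carry total weight $2^{-\Omega(n)}$'' is not the operative quantity; the dominant error comes from the imperfect reflection test, not from small overlaps with $\ket{0}$. Second, the description ``wherever a query register holds $\ket{0^n}$, swap in a fresh copy'' hides the real difficulty: one must coherently recognize $\ket{0}$ versus $\ket{\psi}$ versus the orthogonal complement on a superposed input without destroying coherence, which requires the Ji--Liu--Song symmetric-subspace reflection test and a careful argument that the residual entanglement with the simulator's pool is small. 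This is exactly where the bugs in~\cite{ARXIV:CheColSat24,EPRINT:BosCheNeh24} arose, so asserting it works ``in the style of compressed-oracle arguments'' without giving the argument leaves the proof incomplete. Your route is not unsound---it is essentially the one the paper credits to~\cite{ITCS:Zhandry24b} and says is fine for the single-stage direction---but to close the gap you would either need to carry out that delicate simulation analysis, or, more cleanly, reproduce the paper's two-step factorization through $\ket{\psi-}$ and Lemma~\ref{lem:binom}, which trades a hard approximate-simulation argument for an exact statistical identity plus one well-contained application of Proposition~\ref{prop:reflsim}.
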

On the other hand, we show that a generic solution to the converse is actually false, in the sense that there is no way to generically construct $\ket{\psi-}$ from even many copies of $\ket{\psi}$. However, the models are close enough that we expect typical separations in the plain CHRS model to also hold in the modified CHRS model. In fact, we show that the separation between commitments and many-time PRSs (or even quantum one-wayness) holds in the $\ket{\psi-}$ model. This reduces to showing that many-time PRSs and quantum one-wayness do not exist in the $\ket{\psi-}$ model, which we show through a very similar proof structure to the $\ket{\psi}$ case analyzed in~\cite{ARXIV:CheColSat24,EPRINT:BosCheNeh24}. Combined with Theorem~\ref{thm:main_inf} this gives an alternative proof of some of the main results of~\cite{ARXIV:CheColSat24,EPRINT:BosCheNeh24,EPRINT:BMMMY24} that is more modular and arguably conceptually simpler. Along the way, we identify significant conceptual errors in both~\cite{ARXIV:CheColSat24} and~\cite{EPRINT:BosCheNeh24}, rendering their unitary oracle separations incorrect. In particular, these bugs left separating commitments and one-wayness open relative to a unitary oracle, which we resolve.\footnote{Separating commitments and PRSs also appeared in~\cite{EPRINT:BMMMY24}, but there is no bug in their proof.}

We therefore argue that future work in this space should focus on proving separations in the $\ket{\psi-}$ model, which still has much of the simplicity of the plain CHRS model, but which is equivalent to a unitary separation through our Theorem~\ref{thm:main_inf}.

We also give conditions under which a converse of Theorem~\ref{thm:main_inf3} holds, meaning that separation in the $\ket{\psi}$ model implies an separation in the $\ket{\psi-}$ model and hence unitary model. In particular, we show the following:
\begin{theorem}[informal]\label{thm:main_inf2} For any primitive $A$ whose security and correctness experiments are defined by ``LOCC games'', $A$ exists in the plain CHRS model if and only if it exists in our $\ket{\psi-}$ model.
\end{theorem}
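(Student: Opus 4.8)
The plan is to prove the two implications separately. The direction ``plain CHRS $\Rightarrow$ $\ket{\psi-}$ model'' is the easy one; the point is that the translation can be done \emph{locally}, so it preserves the LOCC structure of the correctness and security experiments. Concretely, a party holding $k$ copies of $\ket{\psi-}$ can apply to each copy the two-outcome measurement $\{\ket0\bra0,\,I-\ket0\bra0\}$; the second outcome occurs with probability $(1-|\langle 0|\psi\rangle|^2)/\norm{\ket{\psi-}}^2\approx 1/2$ and leaves the register within $O(1/\sqrt d)$ trace distance of $\ket\psi$, so from $O(k)$ copies of $\ket{\psi-}$ one obtains $k$ copies of $\ket\psi$ except with probability exponentially small in $k$. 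Prepending this local preprocessing to every party -- honest parties, challenger, and adversary -- in a plain-CHRS construction of $A$ yields a $\ket{\psi-}$-model construction whose experiments are still LOCC games and whose statistics change by a negligible amount, so $A$ exists in the $\ket{\psi-}$ model. (For single-stage games this is already Theorem~\ref{thm:main_inf3}; the only new content is that the same preprocessing works party-by-party in the multi-party LOCC setting.)

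The substantive direction is the converse. I would begin by noting why the naive attempt fails: there is no channel sending $\ket\psi^{\otimes n}$ to anything close to $\ket{\psi-}$, even with unboundedly many copies, because $\ket\psi\bra\psi^{\otimes n}$ is independent of the global phase of $\ket\psi$ while $\ket{\psi-}\bra{\psi-}=(\ket\psi-\ket0)(\bra\psi-\bra0)$ is not; for $\theta$ near $\pi$ the operators $\ket{\psi-}\bra{\psi-}$ and $\ket{(e^{i\theta}\psi)-}\bra{(e^{i\theta}\psi)-}$ are $\Omega(1)$-far, so the map cannot even be approximated uniformly in $\ket\psi$. The LOCC hypothesis is exactly what lets us sidestep this. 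Since correctness and adversarial advantage are expectations over the Haar draw and every honest and adversarial map is linear, the relevant object is the averaged joint resource $\rho=\E_{\psi}[\ket{\psi-}\bra{\psi-}^{\otimes T}]$ distributed among the parties; expanding the tensor power and averaging kills every term with unequal numbers of $\ket\psi$-kets and $\ket\psi$-bras and turns the survivors into normalized symmetric-subspace projectors, so that (after grouping placements) $\rho$ becomes a \emph{non-negatively weighted} mixture of tensor products of the form ``$m$ copies of the common Haar state on some slots $\otimes$ a Dicke-type state of $\ket\psi$'s and $\ket0$'s on others $\otimes$ $\ket0$'s on the rest.'' The copies-of-Haar-state and $\ket0$ pieces are literally what plain CHRS provides, and the Dicke-type pieces are preparable from $\ket\psi$-copies up to $O(\mathrm{poly}/\sqrt d)$ error since on the (overwhelmingly likely) event $\ket\psi\perp\ket0$ the relevant embedding is an isometry. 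The plan is then to construct the plain-CHRS variant $C'$ so that the parties jointly prepare a sampled branch of this mixture, and to argue that the resulting joint state is indistinguishable \emph{by LOCC} from $\rho$, which transfers both correctness and -- against every LOCC adversary -- security.

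The main obstacle is this LOCC-indistinguishability claim, together with getting the cross-party correlations right. The plain-CHRS-preparable state and $\rho$ are genuinely different as global states (the per-branch combinatorial weights needed to make the adversary's and the honest parties' shares \emph{jointly} match do not factor, and the Dicke- and W-type coherences cannot be prepared coherently without knowing $\ket\psi$); what must be shown is that all of this difference lives in coherences between an unknown-$\ket\psi$ subsystem and known $\ket0$ subsystems, which a protocol restricted to local operations and classical communication -- and ignorant of $\ket\psi$ -- cannot detect, uniformly over all LOCC strategies and through every round of interaction. I expect to handle this in the style of data-hiding arguments: defer all $\psi$-dependent local processing to a common ``decoding'' step applied by the honest parties and the simulator alike, show that before this step the parties' joint view is, to an LOCC observer, exactly the plain-CHRS resource, and bound the error incurred by the deferral and by its interleaving with the classical-communication rounds.
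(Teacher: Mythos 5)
Your easy direction is fine and matches the paper's Construction~\ref{cons:ssmmtossm}/Corollary~\ref{cor:ssmmtossm}: local post-selection of $\ket{\psi-}$ on the non-$\ket{0}$ branch yields $\ket{\psi}$ up to exponentially small error, and because the translation is applied party-by-party it preserves the LOCC structure. (In fact the paper gets this direction for all single-stage games, not just LOCC ones.)

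For the hard direction you have the right high-level skeleton, and it does match the paper's: you correctly observe that $\E_\psi[\ketbra{\psi-}^{\otimes T}]$ decomposes (after Haar-averaging kills the cross-terms) as a binomial mixture of Dicke-type states $\ket{\Rep{T,c,\psi}}$, which is exactly Lemma~\ref{lem:binom}, and that the plain-CHRS parties can each independently prepare such a state. But there is a genuine gap where the real work lives: you never actually prove the LOCC-indistinguishability between (a) the ``global'' shared resource $\rho$ and (b) the product of independently-prepared per-party Dicke states, and ``handle this in the style of data-hiding arguments'' is not a method. This step is not routine. The paper's proof (Section~\ref{sec:locc}) relies on a specific quantitative tool -- the partial-transpose bound on LOCC distinguishing advantage of~\cite{TCC:AnaGulLin24} (Lemma~\ref{lem:loccbound}) -- followed by an explicit computation of $\wt\rho^{\Gamma}-\wt\sigma^{\Gamma}$ in the type-vector basis (Lemma~\ref{lem:keylemma}), a trace-norm bound via block-ones matrices (Lemma~\ref{lem:tracenorm}), and a careful control of the combinatorial coefficients $\gamma_{c,c_1,c_2}$ to keep the bound $O(\poly/\sqrt{2^n})$. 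None of this follows from the phase-invariance observation alone: phase-invariance gives you that the \emph{reduced} per-party states match, but the entire question is whether the cross-party coherences can be detected by adaptive, interactive LOCC, and that requires a quantitative data-hiding bound, which you would still have to supply.

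A second, smaller gap: even granting the two-party LOCC-indistinguishability, you need to extend it to $\ell$ interacting parties and through the composition with the cryptographic game. The paper does this with a measurement-based hybrid (Lemma~\ref{lem:hybtechlemma} and Corollary~\ref{cor:hybindis}) that peels off one party at a time by projectively measuring the number of $\ket{\psi}$-copies in its registers, and a dedicated LOCC composition theorem (Theorem~\ref{thm:locccomp}) that handles the fact that the construction is stateful and each party runs its own fresh instance. Your proposal does not address the multi-party hybrid or the statefulness of the simulator, so even with the two-party bound in hand it would not yet close.
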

Here, LOCC (local operations classical communication) means that the game may involve many instances of the cryptosystem, but that those instances can only interact with each other via classical communication.

As a concrete application, we apply Theorem~\ref{thm:main_inf2} to the separation between quantum commitments and key agreement with classical communication, lifting the CHRS model separation from~\cite{TCC:AnaGulLin24} to the $\ket{\psi-}$ model and hence to a unitary model. This separation is new to our work.

\subsection{Motivation}

\paragraph{Black-box separations.} A fundamental goal in cryptography is to understand the relationship between various concepts. A positive result uses a primitive $A$ to build a primitive $B$. Meanwhile, a negative result, also called a separation, aims to shows that it is impossible to build $B$ from $A$. Stating negative results, however, require care: we typically believe both $A$ and $B$ exist, so a trivial way to ``build $B$ from $A$'' is to simply ignore $A$ and use the assumed solution to $B$ instead. Starting with the work of Impagliazzo and Rudich~\cite{STOC:ImpRud89}, the now-standard way to argue separations to provide a black-box oracle relative to which $A$ exists and $B$ does does not. Such oracle separations lose some generality, but they rule out all black-box constructions of $B$ from $A$, which constitute the vast majority of cryptographic constructions. Such separations are therefore called ``black-box separations.''

Often, a complete black-box separation is unknown. In this case, a common tactic is to further restrict the construction of $B$ in some way -- maybe it can only query the oracle at certain times, or it can only use the primitive $A$ in certain ways. There are numerous examples where such restrictions are made (e.g.~\cite{FOCS:GerMalRei01,TCC:GetMalMye07,EC:RotSegSha20,C:RotSeg20}). Such separations only rule out a restricted class of constructions, and are therefore considered much weaker evidence of an actually impossibility of building $B$ from $A$. For this reason, it is important to ensure that the restrictions are natural and reflect ``real world'' use cases as much as possible, lest the restricted separation be meaningless. Even though restricted separations are weaker, they may still be a starting point for a full separation, and can still be very useful for guiding protocol design. 

\paragraph{Quantum black-box separations.} These same questions remain fundamentally important as cryptography transitions to a quantum world. One of the promises of quantum cryptography is that it may be possible to overcome classical impossibilities. A notable example is multiparty computation (MPC): classical black-box separations indicate that MPC requires public key tools~\cite{STOC:ImpRud89,TCC:MahMajPra14}, whereas quantum (but black-box!) protocols give MPC from one-way functions~\cite{EC:GLSV21,C:BCKM21b}, the most basic classical symmetric key primitive.

In fact, quantum cryptography opens up an entire world of cryptographic protocols utilizing tools that are ``below'' one-way functions, often referred to as MicroCrypt. That is, it is now widely believed that there are (inherently quantum) primitives which do \emph{not} imply even the lowly one-way function, as evidenced by quantum black-box separations~\cite{TQC:Kretschmer21,STOC:KQST23}. For example, while one-way functions are unlikely to imply MPC classically, the \emph{opposite} is now believed quantumly: one-way functions not only imply quantum MPC, but quantum MPC is unlikely to imply one-way functions! Numerous other primitives are believed to exist without one-way functions, such as pseudorandom states and unitaries~\cite{C:JiLiuSon18}, one-way state generators~\cite{TQC:MorYam24}, and more. 

A central question is then to develop a new understanding of the relationship between quantum protocols, especially those existing below one-way functions. Recent works have begun to address this question, but as the bugs in~\cite{ARXIV:CheColSat24,EPRINT:BosCheNeh24} show, the process is challenging and error prone. We believe our results will be useful in this endeavor, as they provide general lifting results to facilitate moving from much easier CHRS (and similar) models into full unitary separations.

\subsection{The Prior Results}

Here, we discuss the unitary separations of~\cite{ARXIV:CheColSat24,EPRINT:BosCheNeh24,EPRINT:BMMMY24}. These works accomplish somewhat different results, but we focus on the common thread showing that commitments do not imply many-time PRSs in the unitary model. We note that these works do not necessary present their ideas as we do here, but our attempt here is to unify the underlying ideas into a common framework.

All of the results start from the existence of commitments and non-existence of many-time PRSs in the CHRS model as proved in~\cite{ARXIV:CheColSat24}, then translate those statements to the existence of commitments and non-existence of PRSs relative to a unitary oracle. The works all use some version of the unitary oracle $S_{\ket{\psi}}$ which exchanges the states $\ket{0}$ and $\ket{\psi}$, leaving all other states unaffected.\footnote{\cite{EPRINT:BMMMY24} use this oracle, but for a distribution over $\ket{\psi}$ that is not Haar-random. This distinction will not be important for this discussion.} We will call $S_{\ket{\psi}}$ the ``SWAP oracle'' and this model the ``SWAP model.''

The first step in these works is to notice that $S_{\ket{\psi}}$ readily allows for constructing the state $\ket{\psi}$. By simulating the states used by the CHRS commitments in this way, they obtain a SWAP-model commitment that is \emph{correct}. Arguing security works by contradiction: assume there is an adversary $A$ for the derived SWAP-model commitment, and compile it into an adversary $B$ for the original CHRS commitment. This is accomplished by simulating $S_{\ket{\psi}}$ (the SWAP oracle) given copies of $\ket{\psi}$ (the CHRS model). This in turn relies on two key ideas:
\begin{itemize}
    \item {\bf Simulating reflections about a state.} Consider the oracle $R_{\ket{\psi}}=I - 2\ketbra{\psi}$, which reflects around $\ket{\psi}$. This oracle allows for determining if an input state is equalto $\ket{\psi}$. In~\cite{C:JiLiuSon18}, is it is shown how to approximately simulate $R_{\ket{\psi}}$ using just several copies of $\ket{\psi}$. The error can be made an arbitrarily-small inverse polynomial by using an appropriate number of copies.
    \item {\bf Swapping states.} $S_{\ket{\psi}}$ can be seen as providing $\ket{\psi}$ when given the input state $\ket{0}$, and taking back the state when given $\ket{\psi}$. If we have a pool of states $\ket{\psi}$, we can simply give out and take back states from our pool as needed. Note that this requires some care, as queries to $S_{\ket{\psi}}$ can be in superpositions. Moreover, we need a way to tell if the input state is actually $\ket{\psi}$; fortunately, this is accomplished with $R_{\ket{\psi}}$, which is simulated as above.
\end{itemize}

This strategy for simulating $S_{\ket{\psi}}$ first appeared in~\cite{ITCS:Zhandry24b} in an entirely different context. There, it was proved that for a Haar-random state $\ket{\psi}$, the simulation is indistinguishable from $S_{\ket{\psi}}$ to arbitrarily-small inverse polynomial error. This result was used as a black-box by~\cite{EPRINT:BosCheNeh24}, and very similar variants of it were developed in~\cite{ARXIV:CheColSat24,EPRINT:BMMMY24}. Through this simulation, the adversary $A$ is successfully compiled into an adversary $B$ in the CHRS model, reaching a contradiction. Thus commitments exist in the SWAP model.

The next step is to argue the non-existence of PRSs in the SWAP model. This can be seen as a dual to the case above, with significant caveats. Namely, we now assume toward contradiction that there is a protocol $C$ for PRSs in the SWAP model and derive a PRS $D$ in the CHRS model, reaching a contradiction. To prove that $D$ is a PRS we need in particular to prove that it is secure. For this, we suppose toward contradiction that there is an adversary for $D$, and lift it into an adversary for $C$, which cannot exist by assumption. At first glance, this appears identical to the commitment case above, except we have exchanged the roles of adversary and protocol.

\paragraph{A Problem.} Following this intuition, we may hope to use such simulation to prove the impossibility of PRSs in the SWAP model. Let $D$ be $C$ but where queries to $S_{\ket{\psi}}$ are simulated using copies of $\ket{\psi}$. Then for any hypothetical adversary for $D$ (which is in the CHRS model and therefore gets copies of $\ket{\psi}$) we turn it into an adversary for $C$ in the SWAP model by using $S_{\ket{\psi}}$ to generate copies of $\ket{\psi}$.

However, there is a major issue with this claim: the simulation of $S_{\ket{\psi}}$ uses state -- namely, the several copies of $\ket{\psi}$ provided -- and this state becomes \emph{entangled} with the ultimate output. This is because, for example, a query to $S_{\ket{\psi}}$ on $\ket{0}+\ket{1}$ gives $\ket{\psi}+\ket{1}$. Using the simulation, the output is in superposition of having swapped out 0 or 1 copies of $\ket{\psi}$. The simulation guarantee is only that the view of any adversary given query access to $S_{\ket{\psi}}$ is indistinguishable from the simulated oracle, not that the two cases are literally the same. 

Fortunately, this is not a problem for showing the security of commitments in the SWAP model. The entire security experiment acts as a distinguisher for the simulation, and the simulation guarantees that the output of the experiment has approximately the same distribution.

The problem instead comes in showing that PRSs do not exist in the SWAP model. The PRS state is now constructed via the simulated oracle $S_{\ket{\psi}}$; the simulator for this oracle now becomes entangled with the final PRS state, meaning the state is not pure. This breaks even the correctness of PRSs, as PRSs inherently need to have pure-state outputs to be non-trivial. But it gets even worse, as the adversary sees many copies of the PRS state. An adversary can always test for purity by applying the swap test to its several copies. In a true PRS the swap test will accept, but now in our simulated PRS the swap test will fail. Thus, the derived CHRS-model PRS will not be correct \emph{nor} secure.

\begin{remark}One may wonder where the simulation indistinguishability guarantee fails. First, the guarantee only holds relative to efficient distinguishers, whereas the purity requirement of a PRS is statistical. This issue can be fixed by describing purity as a game, performing the swap test on two copies as in our attack above. But here there is a more subtle issue: the state of each copy is simulated using a separate instance of the simulator, but the indistinguishability of the simulation only works when there is a single simulator for all queries. There is even a simple distinguisher that can distinguish the case where there are two independent simulators. Thus, even though the purity requirement is turned into an efficient game, the simulation guarantee does not apply. 
\end{remark}

\noindent The three prior works address this problem in different ways:
\begin{itemize}
    \item \cite{ARXIV:CheColSat24} restricts to primitives that have no correctness requirements. The authors then claim that quantum pseudorandomness and EFI pairs (which are equivalent to commitments) are examples of such correctness-less primitives. However, this is simply not true -- PRSs have the purity requirement, and EFI pairs require statistical far-ness. This gap means their proof is actually currently incorrect. The problem is even worse since, as we observed above, the simulation issue actually affects the security as well.  The issue has been confirmed by the authors, who are currently working on a fix.
    \item In \cite{EPRINT:BosCheNeh24}, rather than trying to turn a SWAP-model PRS into a CHRS-model PRS, they give a direct attack on any SWAP-model PRS. Their attack, however, still starts with a generic CHRS-model attack and lifts it to the SWAP-model through simulation. This avoids the issue of~\cite{ARXIV:CheColSat24} since they do not need to prove that a simulated PRS is correct. However, their proof actually still contains a major but subtle bug due to incorrect usage of the simulator's guarantees. The issue has been confirmed by the authors, who are retracting their claims of a unitary-oracle separation.
    \item \cite{EPRINT:BMMMY24} likewise employ a direct attack on PRSs in the SWAP model, which requires reproducing the CHRS impossibility with modifications to work in the SWAP model. However, their approach is quite different from~\cite{EPRINT:BosCheNeh24}, and appears correct.
\end{itemize}
We note that the bugs in~\cite{ARXIV:CheColSat24} and~\cite{EPRINT:BosCheNeh24} leave the question of separating commitments from quantum one-wayness relative to a unitary oracle open.

\subsection{Our Results}

\paragraph{A general lifting theorem via indifferentiability.} Our aim is to abstract the techniques of~\cite{ARXIV:CheColSat24,EPRINT:BosCheNeh24,EPRINT:BMMMY24} into a general theorem that lifts separations from the CHRS model to separations in a unitary model. We do so via the notion of indifferentiability~\cite{TCC:MauRenHol04}. Very roughly,~\cite{TCC:MauRenHol04} considers attempting to build one oracle $A$ from another $B$ via a construction $C^B$. The central observation isthat it is not enough to show the \emph{indistinguishability} of $C^B$ from $A$. This is because the adversary actually has access to the underlying $B$ oracle directly, and can use this access to potentially break the construction. \emph{Indifferentiability} resolves this issue, by giving the adversary access to both $C^B$ and $B$ itself. Concretely, indifferentiability, considers the following two worlds:
\begin{itemize}
    \item {\bf The Real World:} The adversary can query $C^B$ and $B$
    \item {\bf The Ideal World:} There is a simulator $S^A$ that gets queries to $A$, and tries to simulate a $B$ that is consistent with $A$. Here, the adversary's oracles $(C^B,B)$ are replaced by $(A,S^A)$.
\end{itemize}
Indifferentiability asks that, for any possible distinguisher, there exists a simulator $S$ that makes the Real and Ideal worlds indistinguishable.~\cite{TCC:MauRenHol04} moreover demonstrate a composition theorem, showing that indifferentiability implies that any primitive whose security is described by a ``single-stage game'' which exists relative to $A$ also exists relative to $B$. Setting $A=C^B$ gives the construction, and the simulator $S$ is used to prove security by translating an adversary relative to $B$ into an adversary relative to $A$. Single-stage means that the security experiment involves one adversary interacting with a challenger, as opposed to ``multi-stage'' games which have multiple isolated adversaries. Fortunately, almost all cryptographic games are single-stage. A stronger notion of \emph{reset} indifferentiability was proposed in~\cite{EC:RisShaShr11}, which allows the composition theorem to apply even for multi-stage games. While these composition theorem were first demonstrated classically, they extend to quantum protocols/adversaries as well.

\medskip

Using indifferentiability, we show that the SWAP model $S_{\ket{\psi}}$ is actually \emph{equivalent} to a slightly modified CHRS variant where all parties are given access to $\ket{\psi-}:=\ket{\psi}-|0\rangle$. That is, any construction or adversary can be translated between the two models. Note that we actually do not need $\ket{\psi}$ to be Haar random, and we can use arbitrary states such as the subset states of~\cite{EPRINT:BMMMY24}.

The proof of this builds on the simulation idea using techniques from~\cite{EPRINT:BMMMY24}. The proof of~\cite{EPRINT:BMMMY24} shows that the $\ket{\psi-}$ states can actually be used to \emph{statelessly} simulate $S$. The idea is to change our view of $S_{\ket{\psi}}$: swapping between $\ket{0}$ and $\ket{\psi}$ is equivalent to just negating the phase on $\ket{\psi-}$. Given several copies of $\ket{\psi-}$, we can simulate this phase change through a swap test, without any need to explicitly swap copies of $\ket{\psi}$ in and out of our pool. Note that while this simulator technically still has state -- namely the pool of $\ket{\psi-}$ states -- up to inverse-polynomial error it actually will not become entangled with the algorithm querying $S_{\ket{\psi}}$. In particular, it is fine for different algorithms to use different copies of $\ket{\psi-}$ to simulate, unlike the simulation using $\ket{\psi}$.

Abstracting this idea, we show that the simulation is fully general. Using the trivial observation that $S_{\ket{\psi}}$ can simulate $\ket{\psi-}$ by querying on $\ket{-}$, we therefore establish the full equivalence of the $S_{\ket{\psi}}$ and $\ket{\psi-}$ models. Since our simulators are stateless, we acheive even the stronger notion of reset-indifferentiability, meaning our results apply to even multi-stage games. This gives Theorem~\ref{thm:main_inf}.

\begin{remark}One caveat is that simulation still incurs inverse polynomial error. This is fine for the \emph{adversary}, since we can make the error sufficiently small (by using more copies of $\ket{\psi-}$) so that the error is less than the adversary's advantage. But for simulating the \emph{construction}, we have to use a bounded polynomial number of copies, resulting in an arbitrarily-small but non-negligible correctness error. This further impacts security, as each call to the cryptographic primitive in the security experiment results in a correctness error, which could turn into adversarial advantage. Fortunately, the prior impossibilities are sufficiently robust that they straightforwardly hold even with these inverse-polynomial errors.\footnote{A more subtle caveat is that for many-time primitives such as many-time PRSs or quantum one-wayness, the primitive is called an unbounded polynomial number of times in the security experiment, while the error in each invocation in a bounded inverse polynomial. Thus, by setting the number of queries of the experiment high enough, the errors from each query will compound, giving an overall error that could approach 1. However, this will not typically be a problem, as the impossibilities we deal with give attacks that use a concrete polynomial number of queries, thus ruling out even bounded versions of the primitives in question.}\end{remark}

\paragraph{Separations in the $\ket{\psi-}$ model.} While $\ket{\psi-}$ appears very close to $\ket{\psi}$, we technically arrive at a different common-state model. This means we cannot use the separations in the CHRS model as a black box without inspecting the proof. Nevertheless, we show that the impossibilities of PRSs (and even one-way state generators) from~\cite{ARXIV:CheColSat24,EPRINT:BosCheNeh24,EPRINT:BMMMY24} readily translate to the $\ket{\psi-}$ model with minimal modifications. Thus, we separate commitments from many-time PRSs and even one-way state generatorss in a unitary model, in a way that is much simpler and more modular than the prior work on unitary oracle separations~\cite{EPRINT:BMMMY24} and also circumvents the bugs in~\cite{ARXIV:CheColSat24,EPRINT:BosCheNeh24}. 

\paragraph{Lifting $\ket{\psi}$ to $\ket{\psi-}$.} It is natural to wonder if the $\ket{\psi}$ and $\ket{\psi-}$ models are equivalent as well. Combined with our Theorem~\ref{thm:main_inf}, this would allow for immediately translating $\ket{\psi}$ separations into unitary separations without having to inspect the underlying $\ket{\psi}$ separations. Unfortunately, we explain that a stronger version of this hope is not possible: it is impossible to simulate $\ket{\psi-}$ from $\ket{\psi}$, where both use the \emph{same} Haar random state $\ket{\psi}$. Intuitively, this makes sense: Haar-random states cannot be copied nor deleted, given just a polynomial-number of copies. Therefore, given any polynomial-number of copies of $\ket{\psi}$, one will always have an integer number of them. On the other hand $\ket{\psi-}$ is essentially half of a $\ket{\psi}$ state, which is not an integer.

The good news is that we give a general condition under which it \emph{is} possible to have $\ket{\psi}$ and $\ket{\psi-}$ be equivalent. For starters, any construction in the $\ket{\psi}$ model can be lifted to the $\ket{\psi-}$ model, assuming the security experiment is single-stage. Obtaining a correct construction requires obtaining copies of $\ket{\psi}$ from copies of $\ket{\psi-}$, which can be accomplished through post-selecting $\ket{\psi-}$ on the state not being 0. In order for the derived construction to be \emph{secure}, we then need to simulate an adversary relative to $\ket{\psi-}$ using only copies of $\ket{\psi}$. Here, since we assume the security experiment is single-stage, the stateful simulation from above actually suffices. This gives Theorem~\ref{thm:main_inf3}. This gives half of the goal of lifting a separation to the unitary model, namely the part showing that primitives (e.g. commitments) exist in the unitary model.

Then we consider restrictions that yield give a converse to Theorem~\ref{thm:main_inf3}, which gives the other half of a separation by showing that other primitives do not exist in the unitary model. In particular, we consider ``LOCC'' games as games where there may be several invocations of the underlying cryptosystems, but where the different invocations cannot talk to each other except via classical communication. We show, as long as we restrict to primitives where security \emph{and} correctness are defined by LOCC games, that $\ket{\psi}$ and $\ket{\psi-}$ are in fact equivalent. This equivalence uses the stateful simulator for $S_{\ket{\psi}}$ to give $\ket{\psi-}$, but we give a much stronger simulation guarantee than that of prior works: you actually \emph{can} have multiple instances of the simulator, as long as they cannot talk to each other except for LOCC communication. This gives Theorem~\ref{thm:main_inf2}.

We apply this to the separation of~\cite{TCC:AnaGulLin24}, who show that key agreement with classical communication does not exist in the CHRS model, thereby separating it from commitments. The security and correctness games for such key agreement are LOCC, and thus our result shows we can lift it into a separation in the $\ket{\psi-}$ model, and hence via Theorem~\ref{thm:main_inf} into a separation in a unitary model.

\section{Preliminaries}
\subsection{Cryptographic Preliminaries}

\newcommand{\Prim}{Prim}

\begin{definition}
    An idealized model $\mathcal{P} = \{\mathcal{P}_n\}_{n \in \N}$ is a family of distributions over isometries. A uniform/non-uniform oracle algorithm $\A^{\mathcal{P}}$ is a QPT algorithm with access to oracle gates performing some potentially stateful process $\cal{P}$.
\end{definition}

\begin{remark}
    Throughout this paper, we will consider all adversaries to have access to the idealized model only at the index corresponding to the security parameter. This is common in the indifferentiability setting~\cite{TCC:MauRenHol04}, and is .
    
    One could consider a more general model where adversaries can query the idealized model at any security parameter, as is considered in some of the prior work. It is not difficult to port all of our results over to this setting. However, since our constructions are only secure at sufficiently high security parameters, this would require applying a different "brute-force" construction for low security parameters. As these technical details are relatively involved, albeit not very informative, we omit such a detailed discussion in this work.
\end{remark}

\begin{definition}
    A cryptographic primitive $\Prim = (\Prim_1,\dots,\Prim_t)$ is a tuple of potentially stateful uniform QPT algorithms.

    A cryptographic adversary $\A = (\A_1,\dots,\A_\ell)$ is a tuple of potentially stateful \textit{non-uniform} QPT algorithms.

    If $\mathcal{P}$ is an idealized model, a cryptographic primitive $\Prim^{\mathcal{P}} = (\Prim_1^{\mathcal{P}}, \dots, \Prim_t^{\mathcal{P}})$ relative to $\mathcal{P}$ is a tuple of potentially stateful uniform QPT oracle algorithms. Like-wise, a cryptographic adversary $\A^{\mathcal{P}}=(\A_1^{\mathcal{P}},\dots,\A_\ell^{\mathcal{P}})$ is a tuple of potentially stateful non-uniform QPT oracle algorithms.
\end{definition}

\begin{definition}
    A cryptographic game $G$ consists of the following
    \begin{enumerate}
        \item The syntax of a primitive $\Prim$. This consists of the number of algorithms $\ell_{\Prim}$, the input and output space of each algorithm, and whether each procedure is stateful or not.
        \item A syntax of an adversary $\A$. This consists of the number of algorithms $\ell_{\A}$, the input and output space of each algorithm, and whether each procedure is stateful or not.
        \item A QPT oracle procedure which, on input security parameter $1^n$, makes queries to a primitive $\Prim(1^n, \cdot)$ and an adversary $\A(1^n,\cdot)$ satisfying the above syntax. At the end, the procedure should produce a bit $b$.
    \end{enumerate}
    We say a primitive $\Prim$ (like-wise adversary $\A$) is compatible if $\Prim = (\Prim_1,\dots,\Prim_{\ell_{\Prim}})$, the input and output spaces of each $\Prim_i$ match those described by $G$, and whether or not each $\Prim_i$ is stateful matches with the syntax described by $G$.
    
    We will denote the interaction between the game $G$ and a compatible primitive $\Prim$ and adversary $\A$ as $G(1^n,\Prim,\A)$. 

    If the syntax of $G$ specifies that $\A$ consists of a single, stateful procedure, then we say that $G$ is single-stage.
    
    We say that $\Prim$ is $(t,\epsilon;c)$-secure under $G$ if for all quantum adversaries $\A$ running in time $t$, for all sufficiently large $n$,
    $$\Pr[G(1^n,\Prim,\A) \to 1] - c \leq \epsilon(n)$$

    If $\Prim^{\mathcal{P}}$ is a cryptographic primitive relative to $\mathcal{P}$, then we say that $\Prim^{\mathcal{P}}$ is $(t,\epsilon;c)$-secure under $G$ (relative to $\mathcal{P}$) if for all oracle adveraries $\A^{\mathcal{P}}$ making at most $t$ oracle queries, for all sufficiently large $n$,
    $$\Pr[G(1^n,\Prim^{\mathcal{P}},\A^{\mathcal{P}_n}) \to 1] - c \leq \epsilon(n)$$

    We say that $\Prim$ is $(\epsilon;c)$-secure under $G$ if it is $(t,\epsilon;c)$-secure for all $t \leq \poly(n)$.

    We say that $\Prim$ is $c$-secure under $G$ if it is $(n^{-d};c)$-secure for all $d\in \N$.

    We will omit $c$ whenever $c=0$.
\end{definition}

\begin{remark}
    Note that here we allow adversaries in an idealized model to be inefficient, as long as they are query bounded. It would also be reasonable to consider time and query bounded adversaries. In this setting most separations can also be achieved by providing access to a sufficiently strong oracle independent of the idealized model (for example, $\mathbf{UnitaryPSPACE}$ in~\cite{EPRINT:BosCheNeh24}).
\end{remark}
\subsection{Local Operations Classical Communication}

\begin{definition}
    Let $\mathcal{O} = (\mathcal{O}_1,\dots, \mathcal{O}_t)$ be a tuple of oracles. An oracle LOCC algorithm $\mathcal{A}^\mathcal{O} = (\mathcal{A}_1^{\mathcal{O}_1}, \dots, \mathcal{A}_t^{\mathcal{O}_t})$ is a tuple of quantum interactive oracle algorithms taking in classical inputs and outputs. We write $\mathcal{A}^{\mathcal{O}} \to z$ to denote the process
    $$\mathcal{A}_1^{\mathcal{O}_1} \leftrightarrows \dots \leftrightarrows \mathcal{A}_t^{\mathcal{O}_t}$$
    followed by returning $\mathcal{A}_1^{\mathcal{O}_1}$'s output $z$.

    We say that a $\A^\mathcal{O}$ makes at most $T$ queries to its oracle if the total number of queries made by $\mathcal{A}_1^{\mathcal{O}_1}, \dots, \mathcal{A}_t^{\mathcal{O}_t}$ to the oracles $\mathcal{O}_1,\dots,\mathcal{O}_t$ is at most $T$. We will also work with LOCC algorithms over pairs of oracles $\mathcal{O}^1, \mathcal{O}^2 = ((\mathcal{O}_1^1,\mathcal{O}_1^2),\dots, (\mathcal{O}_t^1,\mathcal{O}_t^2))$, in which case we say that an algorithm $\A^{\mathcal{O}^1,\mathcal{O}^2}$ makes at most $T_1$ queries to its first oracle if the total number of queries made by $\mathcal{A}_1^{\mathcal{O}_1^1,\mathcal{O}_1^2}, \dots, \mathcal{A}_t^{\mathcal{O}_t^1,\mathcal{O}_t^2}$ to the oracles $\mathcal{O}_1^1,\dots,\mathcal{O}_t^1$ is at most $T$. The statement $\A^{\mathcal{O}^1,\mathcal{O}^2}$ makes at most $T_2$ queries to its second oracle is defined analogously.
\end{definition}

\begin{definition}
    We say that a pair of tuples of oracles $\mathcal{O} = (\mathcal{O}_1,\dots, \mathcal{O}_t)$ and $\mathcal{O}' = (\mathcal{O}_1',\dots, \mathcal{O}_t')$ are $(T,\epsilon)$-LOCC indistinguishable if for all oracle LOCC algorithms $\mathcal{A}^{\cdot}$ making at most $T$ oracle queries,
    $$\abs{\Pr[\mathcal{A}^{\mathcal{O}} \to 1] - \Pr[\mathcal{A}^{\mathcal{O}'} \to 1]} \leq \epsilon$$

    We say that a pair of states $\rho_{A_1,\dots,A_t},\sigma_{A_1,\dots,A_t}$ are $(T,\epsilon)$-LOCC indistinguishable if the oracles $(\O^\rho_1,\dots,\O^\rho_t),(\O^\sigma_1,\dots,\O^\sigma_t)$ are $(T,\epsilon)$-LOCC indistinguishable where $\O^{\rho}_i$ returns register $A_i$ of $\rho$ and $\O^{\sigma}_i$ returns register $A_i$ of $\sigma$.

    We say that a pair of oracles/states are $\epsilon$-LOCC indistinguishable if they are $(\infty,\epsilon)$-LOCC indistinguishable.
\end{definition}

\begin{definition}[LOCC cryptographic games]
    We say that a cryptographic game $G$ is LOCC if it can be represented by a LOCC computation. In particular, we say that $G$ is LOCC if it is represented by the following computation
    \begin{enumerate}
        \item There exist some $G_1^{(\cdot)},\dots,G_{\ell}^{(\cdot)}$ such that $G(\Prim,\A,1^n)$ is represented by the computation
        $$G_1^{\Prim_1} \leftrightarrows \dots \leftrightarrows G_{\ell_{\Prim}}^{\Prim_{\ell_{\Prim}}}\leftrightarrows G_{\ell_{\Prim} + 1}^{\A_1} \leftrightarrows \dots \leftrightarrows G_{\ell_{\Prim} + \ell_{\A}}^{\A_{\ell_\A}}$$
        followed by repeating the output of $G_1$. Here, all messages sent between parties $G_i,G_j$ must be classical.
        \item We further require that for all stateless $\Prim_i$, $G_i^{\Prim_i}$ makes at most one query to its oracle. 
        \item Similarly, for all stateless $\A_i$, $G_{\ell_{\Prim}+i}^{\A_i}$ makes at most one query to its oracle.
    \end{enumerate}
\end{definition}
\subsection{Indifferentiability}

In this section, we define indifferentiability and relevant variants. We also state useful composition theorems which apply to indifferentiable constructions. Formal proofs of these composition theorems are provided for completeness in~\Cref{sec:comp}.

\begin{definition}[Standard indifferentiability]
    Let $\mathcal{P},\mathcal{Q}$ be two idealized primitives. We say that a construction $C^{\mathcal{P}}$ is $(T_{\Sim},T_1,T_2,\delta)$-indifferentiable from $\mathcal{Q}$ if for all adversaries $\A^{\O_1,\O_2}$ making at most $T_1(n)$ queries to $\O_1$ and $T_2(n)$ queries to $\O_2$, there exists a potentially stateful simulator $\Sim^{\mathcal{Q}}_n$ making at most $T_{\Sim}(n)$ queries to its oracle such that for all sufficiently large $n$,
    $$\abs{\Pr[\A^{\mathcal{P}_n,C^{\mathcal{P}_n}}(1^n) \to 1] - \Pr[\A^{\Sim^{\mathcal{Q}_n},\mathcal{Q}_n}(1^n) \to 1]} \leq \delta(n)$$

    If in addition the simulator is stateless, then we say that $C^{\cal{P}}$ is \textit{reset indifferentiable} from $\cal{Q}$.
\end{definition}

\begin{theorem}[Bounded query composition theorem]\label{thm:indiffcomp}
Let $\mathcal{P},\mathcal{Q}$ be two idealized primitives, and let $C^{\mathcal{P}}$ be a construction $(T_{\Sim},T_1,T_2,\delta)$-indifferentiable from $\mathcal{Q}$. Let $G$ be any single-stage cryptographic game making at most $T_{G,1}$ queries to its primitive and $T_{G,2}$ queries to its adversary, and let $\Prim^{\mathcal{Q}}$ be any primitive relative to $\mathcal{Q}$ making at most $T_{\Prim}$ queries to its oracle. Let $\epsilon,T_{\A}:\N\to [0,1]$ be any functions and let $c$ be any constant.

As long as $T_{G,1}\cdot T_{\Prim} \leq T_2$ and $T_{G,2}\cdot T_{\A} \leq T_1$, then if $\Prim^{\cal{Q}}$ is $(T_{\A}\cdot T_{\Sim},\epsilon;c)$-secure under $G$ relative to $\cal{Q}$, then $\Prim^{C^{\cal{P}}}$ is $\left(T_{\A}, \epsilon + \delta;c\right)$-secure under $G$ relative to $\cal{P}$.
\end{theorem}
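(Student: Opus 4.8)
The plan is to prove the bounded-query composition theorem by a standard hybrid argument that carefully tracks query budgets across the two-level structure (game queries the primitive, primitive queries the oracle; game queries the adversary, adversary queries the oracle). Suppose toward contradiction that $\Prim^{C^{\cal P}}$ is not $(T_{\A},\epsilon+\delta;c)$-secure under $G$ relative to $\cal P$. Then there is an adversary $\A^{\cal P}$ making at most $T_{\A}$ oracle queries and infinitely many $n$ with $\Pr[G(1^n,\Prim^{C^{\cal P_n}},\A^{\cal P_n})\to 1] - c > \epsilon(n)+\delta(n)$. The goal is to build from $\A$ an adversary $\B^{\cal Q}$ against $\Prim^{\cal Q}$ relative to $\cal Q$ that makes at most $T_{\A}\cdot T_{\Sim}$ queries and has advantage exceeding $\epsilon(n)$, contradicting the assumed security of $\Prim^{\cal Q}$.

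First I would set up the indifferentiability distinguisher. Consider the composed experiment $G(1^n,\Prim^{(\cdot)},\A^{(\cdot)})$ as a single algorithm $\D$ that expects access to a pair of oracles $(\O_1,\O_2)$: it runs $G$, routing all of $\Prim$'s oracle calls to $\O_2$ and all of $\A$'s oracle calls to $\O_1$, and outputs the bit $G$ produces. Since $G$ makes at most $T_{G,1}$ queries to its primitive and each invocation of $\Prim$ makes at most $T_{\Prim}$ oracle queries, $\D$ makes at most $T_{G,1}\cdot T_{\Prim}\le T_2$ queries to $\O_2$; similarly $\D$ makes at most $T_{G,2}\cdot T_{\A}\le T_1$ queries to $\O_1$. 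Hence $\D$ is a legal indifferentiability distinguisher, and by the $(T_{\Sim},T_1,T_2,\delta)$-indifferentiability of $C^{\cal P}$ there is a stateful simulator $\Sim^{\cal Q}_n$ making at most $T_{\Sim}$ queries per invocation such that
$$\abs{\Pr[\D^{\cal P_n, C^{\cal P_n}}(1^n)\to 1] - \Pr[\D^{\Sim^{\cal Q_n},\cal Q_n}(1^n)\to 1]} \le \delta(n).$$
The left probability is exactly $\Pr[G(1^n,\Prim^{C^{\cal P_n}},\A^{\cal P_n})\to 1]$. In the right probability, $\Prim$'s oracle calls go directly to $\cal Q_n$ — so the primitive behaves as $\Prim^{\cal Q_n}$ — while $\A$'s oracle calls are answered by $\Sim^{\cal Q_n}$, which itself queries $\cal Q_n$.

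This identifies the new adversary: define $\B^{\cal Q}$ to be $\A$ composed with the simulator, i.e. $\B$ runs $\A$ and answers each of $\A$'s oracle queries by running $\Sim$, forwarding $\Sim$'s queries to the real $\cal Q$ oracle. Then $\Pr[G(1^n,\Prim^{\cal Q_n},\B^{\cal Q_n})\to 1] = \Pr[\D^{\Sim^{\cal Q_n},\cal Q_n}(1^n)\to 1]$, and $\B$ makes at most $T_{\A}\cdot T_{\Sim}$ queries to $\cal Q$. Combining, $\Pr[G(1^n,\Prim^{\cal Q_n},\B^{\cal Q_n})\to 1] - c > \epsilon(n)+\delta(n) - \delta(n) = \epsilon(n)$ for infinitely many $n$, contradicting $(T_{\A}\cdot T_{\Sim},\epsilon;c)$-security of $\Prim^{\cal Q}$ under $G$. (If one prefers a direct rather than contrapositive statement, the same manipulation runs forward from the security bound on $\Prim^{\cal Q}$.)

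The main subtlety — the step I would be most careful about — is the bookkeeping that makes $\D$ a valid distinguisher, and in particular that the simulator is chosen \emph{after} $\D$ is fixed yet is required to work for \emph{all} of $\A$'s adaptive, possibly-in-superposition queries within the single run of $G$. This is exactly why a single stateful $\Sim$ (shared across all of $\A$'s queries in the one execution) is needed and why the game must be single-stage: if $G$ had multiple isolated adversaries each needing its own simulator instance, the indifferentiability guarantee would not apply. A second point needing care is that $\B$ inherits $\A$'s (non-uniform, possibly query-but-not-time-bounded) status without trouble since composition with $\Sim$ only adds $T_{\Sim}$ queries per $\A$-query and the resulting object still satisfies the adversary syntax of $G$; one should check that $\Sim$'s statefulness does not conflict with $\A$ being run as a subroutine, which holds because $\B$ simply maintains $\Sim$'s register internally. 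Everything else is the routine hybrid accounting sketched above.
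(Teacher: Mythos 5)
Your proposal is correct and follows essentially the same route as the paper's proof: treat the composed game $G(1^n,\Prim^{(\cdot)},\A^{(\cdot)})$ as a single indifferentiability distinguisher with the stated query budgets, invoke indifferentiability to replace $(\mathcal{P},C^{\mathcal{P}})$ by $(\Sim^{\mathcal{Q}},\mathcal{Q})$, and observe that $\A^{\Sim^{\mathcal{Q}}}$ is itself a valid $\mathcal{Q}$-adversary $\B$ making at most $T_{\A}\cdot T_{\Sim}$ queries. The only cosmetic difference is that you argue by contradiction while the paper chains the two inequalities directly; you already note these are interchangeable.
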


\begin{corollary}[General composition theorem]\label{cor:compcor}
    Let $\mathcal{P},\mathcal{Q}$ be two idealized primitives, let $C^{\mathcal{P}}$ be a construction of $\mathcal{Q}$ from $\mathcal{P}$, and let $\delta:\N\to [0,1]$. If, for all $p = \poly(n)$, there exists some $q = \poly(n)$ such that $C^{\mathcal{P}}$ is $(q,p,p,\delta)$-indifferentiable from $\mathcal{Q}$, then the following holds:

    For all primitives $\Prim^{\mathcal{Q}}$ and single-stage cryptographic games $G$, if $\Prim^{\mathcal{Q}}$ is $(\epsilon;c)$-secure under $G$ relative to $\mathcal{Q}$, then $\Prim^{C^{\mathcal{P}}}$ is $(\epsilon+\delta;c)$-secure under $G$ relative to $\mathcal{P}$.
\end{corollary}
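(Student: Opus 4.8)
The plan is to derive Corollary~\ref{cor:compcor} from the bounded-query composition theorem (Theorem~\ref{thm:indiffcomp}) by a direct quantifier-unwinding argument, exploiting the fact that a fixed game $G$ and a fixed primitive $\Prim$ come with \emph{a priori} polynomial query bounds, so the only thing that needs to vary is the adversary's budget.

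First I would fix an arbitrary polynomial adversary budget $T_{\A} = \poly(n)$ against $\Prim^{C^{\mathcal{P}}}$; since being $(\epsilon+\delta;c)$-secure under $G$ relative to $\mathcal{P}$ means being $(T_{\A},\epsilon+\delta;c)$-secure for every $T_{\A}\le\poly(n)$, it suffices to prove the latter for each such $T_{\A}$. Because $G$ is a QPT oracle procedure it makes at most some polynomial number $T_{G,1}$ of queries to its primitive and $T_{G,2}$ of queries to its adversary, and because $\Prim$ is a QPT algorithm each invocation of it makes at most $T_{\Prim}=\poly(n)$ oracle queries. Now set $p := \max\!\big(T_{G,1}\cdot T_{\Prim},\; T_{G,2}\cdot T_{\A}\big)$, which is a polynomial in $n$. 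By the hypothesis of the corollary there is a polynomial $q=\poly(n)$ such that $C^{\mathcal{P}}$ is $(q,p,p,\delta)$-indifferentiable from $\mathcal{Q}$; I would then instantiate Theorem~\ref{thm:indiffcomp} with $T_{\Sim}:=q$, $T_1:=T_2:=p$, and this $T_{\A}$.

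It remains to verify the two hypotheses of Theorem~\ref{thm:indiffcomp}. The query-budget conditions $T_{G,1}\cdot T_{\Prim}\le T_2$ and $T_{G,2}\cdot T_{\A}\le T_1$ hold by the choice of $p$. For the security hypothesis, observe that $T_{\A}\cdot T_{\Sim}=T_{\A}\cdot q$ is again a polynomial in $n$; hence the assumed $(\epsilon;c)$-security of $\Prim^{\mathcal{Q}}$ under $G$ relative to $\mathcal{Q}$ — which by definition is $(t,\epsilon;c)$-security for \emph{every} $t\le\poly(n)$ — in particular yields $(T_{\A}\cdot T_{\Sim},\epsilon;c)$-security. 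Theorem~\ref{thm:indiffcomp} then gives that $\Prim^{C^{\mathcal{P}}}$ is $(T_{\A},\epsilon+\delta;c)$-secure under $G$ relative to $\mathcal{P}$, and since $T_{\A}$ was an arbitrary polynomial this is precisely the claimed $(\epsilon+\delta;c)$-security.

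I do not expect a genuine obstacle: all the real content lives in Theorem~\ref{thm:indiffcomp}, and the corollary is just its convenient ``fully polynomial'' repackaging. The only points requiring care are (i) that $q$ is chosen \emph{after} $p$ (hence after $T_{\A}$) but, as in the indifferentiability definition, independently of the particular adversary, so that $T_{\A}\cdot T_{\Sim}$ stays polynomial and the $(\epsilon;c)$-security of $\Prim^{\mathcal{Q}}$ can legitimately be invoked at that budget; and (ii) that letting $p$ depend on $T_{\A}$ is harmless because $G$ and $\Prim$ are fixed objects with fixed polynomial query counts, so $p$ remains a single polynomial for each fixed $T_{\A}$.
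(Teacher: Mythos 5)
Your proposal is correct and follows essentially the same route as the paper's own proof: fix the adversary budget $T_{\A}$, choose $p=\max(T_{G,1}\cdot T_{\Prim},T_{G,2}\cdot T_{\A})$, invoke the hypothesis to obtain $q$, note that $T_{\A}\cdot q$ is still polynomial so the $(\epsilon;c)$-security of $\Prim^{\mathcal{Q}}$ applies at that budget, and then apply Theorem~\ref{thm:indiffcomp}. The only stylistic difference is that the paper explicitly remarks that $T_{G,1},T_{G,2}$ may depend on $\A$, whereas you absorb this into the general observation that a QPT game against a polynomial adversary has a polynomial query bound — both are fine.
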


\begin{remark}
    If, in any of these theorems, the construction $C^{\mathcal{P}}$ additionally satisfies reset indifferentiability, then we can remove the requirement that $G$ is single-stage~\cite{EC:RisShaShr11}.
\end{remark}

\paragraph*{What does it mean for a stateful construction to be indifferentiable?}
For the composition lemma to hold, we need indifferentiability to hold when the adversary has query access to many different copies of the construction. This is essentially the reverse of reset indifferentiability. However, we may resign ourselves to a weaker composition theorem, which holds only for particular types of security games. In particular, we could consider only LOCC security games, where the game can be written as a LOCC protocol where each local operation only operates on one copy of the primitive. To handle this, we need indifferentiability to hold against multiple adversaries communicating over LOCC channels where each adversary has access to a fresh version of the construction. We will call such a property \textit{LOCC indifferentiability}.

\begin{definition}[LOCC indifferentiability]
    Let $\mathcal{P},\mathcal{Q}$ be two idealized primitives. We say that a construction $C^{\mathcal{P}}$ is $(\ell_\A,T_{\Sim},T_1,T_2,\epsilon)$-LOCC indifferentiable from $\mathcal{Q}$ if for all LOCC adversaries $\A^{(\cdot),(\cdot)} = (\A_1^{(\cdot),(\cdot)},\dots,\A_{\ell_\A}^{(\cdot),(\cdot)})$ making at most $T_1,T_2$ queries to its first and second oracles respectively, there exists a simulator $\Sim^{\mathcal{Q}}$ making at most $T_{\Sim}$ queries to its oracle such that 
    $$\abs{\Pr[\A^{((\mathcal{P},C_1^{\mathcal{P}}), \dots, (\mathcal{P},C_t^{\mathcal{P}}))} \to 1] - \Pr[\A^{((\Sim_1^{\mathcal{Q}},\mathcal{Q}), \dots, (\Sim_5^{\mathcal{Q}},\mathcal{Q}))} \to 1]} \leq \epsilon$$
    where $C_i$, $\Sim_i$ are fresh instantiations of $C$ and $\Sim$ for all $i\in [k]$.
\end{definition}

\begin{theorem}[LOCC composition theorem]\label{thm:locccomp}
Let $\mathcal{P},\mathcal{Q}$ be two idealized primitives, and let $C^{\mathcal{P}}$ be a (possibly stateful) construction $(\ell_\A,T_{\Sim},T_1,T_2,\delta)$-LOCC indifferentiable from $\mathcal{Q}$. Let $G$ be any LOCC cryptographic game making at most $T_{G,1}$ queries to its primitive and $T_{G,2}$ queries to its adversary, with at most $\ell_\A$ adversaries. Let $\Prim^{\mathcal{Q}}$ be any primitive relative to $\mathcal{Q}$ making at most $T_{\Prim}$ queries to its oracle. Let $\epsilon,T_{\A}:\N\to [0,1]$ be any functions and let $c$ be any constant.

As long as $T_{G,1}\cdot T_{\Prim} \leq T_2$ and $T_{G,2}\cdot T_{\A} \leq T_1$, then if $\Prim^{\cal{Q}}$ is $(T_{\A}\cdot T_{\Sim},\epsilon;c)$-secure under $G$ relative to $\cal{Q}$, then $\Prim^{C^{\cal{P}}}$ is $\left(T_{\A}, \epsilon + \delta;c\right)$-secure under $G$ relative to $\cal{P}$.
\end{theorem}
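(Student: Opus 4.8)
The plan is to mirror the structure of the proof of the bounded-query composition theorem (Theorem~\ref{thm:indiffcomp}), but to carry the whole argument through in the LOCC framework, using the fact that an LOCC cryptographic game $G$, run with the primitive $\Prim^{C^{\cal P}}$ and an adversary $\A^{C^{\cal P}}$, is itself an LOCC algorithm interacting with the oracle $\cal P$ together with several fresh copies of the construction $C^{\cal P}$. First I would set up the reduction: suppose toward contradiction that $\Prim^{C^{\cal P}}$ is not $(T_\A,\epsilon+\delta;c)$-secure under $G$ relative to $\cal P$, i.e.\ there is an oracle adversary $\A^{\cal P}$ making at most $T_\A$ queries with $\Pr[G(1^n,\Prim^{C^{\cal P}},\A^{\cal P})\to 1] - c > \epsilon+\delta$ for infinitely many $n$. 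I would then view the composite object $G(1^n,\Prim^{C^{\cal P}},\A^{\cal P})$ as an LOCC distinguisher $D$: it spins up $G$'s local components $G_1,\dots,G_{\ell_{\Prim}+\ell_\A}$, each of which talks only classically to the others (because $G$ is LOCC), and each local component is handed its own fresh instantiation of $C^{\cal P}$ to answer its primitive/adversary queries, plus direct access to $\cal P$. Crucially, the query budget bookkeeping needs: $G$ makes at most $T_{G,1}$ queries to $\Prim^{C^{\cal P}}$, each $\Prim$-query costs $\le T_{\Prim}$ queries into $C^{\cal P}$, so the second-oracle budget is $T_{G,1}\cdot T_{\Prim}\le T_2$; similarly $G$ makes $\le T_{G,2}$ adversary queries, each costing $\le T_\A$ queries, giving $T_{G,2}\cdot T_\A \le T_1$ into the first oracle. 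And there are at most $\ell_\A$ local components using fresh copies of $C$, matching the $\ell_\A$ parameter of LOCC indifferentiability.

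Next I would invoke $(\ell_\A,T_{\Sim},T_1,T_2,\delta)$-LOCC indifferentiability of $C^{\cal P}$ from $\cal Q$ against this distinguisher $D$: it yields a simulator $\Sim^{\cal Q}$ (making $\le T_{\Sim}$ queries) such that replacing every $(\cal P, C_i^{\cal P})$ pair by $(\Sim_i^{\cal Q},\cal Q)$ changes the acceptance probability by at most $\delta$. In the resulting ideal-world experiment, $G$ is now run against a primitive that queries $\cal Q$ directly (its $\le T_\Prim$ queries to the construction are routed straight to $\cal Q$, giving a legitimate $\Prim^{\cal Q}$ making $\le T_\Prim$ queries) and an adversary whose $\cal P$-access has been replaced by $\Sim^{\cal Q}$-access; composing $\A$'s $T_\A$ queries with $\Sim$'s $T_{\Sim}$ queries gives an adversary $\A^{\cal Q}$ against $G$ relative to $\cal Q$ making at most $T_\A\cdot T_{\Sim}$ queries. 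Hence $\Pr[G(1^n,\Prim^{\cal Q},\A^{\cal Q})\to 1] - c > (\epsilon+\delta) - \delta = \epsilon$ for infinitely many $n$, contradicting the assumed $(T_\A\cdot T_{\Sim},\epsilon;c)$-security of $\Prim^{\cal Q}$ under $G$ relative to $\cal Q$. This yields the claimed $(T_\A,\epsilon+\delta;c)$-security of $\Prim^{C^{\cal P}}$.

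The step I expect to be the main obstacle is verifying that the composed experiment genuinely fits the LOCC template so that LOCC indifferentiability is applicable: one must check that the LOCC structure of $G$ (classical-only messages between the $G_i$, and at most one oracle query per stateless component) is preserved when each $G_i$'s abstract oracle is expanded into ``run a fresh copy of $C^{\cal P}$, which itself queries $\cal P$'', and that the adversary's internal use of $C^{\cal P}$-simulated-$\Prim$ queries does not secretly create quantum cross-talk between the local pieces — i.e.\ that the hypothesis ``$G$ is LOCC'' together with ``$\Prim$ and $\A$ are compatible with $G$'s syntax'' really forces the composite to be an $\ell_\A$-party LOCC algorithm over the oracle pair $(\cal P, C^{\cal P})$. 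The counting of queries is routine but must be stated carefully, since the LOCC notion counts total queries across all parties; I would handle this by the same "$T_{G,1}\cdot T_\Prim$ / $T_{G,2}\cdot T_\A$" accounting as in Theorem~\ref{thm:indiffcomp}, now summed over the local components, observing that the sum of per-component query counts is exactly the global count used in the LOCC definitions.
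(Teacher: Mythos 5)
Your proposal is correct and follows essentially the same route as the paper: the paper's proof likewise observes that $G(1^n,\Prim^{C^{\mathcal P}},\A^{\mathcal P})$ is an LOCC algorithm with access to $((\mathcal P,C_1^{\mathcal P}),\dots)$, applies LOCC indifferentiability with the simulator to pass to $G(1^n,\Prim^{\mathcal Q},\A^{\Sim^{\mathcal Q}})$ where each $\B_i^{\mathcal Q}=\A_i^{\Sim^{\mathcal Q}}$ carries its own fresh $\Sim$ instance, and then invokes the assumed $\mathcal Q$-security exactly as in Theorem~\ref{thm:indiffcomp}. Your contrapositive framing versus the paper's direct chain of inequalities is a cosmetic difference; the query-budget accounting ($T_{G,1}\cdot T_{\Prim}\le T_2$, $T_{G,2}\cdot T_\A\le T_1$, $T_\A\cdot T_\Sim$ for the resulting $\mathcal Q$-adversary) and the ``main obstacle'' you flag -- that the composed experiment genuinely has LOCC structure over the oracle pair -- are precisely the two points the paper also handles (the latter by assertion).
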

\section{Our Idealized Models}

\begin{notation}
    For any state $\ket{\phi}$, we define the state
    $$\ket{\phi-} \coloneq \frac{1}{\sqrt{2}}\left(\ket{0} - \ket{\phi}\right)$$
\end{notation}

\begin{definition}
    Let $\D = \{\D_n\}_{n\in \N}$ be any family of distributions over pure states over $n$ qubits (i.e. elements of $\C^{2^n}$). We define a series of oracles parameterized by $\D$. For each of these oracles, let $\ket{\phi_n} \gets \D$ be a state sampled during initialization, before any party receives oracle access. We will also consider the space $\C^{2^n+1}$, which will be the space spanned by the output space $\C^{2^n}$ of $\D_n$ and an orthogonal basis vector $\ket{0}$.
    \begin{enumerate}
        \item We define the structured state model (SSM) over $\D$ to be the oracle $\SSM_{n}:\C \to \C^{2^n}$ defined by the isometry
            $$\ket{0} \mapsto \ket{\phi_n}$$
        \item We define the weighted structured state model (SSM-) over $\D$ to be the oracle $\SSMM_{n}:\C \to \C^{2^n + 1}$ defined by the isometry
            $$\ket{0} \mapsto \ket{\phi_n-} = \frac{1}{\sqrt{2}}\left(\ket{0} - \ket{\phi_n}\right)$$
        \item We define the swap structured state model (Swap) over $\D$ to be the oracle $\Swap_n:\C^{2^{n} + 1} \to \C^{2^{n} + 1}$ defined by the unitary which swaps $\ket{0}$ and $\ket{\phi_n}$ and acts as identity everywhere else. Formally,
        $$\Swap_n = I - \ketbra{0} - \ketbra{\phi_n} + \ketbra{0}{\phi_n} + \ketbra{1}{0}$$
    \end{enumerate}
\end{definition}

\begin{remark}
    Note that in these definitions we assume that $\ket{0}$ is some state orthogonal to $\ket{\phi}$. We can explicitly force this to happen by setting $\ket{0}=\ket{0}\ket{0}$ and replacing $\ket{\phi_n}$ by $\ket{1}\ket{\phi_n}$.
\end{remark}
\section{Building $\SSM$ from $\SSMM$}

\begin{definition}
    We say a distributions over states $\D$ is balanced if it does not change when we apply a random phase. That is, if we define $\D'$ to do the following
    \begin{enumerate}
        \item Sample $\ket{\phi} \gets \D_n$
        \item Sample $\theta \gets [0,2\pi]$
        \item Output $e^{i\theta} \ket{\phi}$
    \end{enumerate}
    then $\D = \D'$.

    Similarly, we say a family of distributions $\D = \{\D_n\}_{n \in \N}$ is balanced if $\D_n$ is balanced for all $n\in \N$.
\end{definition}

\begin{definition}
    For any $t \in \N$, set $S \subseteq [t]$, state $\ket{\phi} \in \C^{2^n}$, define
    $$\ket{\Set{t,S,\phi}} \coloneq \bigotimes_{i = 1}^{t} (\Ind_{i \notin S}\ket{0} - \Ind_{i \in S}\ket{\phi})$$
    For any $t \in \N$, $0 \leq c \leq t$, state $\ket{\phi} \in \C^{2^n}$, define
    $$\ket{\Rep{t,c,\phi}} \coloneq {t \choose c}^{-1/2}\sum_{S\subseteq [t]:|S| = c}\ket{\Set{t,S,\phi}}$$
    We will sometimes omit $t$ when clear from context.
\end{definition}

\begin{definition}
    $B(t,p)$ is the binomial distribution over $t$ instances. Formally, it is the distribution defined by
    $$\Pr[B(t,p) \to c] = {t \choose c}\cdot p^c\cdot (1-p)^{t-c}$$
    for all $0 \leq c \leq t$.
\end{definition}

\begin{lemma}\label{lem:binom}
    Let $\D$ be any a balanced distribution and let $t_1,t_2\in \N$. 
    Define
        $$\rho_n = \E_{\ket{\phi}\gets \D}\left[ \ketbra{\phi-}^{\otimes t_1}\otimes \ketbra{\phi}^{\otimes t_2} \right]$$
        $$\rho_n' = \E_{\ket{\phi} \gets \D,c\gets B(t_1,1/2)}\left[\ketbra{\Rep{t_1,c,\phi}}\otimes \ketbra{\phi}^{\otimes t_2}\right]$$
    Then $\rho = \rho'$.
\end{lemma}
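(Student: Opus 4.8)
The claim is that averaging $t_1$ copies of $\ket{\phi-}$ over a balanced $\D$ produces the same mixed state as first averaging over $\D$ and then, independently, producing $\ket{\Rep{t_1,c,\phi}}$ for a binomially-distributed number $c$ of "flipped" tensor factors. The spare $t_2$ copies of $\ket{\phi}$ are inert — they appear identically on both sides and commute with everything — so I would immediately reduce to the case $t_2 = 0$ and carry the $\ketbra{\phi}^{\otimes t_2}$ factor along passively.

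The plan is to expand $\ket{\phi-}^{\otimes t_1}$ directly. Since $\ket{\phi-} = \tfrac{1}{\sqrt2}(\ket{0}-\ket{\phi})$, we have
$$\ket{\phi-}^{\otimes t_1} = 2^{-t_1/2}\sum_{S\subseteq[t_1]} \ket{\Set{t_1,S,\phi}},$$
grouping the $2^{t_1}$ terms by the size $c=|S|$. Each group of fixed size $c$ is exactly ${t_1\choose c}^{1/2}\ket{\Rep{t_1,c,\phi}}$ by definition. So $\ket{\phi-}^{\otimes t_1} = 2^{-t_1/2}\sum_{c=0}^{t_1}{t_1\choose c}^{1/2}\ket{\Rep{t_1,c,\phi}}$. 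Forming the rank-one projector and taking the expectation over $\ket{\phi}\gets\D$ gives a double sum over $c,c'$ of cross terms $\E_{\ket\phi}[\ketbra{\Rep{t_1,c,\phi}}{\Rep{t_1,c',\phi}}]$ weighted by $2^{-t_1}{t_1\choose c}^{1/2}{t_1\choose c'}^{1/2}$. The key step is to argue that all off-diagonal terms ($c\ne c'$) vanish: this is precisely where \emph{balancedness} enters. Applying the phase substitution $\ket\phi \mapsto e^{i\theta}\ket\phi$ leaves $\D$ invariant but multiplies $\ket{\Set{t_1,S,\phi}}$ by $(-e^{i\theta})^{|S|} $ up to an overall sign, hence multiplies $\ket{\Rep{t_1,c,\phi}}$ by a phase $\propto e^{ic\theta}$; so $\E_{\ket\phi}[\ketbra{\Rep{t_1,c,\phi}}{\Rep{t_1,c',\phi}}]$ picks up a factor $e^{i(c-c')\theta}$, and averaging $\theta$ over $[0,2\pi]$ kills it unless $c=c'$. (One must be slightly careful that the overall $(-1)^{|S|}$ sign factors out consistently across the superposition defining $\ket{\Rep{}}$, which it does since every $S$ in that sum has the same cardinality $c$.) What survives is $\sum_c 2^{-t_1}{t_1\choose c}\,\E_{\ket\phi}[\ketbra{\Rep{t_1,c,\phi}}]$, and since $\Pr[B(t_1,1/2)\to c] = {t_1\choose c}/2^{t_1}$ — matching the paper's normalization $\binom{t}{c}/2^c$ only once one reads it as the standard binomial, which I would double-check and use consistently — this is exactly $\rho_n'$ (with the $\ketbra{\phi}^{\otimes t_2}$ factor reinstated).

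The main obstacle is the bookkeeping around phases and signs: making sure the $(-1)^{|S|}$ from the minus sign in $\ket{\phi-}$ is handled correctly and does not interfere with the orthogonality-by-phase-averaging argument, and confirming that the random-phase invariance of $\D$ really does force the cross terms to zero rather than merely to something of small norm. A secondary point to verify is the normalization convention for $B(t,p)$ as written in the excerpt versus the standard binomial weights ${t\choose c}2^{-t}$ that fall out of the expansion; I would reconcile these at $p=1/2$ before concluding. Everything else is a routine regrouping of a tensor-power expansion.
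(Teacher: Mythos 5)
Your plan is correct and follows essentially the same route as the paper's proof: expand $\ket{\phi-}^{\otimes t_1}$ over subsets $S\subseteq[t_1]$, group by $|S|=c$ into $\ket{\Rep{t_1,c,\phi}}$, and kill the $c\neq c'$ cross terms by averaging the random phase permitted by balancedness (which contributes $e^{i(c-c')\theta}$ while the $\ketbra{\phi}^{\otimes t_2}$ factor is phase-invariant). The small wrinkle you flag — the sign convention in $\ket{\Set{t,S,\phi}}$ and the evidently typo'd normalization of $B(t,p)$ — are real but harmless, and the paper handles the $t_2$ factor by explicit phase cancellation rather than a reduction, which is the cleaner way to state what you mean by ``carrying it along passively.''
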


\begin{proof}
    In particular, we will show that $\rho=\rho'$ as density matrices. Note that
    \begin{equation}\label{eq:phimintorep}
        \begin{split}
            \ket{\phi-}^{\otimes t_1} = \frac{1}{\sqrt{2^{t_1}}}\sum_{S \subseteq [t_1]} \bigotimes_{i = 1}^{t_1}(\Ind_{i \notin S}\ket{0} - \Ind_{i \in S}\ket{\phi})\\
            = \frac{1}{\sqrt{2^{t_1}}}\sum_{S \subseteq [t_1]} \ket{\Set{S,\phi}}= \sum_{c = 1}^{t_1} \frac{{t_1 \choose c}^{1/2}}{\sqrt{2^{t_1}}} \ket{\Rep{c,\phi}}
        \end{split}
    \end{equation}
    And so
    \begin{equation}\label{eq:help0}
        \begin{split}
            \rho = \E_{\ket{\phi}\gets \mu_n}\left[ \ketbra{\phi-}^{\otimes t_1}\otimes \ketbra{\phi}^{\otimes t_2} \right]\\
            = \E_{\ket{\phi}\gets \mu_n}\left[ \sum_{c,c' \in [t_1]} \frac{{t_1 \choose c}^{1/2} {t_1 \choose c'}^{1/2}}{2^{t_1}} \op{\Rep{c,\phi}}{\Rep{c',\phi}}\otimes \ketbra{\phi}^{\otimes t_2} \right]\\
            = \E_{\ket{\phi}\gets \mu_n}\left[ \sum_{c \in [t_1]} \frac{{t_1 \choose c}}{2^{t_1}} \op{\Rep{c,\phi}}{\Rep{c,\phi}}\otimes \ketbra{\phi}^{\otimes t_2} \right]\\ + \sum_{c \neq c' \in [t_1]} \frac{{t_1 \choose c}^{1/2} {t_1 \choose c'}^{1/2}}{2^{t_1}} \E_{\ket{\phi}\gets \mu_n} \left[\op{\Rep{c,\phi}}{\Rep{c',\phi}}\otimes \ketbra{\phi}^{\otimes t_2}\right]\\
            = \rho' + \frac{1}{2^{t_1}}\sum_{S,S'\subseteq [t_1],|S|\neq |S'|}\E_{\ket{\phi}\gets \mu_n}[\op{\Set{S,\phi}}{\Set{S',\phi}}\otimes \ketbra{\phi}^{\otimes t_2}]
        \end{split}
    \end{equation}
    We then claim that since $\D$ is balanced, for all $S,S' \subseteq [t_1]$ such that $|S|\neq |S'|$,
    \begin{equation}
        \E_{\ket{\phi}\gets \mu_n} \left[\op{\Set{S,\phi}}{\Set{S',\phi}}\otimes \ketbra{\phi}^{\otimes t_2}\right] = 0
    \end{equation} for all $c \neq c'$.
    In particular, since $\D$ is balanced, we have
    \begin{equation}\label{eq:help1}
        \begin{split}
            \E_{\ket{\phi}\gets \mu_n} \left[\op{\Set{S,\phi}}{\Set{S',\phi}}\otimes \ketbra{\phi}^{\otimes t_2}\right]\\
            = \E_{\ket{\phi}\gets \mu_n} \left[\E_{\theta\gets [0,2\pi]}\left[\op{\Set{S,e^{i\theta}\phi}}{\Set{S',e^{i\theta}\phi}}\otimes e^{t_2 i\theta} e^{-t_2 i \theta} \ketbra{\phi}^{\otimes t_2}\right]\right]\\
            = \E_{\ket{\phi}\gets \mu_n} \left[\E_{\theta\gets [0,2\pi]}\left[\op{\Set{S,e^{i\theta}\phi}}{\Set{S',e^{i\theta}\phi}}\otimes \ketbra{\phi}^{\otimes t_2}\right]\right]\\
        \end{split}
    \end{equation}
    But we also have 
    \begin{equation}\label{eq:help2}
        \begin{split}
            \op{\Set{S,e^{i\theta}\phi}}{\Set{S',e^{i\theta}\phi}} = \bigotimes_{i=1}^{t_1}(\Ind_{i \notin S \cap S'} \ketbra{0} + e^{i\theta}\Ind_{i \in S\setminus S'} \op{\phi}{0} \\
            + e^{-i\theta}\Ind_{i \in S'\setminus S}\op{0}{\phi} + \Ind_{i \in S'\cap S}\ketbra{\phi})\\
            = (e^{i\theta})^{|S\setminus S'|} (e^{-i\theta})^{|S'\setminus S|}\op{\Set{S,\phi}}{\Set{S',\phi}}\\
            = e^{(|S| - |S'|)i \theta}\op{\Set{S,\phi}}{\Set{S',\phi}}
        \end{split}
    \end{equation}
    and so by~\Cref{eq:help1,eq:help2}, and since $|S| - |S'| \neq 0$,
    \begin{equation}\label{eq:help3}
        \begin{split}
            \E_{\ket{\phi}\gets \mu_n} \left[\op{\Set{S,\phi}}{\Set{S',\phi}}\otimes \ketbra{\phi}^{\otimes t_2}\right]\\
            = \E_{\ket{\phi}\gets \mu_n} \left[\E_{\theta\gets [0,2\pi]}\left[e^{(|S| - |S'|)) i\theta}\op{\Set{S,\phi}}{\Set{S',\phi}}\otimes  \ketbra{\phi}^{\otimes t_2}\right]\right]\\
            = \E_{\ket{\phi}\gets \mu_n}[0]\\
            =0
        \end{split}
    \end{equation}
    And so by~\Cref{eq:help0,eq:help3}, we have $\rho = \rho'$.
\end{proof}

We construct $\SSM$ from $\SSMM$ for the specific case when $\D$ is the Haar distribution.

\begin{construction}\label{cons:ssmmtossm}
    Define $C^{\SSMM}_n$ as follows:
    \begin{enumerate}
        \item Run $\SSMM_n\to \ket{\psi}$. 
        \item Apply the measurement $\{\ketbra{0},I-\ketbra{0}\}$ to $\ket{\phi}$.
        \item If the result is $I-\ketbra{0}$, output the residual state.
        \item Otherwise, repeat.
        \item If this process fails $m$ times, output $\bot$.
    \end{enumerate}
\end{construction}

\begin{theorem}\label{thm:ssmmtossmindiff}
    For all $T_{\Sim},T_1,T_2,m$ satisfying $T_1\leq T_{\Sim}$, for $\D$ the Haar distribution over $n$ qubits, $C^{\SSMM}$ is $\left(T_{\Sim},T_1,T_2,\frac{T_2}{2^m}\right)$ indifferentiable from $\SSM$.
\end{theorem}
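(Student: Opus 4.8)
The plan is to prove indifferentiability directly by exhibiting a simulator and analyzing the statistical distance, leveraging \Cref{lem:binom} to show that the real and ideal worlds are in fact \emph{perfectly} equivalent conditioned on a bad event that occurs with probability at most $T_2/2^m$.

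\textbf{The simulator.} Given oracle access to $\SSM_n$, the simulator $\Sim^{\SSM_n}$ must present the adversary with a $\SSMM_n$-oracle that is consistent with the $\SSM_n$ oracle the adversary also sees. The natural choice, following the intuition in the introduction, is: $\Sim$ maintains no persistent state beyond a ``pool'' that it fills lazily; on each query it should produce a copy of $\ket{\phi_n-}$. But the subtlety is consistency with the adversary's direct queries to $\SSM_n = C^{\SSMM_n}$, which in the ideal world \emph{is} $\SSM_n$. So concretely: in the real world the adversary holds oracles $(\SSMM_n, C^{\SSMM_n})$; in the ideal world it holds $(\Sim^{\SSM_n}, \SSM_n)$. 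The simulator must turn copies of $\ket{\phi_n}$ (available from $\SSM_n$) into copies of $\ket{\phi_n-}$. First I would observe that a single query to $\SSMM_n$ can be simulated given a fresh copy of $\ket{\phi_n}$: apply the isometry $\ket{\phi_n}\mapsto\ket{\phi_n-}$, which exists since both are unit vectors, but crucially this isometry is not efficiently implementable without knowing $\ket{\phi_n}$. The right move, as the text hints, is the other direction — so I would instead set up $\Sim$ to internally run $C$ in reverse conceptually: since $C^{\SSMM_n}$ post-selects $\ket{\phi_n-}$ on being nonzero to get $\ket{\phi_n}$, and $\ket{\phi_n-}$ has squared overlap $1/2$ with the non-$\ket 0$ subspace, a copy of $\ket{\phi_n}$ together with a fresh $\ket 0$ and a Hadamard-type recombination reproduces the distribution of $\ket{\phi_n-}$. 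I would make this precise by noting that \Cref{lem:binom} with $t_1$ the number of $\SSMM_n$ queries and $t_2$ the number of direct $\SSM_n$ queries shows the joint state of all copies handed out is \emph{identical} whether we hand out genuine $\ket{\phi_n-}^{\otimes t_1}$ or hand out $\ket{\Rep{t_1,c,\phi}}$ for $c\gets B(t_1,1/2)$; and the latter can be prepared from $\SSM_n$ copies plus fresh ancillas (the $\ket{\Rep{t_1,c,\phi}}$ state lives in the span of tensor products of $\ket 0$ and $\ket{\phi_n}$).

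\textbf{The analysis.} With the simulator fixed, the two worlds differ only in the event that $C^{\SSMM_n}$, invoked by the adversary, returns $\bot$ — i.e. the $m$-fold post-selection fails. In the ideal world the corresponding oracle is $\SSM_n$, which never returns $\bot$. Each independent invocation of $C$ fails with probability exactly $2^{-m}$ (each measurement yields $\ket 0$ with probability $|\langle 0|\phi_n-\rangle|^2 = 1/2$, so all $m$ trials give $\ket 0$ with probability $2^{-m}$). A union bound over the at most $T_2$ queries the adversary makes to its second oracle gives total failure probability at most $T_2/2^m$. Conditioned on no failure, I claim the real and ideal views are \emph{perfectly} indistinguishable: the residual state output by $C$ conditioned on success is exactly $\ket{\phi_n}$ (post-selecting $\ket{\phi_n-}$ on the $I-\ketbra 0$ outcome yields the normalized projection, which is $\ket{\phi_n}$ up to phase, and balancedness of the Haar distribution kills the phase), matching $\SSM_n$ exactly; meanwhile \Cref{lem:binom} gives that the $\SSMM_n$-copies versus the simulator's reconstructed copies are identical as density matrices. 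The query bound $T_1 \le T_{\Sim}$ is needed because $\Sim$ needs one $\SSM_n$ query (equivalently, one element of the reconstruction) per $\SSMM_n$ query it answers.

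\textbf{Main obstacle.} The delicate point is handling \emph{adaptivity and superposition}: the adversary may interleave $\SSMM_n$-type and $\SSM_n$-type queries, and the entanglement between the copies it has collected and its workspace must not let it detect the substitution. This is exactly what \Cref{lem:binom}'s statement as an equality of density matrices buys us — since the global state on (adversary's registers)$\otimes$(all handed-out copies) is literally the same in both worlds at every point before a $\bot$ event, no amount of adaptivity or entanglement helps. So the real work is to carefully set up the hybrid so that the only deviation is the $\bot$ event, push all the ``quantum'' content into the already-proven \Cref{lem:binom}, and bound the classical failure probability by $T_2/2^m$ via a union bound. I expect verifying that $C$'s success-conditioned output is \emph{exactly} $\ket{\phi_n}$ (including the phase, using balancedness of Haar) and that the simulator's reconstruction is implementable with only the allotted queries to be the two places requiring the most care.
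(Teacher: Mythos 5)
Your proposal is correct and takes essentially the same approach as the paper's proof: define a simulator that prepares $\ket{\Rep{T_{\Sim},c,\phi_n}}$ for $c\gets B(T_{\Sim},1/2)$ using copies of $\ket{\phi_n}$ from $\SSM$, apply \Cref{lem:binom} to get perfect equality of density matrices against genuine copies of $\ket{\phi_n-}$, and bound the only remaining deviation (the $\bot$ event of $C^{\SSMM}$) by $T_2/2^m$ via a union bound. The paper's Construction~\ref{cons:statesim} together with its three-step hybrid argument is precisely a concrete instantiation of your sketch; the only tiny imprecisions in your write-up are that $T_1\le T_{\Sim}$ is needed because the simulator pre-prepares $T_{\Sim}$ output registers (not one $\SSM$ query per answer), and the sign on $C$'s success-conditioned output is a global phase, so balancedness plays no role there.
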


\begin{proof}
    The main idea behind the simulator is that it directly constructs the state $\rho'$ from~\Cref{lem:binom}. The simulator is defined as follows. 
    \begin{construction}\label{cons:statesim}
    The simulator $\Sim_n^{\SSM}$ acts as follows on initialization
    \begin{enumerate}
        \item Sample $c \gets B(T_{\Sim}(n),1/2)$ (samples $c$ from the binomial distribution).
        \item Run $\SSM_n$ $c$ times to produce $c$ copies of $\ket{\phi_n}$ in registers $B_1,\dots,B_{c}$.
        \item Construct the state
        $$\ket{\Rep{T_{Sim},c,\ket{1}}}_{A_1,\dots,A_{T_{\Sim}}}\ket{\phi_n}^{\otimes c}_{B_1,\dots,B_c}\ket{1}_D$$
        \item For each qubit $i$ of $A$, controlled on $A_i$ being $0$, do a SWAP between $A_i$ and $B_D$ and increment register $D$ by one. This produces the state
        $${T_{\Sim} \choose c}^{-1/2}\left(\sum_{S\subseteq [T_{\Sim}]:|S| = c} \bigotimes_{i = 1}^{T_{\Sim}} (\Ind_{i \notin S}\ket{0}_{A_i} + \Ind_{i \in S}\ket{\phi}_{A_i})\right) \otimes \ket{1}^{\otimes c}_{B_1,\dots,B_c}\ket{c+1}_D$$
        Note that this is exactly $\ket{\Rep{T_{\Sim},c,\phi_n}}\ket{1}^{\otimes c}\ket{c+1}$.
    \end{enumerate}
    On query $i$, $\Sim_n$ outputs registers $A_i$.
    \end{construction}
    \vspace{0.2in}

    To show indifferentiability, we will work through a series of hybrid games.
    Let $\A^{(\cdot,\cdot)}$ be any adversary making at most $T_1,T_2$ queries to its oracles respectively.
    \begin{enumerate}
        \item Hybrid 1: $\A^{\Sim^{\SSM},\SSM}$, the ideal world in the indifferentiability game.
        \item Hybrid 2: $\A^{\widetilde{\Sim}, \SSM}$, the same as hybrid 1, but we replace the state $\ket{\Rep{T_{\Sim},c,\phi_n}}\ket{c+1}$ used by the simulator with $\ket{\phi_n-}^{\otimes k}\ket{c+1}$. In particular, $\widetilde{\Sim}$ behaves exactly the same as $\Sim$, but its internal state after sampling $c$ is set to
        $$\bigotimes_{i=1}^k \frac{1}{\sqrt{2}}\left(\ket{0}_{A_i}\ket{0^n}_{B_i} - \ket{1}_{A_i}\ket{\phi_n}_{B_i}\right)\otimes \ket{0^n}^{\otimes c}_{C_1,\dots,C_c}\ket{c+1}_D$$
        \item Hybrid 3: $\A^{\SSMM,\SSM}$, the same as hybrid 1, but the simulator is replaced with the honest $\SSMM$ oracle. In an abuse of notation, in this hybrid both $\SSM$ and $\SSMM$ are going to refer to the same state $\ket{\phi}\gets \D$.
        \item Hybrid 4: $\A^{\SSMM, C^{\SSMM}}$, the real world in the indifferentiability game.
    \end{enumerate}

    We then proceed to show that each of these hybrids are close.
    \begin{enumerate}
        \item $\Pr[\A^{\Sim^{\SSM},\SSM} \to 1] = \Pr[\A^{\widetilde{\Sim}^{\SSM}, \SSM} \to 1]$. This follows directly from~\Cref{lem:binom}. In particular, the mixed state representing the internal state of the simulator along with all responses from $\SSM$ has the exact same density matrix in both games.
        \item $\Pr[\A^{\widetilde{\Sim}^{\SSM}, \SSM} \to 1] = \Pr[\A^{\SSMM, \SSM} \to 1]$. This is because during query $i \leq T_{\Sim}$ made to $\wt{\Sim}$, registers $A_i$ contain $\ket{\phi_n-}$, which is exactly what is returned by $\SSMM$. Furthermore, since $T_1 \leq T_{\Sim}$, we will always have $i\leq T_1 \leq T_{\Sim}$ and so register $A_i$ will always be assigned.
        \item $\abs{\Pr[\A^{\SSMM, \SSM} \to 1] - \Pr[\A^{\SSMM, C^{\SSMM}} \to 1]} \leq \frac{T_2}{2^m}$. Note that as long as $C^{\SSM}$ does not fail, it will output $\ket{\phi_n}$. Thus, the only way to distinguish these two sets of oracles is if $C^{\SSMM}$ ever fails. But by union bound, this probability is $\leq \frac{T_2}{2^m}$.
    \end{enumerate}

    And so we get that $C^{\SSMM}$ is $\left(T_{\Sim},T_1,T_2,\frac{T}{2^m}\right)$ indifferentiable from $\SSM$.
\end{proof}

\begin{corollary}\label{cor:ssmmtossm}
    Let $m=2n$ and let $C^{\SSMM}$ be the construction from~\Cref{cons:ssmmtossm}. For all primitives $\Prim^{\SSM}$ and cryptographic games $G$, if $\Prim^{\SSM}$ is $(c,\epsilon)$-secure under $G$ relative to $\SSM$, then $\Prim^{C^{\SSMM}}$ is $(c,\epsilon+\frac{1}{2^n})$-secure under $G$ relative to $\SSM$.
\end{corollary}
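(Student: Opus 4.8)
The plan is to obtain this corollary by instantiating the general composition theorem (Corollary~\ref{cor:compcor}) with the indifferentiability bound just established in Theorem~\ref{thm:ssmmtossmindiff}. The one ingredient that makes the pieces fit is the choice $m = 2n$: Theorem~\ref{thm:ssmmtossmindiff} only gives a \emph{query-dependent} error $T_2/2^m$, but with $m = 2n$ this is $T_2/2^{2n}$, which for any fixed polynomial query budget $T_2$ drops below the fixed function $2^{-n}$ once $n$ is large enough. Since indifferentiability (and hence the hypothesis of Corollary~\ref{cor:compcor}) is a ``for all sufficiently large $n$'' statement, this is exactly enough to feed a fixed error function $\delta(n) = 2^{-n}$ into the composition theorem.

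Concretely, I would proceed in three steps. (i) Fix $\delta(n) = 2^{-n}$. (ii) Take an arbitrary polynomial $p = \poly(n)$ and apply Theorem~\ref{thm:ssmmtossmindiff} with $T_{\Sim} = T_1 = T_2 = p$ and $m = 2n$; the precondition $T_1 \le T_{\Sim}$ holds (with equality), so $C^{\SSMM}$ is $(p,p,p,\,p/2^{2n})$-indifferentiable from $\SSM$, and since $p/2^{2n} \le 2^{-n}$ for all sufficiently large $n$, this upgrades to $(p,p,p,\delta)$-indifferentiability, witnessed by $q = p$. Thus the ``for all $p = \poly(n)$ there exists $q = \poly(n)$'' hypothesis of Corollary~\ref{cor:compcor} is satisfied. (iii) Invoke Corollary~\ref{cor:compcor} with this $\delta$: for every primitive $\Prim^{\SSM}$ and every (single-stage) game $G$, if $\Prim^{\SSM}$ is $(c,\epsilon)$-secure under $G$ relative to $\SSM$ then $\Prim^{C^{\SSMM}}$ is $(c,\epsilon + 2^{-n})$-secure under $G$ relative to $\SSMM$, which is the claimed statement.

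The step I would treat most carefully — and the only genuine obstacle — is the mismatch in quantifier structure between the two results: Theorem~\ref{thm:ssmmtossmindiff} hands back an error that scales with the adversary's query count $T_2$, whereas Corollary~\ref{cor:compcor} wants a single fixed $\delta : \N \to [0,1]$. Step (ii) above is precisely the reconciliation, and the points to verify are that the simulator query count $q = p$ indeed stays polynomial, and that the inequality $p/2^{2n} \le 2^{-n}$ is only asserted for large $n$ (consistent with the indifferentiability definition). A secondary caveat is that Corollary~\ref{cor:compcor} as stated is restricted to single-stage $G$, and since the simulator of Construction~\ref{cons:statesim} is stateful, reset-indifferentiability is not available to lift this restriction. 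If one wants the conclusion for general games, the clean fallback is to apply Theorem~\ref{thm:indiffcomp} directly with explicit polynomial parameters: for a game $G$ making $T_{G,1}$ queries to the primitive and $T_{G,2}$ to the adversary, and a primitive making $T_{\Prim}$ oracle queries, one sets $T_2 = T_{G,1}\cdot T_{\Prim}$, $T_1 = T_{\Sim} = T_{G,2}\cdot T_{\A}$, all of which are polynomial, so the required inequalities $T_{G,1}\cdot T_{\Prim} \le T_2$ and $T_{G,2}\cdot T_{\A} \le T_1$ hold and the error $T_2/2^{2n} \le 2^{-n}$ follows as before.
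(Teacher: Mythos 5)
Your main argument (steps (i)--(iii)) is correct and matches the paper's proof exactly: set $m=2n$, observe $p(n)/2^{2n}\leq 2^{-n}$ for all polynomials $p$ at sufficiently large $n$, and feed $\delta(n)=2^{-n}$ into Corollary~\ref{cor:compcor}. One small correction to your ``secondary caveat'': your proposed fallback of invoking Theorem~\ref{thm:indiffcomp} directly does not escape the single-stage restriction, since Theorem~\ref{thm:indiffcomp} also explicitly requires $G$ to be single-stage; with a stateful simulator the single-stage hypothesis is genuinely needed, and the corollary as stated in the paper is implicitly restricted to such games.
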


\begin{proof}
    This follows from~\Cref{cor:compcor}. In particular, observe that for all polynomials $p$, for all sufficiently large $n$,
    $$\frac{p(n)}{2^{2n}} \leq \frac{1}{2^n}$$
\end{proof}
\section{$\SSMM$ and $\Swap$ are equivalent}

We now construct $\Swap$ from $\SSMM$.

\begin{proposition}[Adapted from Theorem 4 of\cite{C:JiLiuSon18}]\label{prop:reflsim}
    Let $t \in \N$. Let $\ket{\psi}$ be a quantum state of dimension $d$ and let $R_{\ket{\psi}} = I - 2\ketbra{\psi}$ be reflection around $\ket{\psi}$. Let $\Pi^{Sym}$ be the projection onto the symmetric subspace $\vee^{t+1} \C^d$ and let $R_{\Pi^{Sym}} = I - 2\Pi^{Sym}$ be reflection around the symmetric subspace. 
    
    Define the quantum channel $Q$ to, on input $\rho$, apply $R_{\Pi^{Sym}}$ to $\rho \otimes \ket{\psi}^{\otimes t}$ and output the first register. Then
    $$\norm{Q - R_{\ket{\psi}}\cdot R_{\ket{\psi}}}_{\diamond} \leq \frac{2}{\sqrt{t+1}}$$
\end{proposition}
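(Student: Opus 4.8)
The plan is to prove this by relating the action of $R_{\Pi^{Sym}}$ on the symmetric subspace of $t+1$ copies to the desired reflection $R_{\ket\psi}^2$, using the standard fact that the symmetric subspace is spanned by symmetrized tensor powers and that $\ket\psi^{\otimes t}$ is already fully symmetric. Concretely, fix an input state $\ket{\xi}$ in the first register. Decompose $\ket{\xi}$ relative to $\ket\psi$ as $\ket\xi = \alpha\ket\psi + \beta\ket{\psi^\perp}$ where $\ket{\psi^\perp}\perp\ket\psi$ (this captures the whole two-dimensional picture since $R_{\ket\psi}^2 = I$ acts trivially, so the claim is really about how close $Q$ is to the identity channel — wait, $R_{\ket\psi}^2 = I$, so the statement says $Q$ is close to the identity). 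So the real content is: applying $R_{\Pi^{Sym}}$ to $\ket\xi\otimes\ket\psi^{\otimes t}$ and tracing out the last $t$ registers returns something close to $\ket\xi$.

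First I would write $\ket\xi\otimes\ket\psi^{\otimes t} = \alpha\,\ket\psi^{\otimes(t+1)} + \beta\,\ket{\psi^\perp}\otimes\ket\psi^{\otimes t}$. The first term lies in $\vee^{t+1}\C^d$, so $R_{\Pi^{Sym}}$ fixes it. For the second term, I would compute its projection onto the symmetric subspace: $\Pi^{Sym}\bigl(\ket{\psi^\perp}\otimes\ket\psi^{\otimes t}\bigr)$ is the normalized symmetrization, and a direct calculation of its squared norm gives $\tfrac{1}{t+1}$ (one "slot" out of $t+1$ carries $\ket{\psi^\perp}$). Hence $R_{\Pi^{Sym}}$ on the second term equals $(I - 2\Pi^{Sym})$ applied to it, which differs from the identity by $-2\Pi^{Sym}$ acting on a vector of norm $\beta$ whose symmetric component has norm $|\beta|/\sqrt{t+1}$. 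Therefore $R_{\Pi^{Sym}}\bigl(\ket\xi\otimes\ket\psi^{\otimes t}\bigr) = \ket\xi\otimes\ket\psi^{\otimes t} - 2\,\Pi^{Sym}\bigl(\beta\ket{\psi^\perp}\otimes\ket\psi^{\otimes t}\bigr)$, and the error vector has norm $\le 2/\sqrt{t+1}$ in the worst case over $\ket\xi$. Tracing out the last $t$ registers is a contraction on trace distance, so the induced channel $Q$ satisfies $\norm{Q - \mathrm{id}}_\diamond \le 2/\sqrt{t+1}$, and since $R_{\ket\psi}\cdot R_{\ket\psi} = \mathrm{id}$ as a channel, this is exactly the claimed bound. (I should be careful to take the diamond-norm definition into account by running the same argument with an ancilla register on $\ket\xi$; the computation is unchanged since $\ket\psi^{\otimes t}$ is untouched by the ancilla, and the symmetric-component norm bound is uniform.)

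The main obstacle is making the diamond-norm bound rather than just a bound on individual input states: one must verify that entangling the input register with an arbitrary reference system does not worsen the error. This goes through because the "bad" part of the input — the component orthogonal to $\ket\psi$ in each copy — still gets damped by the same $1/(t+1)$ factor regardless of any reference entanglement, since the symmetrization only mixes the $t+1$ copy-registers and acts as identity on the reference. A secondary point to get right is the exact constant: one needs $\lVert\Pi^{Sym}(\ket{\psi^\perp}\otimes\ket\psi^{\otimes t})\rVert = 1/\sqrt{t+1}$ (not, say, $\sqrt{t/(t+1)}$), which follows from counting that among the $t+1$ tensor slots exactly one is "special," or equivalently from the formula for the overlap of a product state with the symmetric subspace. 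Everything else is routine, and this is essentially Theorem 4 of~\cite{C:JiLiuSon18} specialized and restated, so I would cite that for the analogous computation and just present the short self-contained version above.
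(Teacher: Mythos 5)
The paper does not prove this proposition; it cites it directly from~\cite{C:JiLiuSon18}, so there is no in-paper argument to compare against. But your proposal contains two compensating errors that you should be aware of, even though the final numerical bound comes out right.

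First, you read $R_{\ket{\psi}}\cdot R_{\ket{\psi}}$ as the composition of the map $R_{\ket{\psi}}$ with itself, i.e.\ $R_{\ket{\psi}}^2 = I$, and therefore conclude that the target channel is the identity. That is not the intended reading: here the dot is a placeholder for the channel argument, so $R_{\ket{\psi}}\cdot R_{\ket{\psi}}$ denotes the \emph{unitary channel} $\rho \mapsto R_{\ket{\psi}}\,\rho\,R_{\ket{\psi}}$ (recall $R_{\ket{\psi}}$ is Hermitian, so no dagger is needed). This channel is genuinely not the identity: on $\ket{\xi} = (\ket{\psi}+\ket{\psi^\perp})/\sqrt{2}$ it sends $\ketbra{\xi}$ to $\ketbra{\xi'}$ with $\ket{\xi'} = (-\ket{\psi}+\ket{\psi^\perp})/\sqrt{2}$.

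Second, and this is what makes your (mis)reading appear consistent, you assert that $R_{\Pi^{Sym}} = I - 2\Pi^{Sym}$ ``fixes'' the symmetric subspace. It does not; on the range of $\Pi^{Sym}$ it acts as $-I$. So in your decomposition $\ket{\xi}\otimes\ket{\psi}^{\otimes t} = \alpha\,\ket{\psi}^{\otimes(t+1)} + \beta\,\ket{\psi^\perp}\otimes\ket{\psi}^{\otimes t}$, the first term picks up a sign: $R_{\Pi^{Sym}}\bigl(\alpha\ket{\psi}^{\otimes(t+1)}\bigr) = -\alpha\ket{\psi}^{\otimes(t+1)}$. Combined with the (correct) treatment of the second term, you get
\begin{equation*}
R_{\Pi^{Sym}}\bigl(\ket{\xi}\otimes\ket{\psi}^{\otimes t}\bigr) = \bigl(R_{\ket{\psi}}\ket{\xi}\bigr)\otimes\ket{\psi}^{\otimes t} \;-\; 2\beta\,\Pi^{Sym}\bigl(\ket{\psi^\perp}\otimes\ket{\psi}^{\otimes t}\bigr),
\end{equation*}
whose leading term is $(R_{\ket{\psi}}\ket{\xi})\otimes\ket{\psi}^{\otimes t}$, not $\ket{\xi}\otimes\ket{\psi}^{\otimes t}$. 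This shows $Q$ approximates the \emph{reflection} channel, matching the proposition once the notation is read correctly. Your two sign errors cancel, which is why you still land on $2/\sqrt{t+1}$, but as written the proof is not correct reasoning.

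The remaining ingredients in your proposal are fine: $\lVert\Pi^{Sym}(\ket{\psi^\perp}\otimes\ket{\psi}^{\otimes t})\rVert = 1/\sqrt{t+1}$ is the right overlap, and your ancilla argument for the diamond norm is correct in spirit --- since $R_{\Pi^{Sym}}$ acts only on the $t+1$ copy registers, entangling the input with a reference $R$ gives a total error vector $\sum_i c_i\ket{i}_R\otimes E_i$ with the $\ket{i}_R$ orthonormal, so its norm is $\bigl(\sum_i |c_i|^2\|E_i\|^2\bigr)^{1/2} \le 2/\sqrt{t+1}$, uniformly over the reference. Fix the two sign issues and the argument goes through.
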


We define $C^{\SSMM}$ to simply run the algorithm from the lemma on $n$ states sampled from $\SSMM$.

\begin{theorem}\label{thm:ssmmtoswap}
    For all $T_1,T_2$, $t$. $C^{\SSMM}$ is $\left(1,T_1,T_2,\frac{2T_2}{\sqrt{t+1}}\right)$ reset indifferentiable from $\Swap$.
\end{theorem}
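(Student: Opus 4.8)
The plan is to exhibit a \emph{stateless} simulator $\Sim^{\Swap}$ that, given oracle access to $\Swap_n$ (which swaps $\ket{0}$ and $\ket{\phi_n}$), produces on each query one fresh copy of $\ket{\phi_n-}$, and then argue that the real world $\A^{\SSMM,C^{\SSMM}}$ and the ideal world $\A^{\Sim^{\Swap},\Swap}$ are $\frac{2T_2}{\sqrt{t+1}}$-close. The simulator direction is the easy half: the observation in the text is that $\Swap_n$ on input $\ket{-} = \frac{1}{\sqrt 2}(\ket 0 - \ket 1)$ (using the embedding $\ket 0 \mapsto \ket 0\ket 0$, $\ket{\phi_n}\mapsto \ket 1 \ket{\phi_n}$ from the remark) yields $\frac{1}{\sqrt 2}(\ket{\phi_n} - \ket 0) = \ket{\phi_n-}$, so $\Sim$ simply queries $\Swap$ once per query it receives, is stateless, makes exactly $1$ query per query, and is \emph{exactly} correct with zero error. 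So in the ideal world, the adversary's "$\SSMM$-oracle" is literally $\SSMM_n$, and its other oracle is literally $\Swap_n$; there is no approximation on this side.

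The real-world side is where the error enters. Here the adversary's second oracle is $C^{\SSMM}_n$, which by definition runs the Ji--Liu--Song reflection-simulation algorithm (Proposition~\ref{prop:reflsim}) on $t$ states freshly drawn from $\SSMM_n$, i.e.\ on $t$ copies of $\ket{\phi_n-}$. The key point is that $C^{\SSMM}_n$ is built to approximate $\Swap_n$: swapping $\ket 0 \leftrightarrow \ket{\phi_n}$ while fixing everything orthogonal is exactly the reflection $R_{\ket{\phi_n-}} = I - 2\ketbra{\phi_n-}$ composed appropriately — more precisely, on the two-dimensional span of $\ket 0, \ket{\phi_n}$, the swap equals the reflection about $\ket{\phi_n-}$ up to a global phase on that subspace (and identity outside), so applying $R_{\Pi^{Sym}}$ to the input tensored with $t$ copies of $\ket{\phi_n-}$ gives, by Proposition~\ref{prop:reflsim}, a channel $\frac{2}{\sqrt{t+1}}$-close in diamond norm to $R_{\ket{\phi_n-}}\cdot R_{\ket{\phi_n-}}$, hence to the identity — wait, that's not quite $\Swap$; I should instead set things up so $Q \approx$ the single reflection, giving the swap. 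The cleanest route: observe $\Swap_n$ restricted to the relevant subspace is a reflection (it's an involution with a $(-1)$-eigenspace), identify that reflection as reflection about the right vector derived from $\ket{\phi_n-}$, and invoke the JLS bound for a \emph{single} reflection simulation (the version of Proposition~\ref{prop:reflsim} that yields $R_{\ket\psi}$ rather than $R_{\ket\psi}\cdot R_{\ket\psi}$; this is the standard form, and the stated proposition's double-reflection is just one packaging). Each invocation of $C^{\SSMM}_n$ then deviates from $\Swap_n$ by at most $\frac{2}{\sqrt{t+1}}$ in diamond norm.

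Given that per-query bound, I finish by a hybrid argument over the at most $T_2$ queries the adversary makes to its second oracle: replacing $\Swap_n$ by $C^{\SSMM}_n$ one query at a time, each step costs at most $\frac{2}{\sqrt{t+1}}$ in trace distance of the global state (diamond norm composes under tensoring with the adversary's workspace and sequential composition), so the total distinguishing advantage is at most $T_2 \cdot \frac{2}{\sqrt{t+1}} = \frac{2T_2}{\sqrt{t+1}}$. The queries to the first oracle ($\SSMM_n$ in both worlds) are handled identically and contribute nothing, which is why $T_1$ does not appear in the error and only $T_{\Sim}=1$ shows up. Since $\Sim$ is stateless, this is \emph{reset} indifferentiability. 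The main obstacle I anticipate is purely bookkeeping: correctly matching the global phases and the exact vector about which one reflects when passing between "$\Swap$ on the $2$-dimensional span" and "reflection about $\ket{\phi_n-}$" under the $\ket0\mapsto\ket0\ket0$, $\ket{\phi_n}\mapsto\ket1\ket{\phi_n}$ embedding, and making sure the $t$ ancilla copies consumed by each $C^{\SSMM}_n$ call are genuinely fresh so that the diamond-norm bound (a statement about a \emph{single} use of the channel) applies unmodified to each of the $T_2$ calls.
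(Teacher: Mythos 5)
Your overall strategy matches the paper's: exhibit a stateless simulator that makes exactly one $\Swap$ query to produce $\ket{\phi_n-}$, observe that on the $\SSMM$-side the simulation is exact, and put all the error on the construction side by a hybrid over the $T_2$ queries to $C^{\SSMM}$ using the Ji--Liu--Song reflection-simulation bound (noting that $\Swap = R_{\ket{\phi_n-}}$ as an operator, so the JLS approximation of the reflection directly approximates $\Swap$). That is exactly the paper's argument, and your handling of the odd phrasing ``$R_{\ket\psi}\cdot R_{\ket\psi}$'' in Proposition~\ref{prop:reflsim} --- reading it as the CP-map $\rho\mapsto R_{\ket\psi}\,\rho\, R_{\ket\psi}$ rather than the unitary square $R^2 = I$ --- is the same reading the paper takes.

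The one genuine gap is your simulator. Querying $\Swap$ directly on $\ket{-}$ does \emph{not} yield $\ket{\phi_n-}$. Under the embedding $\ket{0}\mapsto\ket{0}\ket{0^n}$, $\ket{\phi_n}\mapsto\ket{1}\ket{\phi_n}$, the natural input $\ket{-}\ket{0^n} = \tfrac{1}{\sqrt 2}\bigl(\ket{0}\ket{0^n} - \ket{1}\ket{0^n}\bigr)$ has its first branch swapped to $\ket{1}\ket{\phi_n}$, but the second branch $\ket{1}\ket{0^n}$ is \emph{not} the encoded $\ket{\phi_n}$, so $\Swap$ does not carry it to the encoded $\ket{0}$; when $\langle 0^n|\phi_n\rangle = 0$ it is fixed, and in general it is mapped to a cross-term mess. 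Either way the output is $\tfrac{1}{\sqrt2}(\ket{1}\ket{\phi_n} - \ket{1}\ket{0^n} + \cdots)$, not $\tfrac{1}{\sqrt2}(\ket{0}\ket{0^n} - \ket{1}\ket{\phi_n})$, and it is not even in the span of the encoded $\ket 0$ and $\ket{\phi_n}$. The paper's simulator repairs this by applying $\Swap$ \emph{controlled} on an ancilla in $\ket{-}$, so that the branches are $\ket{0}\ket{0}$ and $\ket{1}\ket{\phi_n}$, and then applying a controlled-$X$ to the ancilla (controlled on the target being $\ket 0$) to disentangle it, leaving exactly $\ket{\phi_n-}$ on the target register. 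Your per-query count of one, statelessness, and exactness all survive; you just need this controlled-$\Swap$ plus Pauli correction rather than a direct query on $\ket{-}$. Everything downstream of the simulator in your proof is sound.
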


\begin{proof}
    The simulator $\Sim_n^{\Swap}$ here will be defined as follows:
    \begin{enumerate}
        \item Denote $\ket{\phi_n}$ to be the state which $\Swap$ swaps with $0$.
        \item Begin with the state $\ket{-}\ket{0}$.
        \item Apply $\Swap$ controlled on the first qubit to the second qubit.\\
        This produces the state $\frac{1}{\sqrt{2}}\ket{0}\ket{0}-\frac{1}{\sqrt{2}}\ket{1}\ket{\phi_n}$.
        \item Apply $X$ to the first qubit controlled on the second qubit being $0$.\\
        This produces the state $\ket{1}\left(\frac{1}{\sqrt{2}}\ket{0}\ket{0}-\frac{1}{\sqrt{2}}\ket{1}\ket{\phi_n}\right)$.
        \item Output the second register. Up to a global phase this state is exactly $\ket{\phi_n-}$.
    \end{enumerate}
    In particular, this is an exact, stateless simulator. By making a single oracle query to $\Swap$, it produces the state $\ket{\phi_n-}$.

    Applying induction to ~\Cref{prop:reflsim} shows that for all adversaries $\A$ making at most $T_2$ queries to the second oracle,
    $$\abs{\Pr[\A^{\SSMM, C^{\SSMM}} \to 1] - \Pr[\A^{\SSMM, \Swap} \to 1]} \leq \frac{2T_2}{\sqrt{t+1}}$$

    But $\Sim^{\Swap}$ and $\SSMM$ are identical, and so we have
    $$\abs{\Pr[\A^{\SSMM, C^{\SSMM}} \to 1] - \Pr[\A^{\Sim^{\Swap}, \Swap} \to 1]} \leq \frac{2T_2}{\sqrt{t+1}}$$
\end{proof}

\begin{corollary}\label{cor:simswapsmmm}
    Since the simulator is stateless, the converse also holds. That is, for all $t,T_1,T_2$, $\Sim^{\Swap}$ is $\left(t,T_1,T_2,\frac{2T_1}{\sqrt{t+1}}\right)$ reset indifferentiable from $\SSMM$.
\end{corollary}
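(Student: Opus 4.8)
The plan is to recognize the ``converse'' as a \emph{duality} statement. In the proof of \Cref{thm:ssmmtoswap} the simulator $\Sim^{\Swap}$ is \emph{exact} and \emph{stateless}: with a single query to $\Swap$ it produces the state $\ket{\phi_n-}$ perfectly, i.e.\ it is literally a (reset) construction of $\SSMM$ from $\Swap$. Symmetrically, the construction $C^{\SSMM}$ from that theorem is a stateless construction of $\Swap$ from $\SSMM$ that uses $t$ copies of $\ket{\phi_n-}$ to run the reflection-simulation of \Cref{prop:reflsim}, hence makes at most $t$ queries to $\SSMM$ per invocation. So to witness that $\Sim^{\Swap}$ is $\left(t,T_1,T_2,\frac{2T_1}{\sqrt{t+1}}\right)$ reset indifferentiable from $\SSMM$, I take the required $\Swap$-simulator to be exactly $C^{\SSMM}$; then the ideal world of the converse game is $\A^{C^{\SSMM},\SSMM}$ and the real world is $\A^{\Swap,\Sim^{\Swap}}$.

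The only remaining step is to bound $\left|\Pr[\A^{\Swap,\Sim^{\Swap}}\to 1]-\Pr[\A^{C^{\SSMM},\SSMM}\to 1]\right|$ for an arbitrary adversary $\A$ making at most $T_1$ queries to its first oracle and at most $T_2$ to its second. Let $\B$ be $\A$ with its two oracle interfaces interchanged, $\B^{X,Y}:=\A^{Y,X}$; then $\B$ makes at most $T_2$ queries to its first oracle and at most $T_1$ to its second. Applying \Cref{thm:ssmmtoswap} to $\B$ (so that the error is governed by $\B$'s at most $T_1$ calls to its second oracle) yields $\left|\Pr[\B^{\SSMM,C^{\SSMM}}\to 1]-\Pr[\B^{\Sim^{\Swap},\Swap}\to 1]\right|\le \frac{2T_1}{\sqrt{t+1}}$. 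Unwinding the oracle swap gives $\B^{\SSMM,C^{\SSMM}}=\A^{C^{\SSMM},\SSMM}$ and $\B^{\Sim^{\Swap},\Swap}=\A^{\Swap,\Sim^{\Swap}}$, which is exactly the inequality required by the definition of reset indifferentiability, so \Cref{cor:simswapsmmm} follows.

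I do not expect a genuine obstacle: this is essentially bookkeeping on top of \Cref{thm:ssmmtoswap}. The two points needing care are (i) that $\Sim^{\Swap}$ reproduces $\SSMM$ \emph{exactly}, so that the real/ideal worlds of the two indifferentiability games coincide under the oracle swap with no additional error term (already verified in the proof of \Cref{thm:ssmmtoswap}), and (ii) tracking which of $T_1,T_2$ controls the approximation error --- it is the number of calls to the reflection-simulating construction, which in the converse game is the \emph{first} oracle, giving $\frac{2T_1}{\sqrt{t+1}}$ rather than $\frac{2T_2}{\sqrt{t+1}}$. Statelessness of both $\Sim^{\Swap}$ and $C^{\SSMM}$ is precisely what lets the conclusion be stated as \emph{reset} indifferentiability.
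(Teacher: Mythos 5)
Your proposal is correct and matches the intended argument: the paper's proof of the corollary is essentially the one-line observation that statelessness and exactness of $\Sim^{\Swap}$ let construction and simulator trade places, and you supply the precise bookkeeping that makes this precise. In particular, you correctly identify $C^{\SSMM}$ as the witnessing $\Swap$-simulator (with $T_{\Sim}=t$ per-invocation queries), correctly unwind the oracle swap $\B^{X,Y}=\A^{Y,X}$ to match the real and ideal worlds of the two indifferentiability games, and correctly track that it is the adversary's calls to the reflection-simulating oracle (the first oracle in the converse game, bounded by $T_1$) that drives the error, giving $\frac{2T_1}{\sqrt{t+1}}$.
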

\section{Barriers to simulating $\ket{\phi-}$ using $\ket{\phi}$.}

We first observe that given any polynomial number of copies of $\ket{\phi}$, it is impossible to, in a black-box manner, produce $\ket{\phi-}$. This follows essentially immediately from our~\Cref{lem:binom}. In particular,~\Cref{lem:binom} shows that the mixed state
$$\rho = \ketbra{\phi}^{\otimes t_1}\otimes \E_{c\gets B(t_2,1/2)}\left[\ketbra{\Rep{t_2,c,\phi}}\right]$$
is indistinguishable from the pure state
$$\rho' = \ketbra{\phi}^{\otimes t_1}\otimes \ketbra{\phi-}^{\otimes t_2}$$

But note that if you could produce $\ket{\phi-}$ from $\ket{\phi}^{\otimes t_1}$, then you could distinguish these two states by doing a swap test between the produced state $\ket{\phi-}$ and the last register.

And so it is impossible to directly produce $\ket{\phi-}$ from $\ket{\phi}$. This does not rule out the ability to produce $\ket{\psi-}$ for some distinct but identically distributed state $\ket{\psi}$.

We also observe that if we consider $\D$ the distribution over states to be something other than the Haar distribution, there are some tasks which are possible with oracle access to $\ket{\phi-}$ but not with $\ket{\phi}$. In particular, let $\D$ be the distribution $e^{i\theta} \ket{1}$ for a random phase $\theta$. Given $t$ copies of $\ket{\phi}^{\otimes t} = (e^{i\theta} \ket{1})^{\otimes t}$, it is impossible to compute $\theta$ since $\theta$ appears only in the global phase $e^{ti\theta}$. On the other hand, given $t$ copies of $$\ket{\phi-}^{\otimes t} = \left(\frac{1}{\sqrt{2}}\ket{0} - \frac{e^{i\theta}}{\sqrt{2}}\ket{1}\right),$$
it is possible to approximate $\theta$ by guessing the value of $\theta$ and performing swap tests.

Thus, if we consider the game where two parties must agree on a shared random bit with no communication, it is possible for two parties with oracle access to $\ket{\phi-}$ to win this game, but not for two parties with oracle access to $\ket{\phi}$.

Overall, while we do not definitively rule out any indifferentiable construction of $\SSMM$ from $\SSM$ for Haar random states, we do rule out most obvious approaches. Any construction must explicitly use the fact that the states are Haar random, and must transform the given state in some non-trivial way. 
\section{Simulating SWAP in the LOCC model}\label{sec:locc}

In this section, we show how to simulate the $\SSMM$ model using the $\SSM$ model against a restricted class of adversaries for the specific case where $\D$ is the Haar distribution. Our main idea behind building an "indifferentiable" version of $\SSMM$ model using the $\SSM$ model is to reverse the roles of simulator and construction in~\Cref{thm:ssmmtossmindiff}. Note that this leaves us with a \textit{stateful} construction, which is typically not allowed. However, by restricting the indifferentiability adversary to be LOCC, we show that a composition theorem holds for this construction whenever the cryptographic game is itself LOCC.

In particular, let $C^{\SSM}$ be the simulator from~\Cref{cons:statesim}. As a reminder, each query to $C^{\SSM}$ will output a fresh register of
$$\E_{\substack{c\gets B(t,1/2)}}\Big[\ketbra{\Rep{t,c,\phi}}\Big]$$

We proceed to prove the following main theorem
\begin{theorem}\label{cor:finalloccindiff}
    For all $\ell_\A,T_1,T_2=\poly(n)$, $C^{\SSM}$ simulating $t$ copies is \\
    $(\ell_\A, n, T_1, T_2, O(\ell_\A (T_1+T_2)^5/\sqrt{2^n}))$-LOCC indifferentiable from $\SSMM$ as long as $T_2 \leq t$.
\end{theorem}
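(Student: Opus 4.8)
First I would fix the ideal-world simulator $\Sim^{\SSMM}$: take it to be the ``reverse'' of $C^{\SSM}$, namely a fresh copy of the post-selection procedure of \Cref{cons:ssmmtossm}. To answer a query --- i.e.\ to output a copy of $\ket{\phi_n}$ --- it draws a fresh $\ket{\phi_n-}$ from $\SSMM$, applies $\{\ketbra 0, I-\ketbra 0\}$, outputs the residual state on the second outcome and otherwise retries, aborting after $\sim n$ attempts. Since $\ket 0\perp\ket{\phi_n}$ this residual state equals \emph{exactly} $-\ket{\phi_n}$; as the Haar distribution is balanced, $-\ket{\phi_n}$ is distributed identically to $\ket{\phi_n}$ and is handed out as an unentangled pure copy, so the only cost of substituting $\Sim^{\SSMM}$ for genuine $\SSM$ access is its abort probability, $2^{-\Omega(n)}$ per invocation. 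This $\Sim^{\SSMM}$ is stateless and uses $\leq n$ queries per invocation, matching $T_{\Sim}=n$. The plan is then to interpolate from the real world to the ideal world by (a) replacing the $\ell_\A$ private stateful constructions $C_i^{\SSM}$ with genuine shared $\SSMM$ access, one party at a time, and (b) finally replacing the remaining direct $\SSM$ access with $\Sim^{\SSMM}$; step (b) costs at most $\ell_\A T_1\cdot 2^{-\Omega(n)}$ by the above.

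For part (a), fix an LOCC adversary $\A=(\A_1,\dots,\A_{\ell_\A})$ making at most $T_1$ and $T_2$ queries to its two oracles in total, and let $m_i$ be the number of second-oracle queries inside $\A_i$, so $\sum_i m_i\leq T_2\leq t$. In the real world $C_i^{\SSM}$ merely hands out registers of the fixed pool $\ket{\Rep{t,c_i,\phi_n}}$ with $c_i\gets B(t,1/2)$ (tracing out $C_i$'s internal counter register produces the classical mixture over $c_i$), so up to tracing out internal registers $\A$ interacts via LOCC with the state $\E_{\phi}\big[\ketbra{\phi_n}^{\otimes(\leq T_1)}\otimes\bigotimes_{i}\E_{c_i}[\ketbra{\Rep{m_i,c_i,\phi}}]\big]$. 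In hybrid $H_k$ ($0\leq k\leq\ell_\A$) we instead give parties $1,\dots,k$ direct access to the shared $\SSMM$; their combined reads form a product state $\ket{\phi_n-}^{\otimes M_{\leq k}}$, $M_{\leq k}=m_1+\cdots+m_k$, which by \Cref{lem:binom} --- whose proof uses only the random-phase average and hence applies verbatim in the presence of the remaining, $\phi$-dependent but phase-invariant, registers --- collapses to $\E_{\phi,c}[\ketbra{\Rep{M_{\leq k},c,\phi}}]$, distributed arbitrarily among those parties by permutation-symmetry of $\Rep$. Thus $H_0$ is the real world and $H_{\ell_\A}=\A^{((\SSM,\SSMM),\dots,(\SSM,\SSMM))}$, from which one more step reaches the ideal world. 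Passing from $H_k$ to $H_{k+1}$ is exactly the replacement
\[
\E_{\phi,c,c'}\big[\ketbra{\Rep{M_{\leq k},c,\phi}}\otimes\ketbra{\Rep{m_{k+1},c',\phi}}\big]\ \longmapsto\ \E_{\phi,c}\big[\ketbra{\Rep{M_{\leq k}+m_{k+1},c,\phi}}\big]
\]
(tensored with the unchanged $\ket{\phi_n}^{\otimes(\leq T_1)}$ and the still-private pools of parties $>k+1$). By \Cref{lem:binom} the two sides have the same marginal on every block; they differ only in the ``cross-sector'' coherences between weightings $(c,c')$ and $(\tilde c,\tilde c')$ of the two pieces sharing a common total $c+c'=\tilde c+\tilde c'$, which survive the phase average because they carry phase-charge $0$.

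\textbf{The main obstacle} is to show that an LOCC adversary distinguishes $H_k$ from $H_{k+1}$ with advantage at most $O((T_1+T_2)^5/\sqrt{2^n})$, \emph{independently of $\ell_\A$}, so that summing the $\ell_\A$ peeling steps and adding the exponentially small step~(b) gives the stated bound. My approach: it suffices to bound $\abs{\mathrm{Tr}[(\bigotimes_j E_j)\,\Delta]}$ over product POVM elements $\bigotimes_j E_j$, where $\Delta$ is the Haar-averaged cross-sector difference operator above. By \Cref{prop:reflsim}, a party holding only $\leq T_1$ copies of $\ket{\phi_n}$ can implement $R_{\ket{\phi_n}}$ --- and hence any operation supported on the qubit $\mathrm{span}\{\ket 0,\ket{\phi_n}\}$ inside each of its registers --- only up to error $O(1/\sqrt{T_1})$; so, up to $\poly(T_1,T_2)/\sqrt{T_1}$ loss, one may take each local $E_j$ to be ``rotation covariant'' (commuting with $\phi_n\mapsto e^{i\theta}\phi_n$). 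Such measurements have $\phi_n$-independent outcome statistics, and --- the delicate point --- the cross-sector matrix elements of $\Delta$ against a product of rotation-covariant $E_j$'s, after integrating over the \emph{direction} of $\phi_n$, have magnitude only $\poly(T_1+T_2)/\sqrt{2^n}$: for a Haar-random $\phi_n$ the $\ket{\Set{S,\phi_n}}$ form an orthonormal frame that the Haar direction-average rotates enough to kill all but a $2^{-n/2}$ fraction of the coherence. Crucially, only party $k+1$'s side of $\Delta$ is probed by a near-rotation-covariant measurement, while the co-measurement on ``everything else'' enters only through its operator norm $\leq 1$; this is what keeps the per-step bound free of $\ell_\A$. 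Balancing the $1/\sqrt{T_1}$ reflection error against the $1/\sqrt{2^n}$ signal (by using enough copies) and carrying out the combinatorial sums over sectors produces the $(T_1+T_2)^5$ factor. I expect this quantification --- how much usable knowledge of $\phi_n$ a bounded-query LOCC party can extract, and how it limits the visible cross-sector coherence --- to be the technical heart of the proof; everything else is bookkeeping plus repeated appeals to \Cref{lem:binom}, \Cref{cons:ssmmtossm}, and \Cref{prop:reflsim}.
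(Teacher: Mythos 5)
Your high-level structure matches the paper's: same simulator (the post-selection procedure from \Cref{cons:ssmmtossm} with roughly $n$ retries), same exponentially-small contribution from the abort step, and a hybrid argument over the $\ell_\A$ parties that peels off private copies of the construction one party at a time, reducing via \Cref{lem:binom} to an LOCC-indistinguishability statement between a ``joint pool'' and a ``split pool'' mixture of $\Rep$-states. Up to this point you have accurately reconstructed the scaffolding, including the observation that the per-step bound must be free of $\ell_\A$.

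However, the proof of the per-step bound --- which is where all the difficulty lives --- has a gap that I don't see how to close. You propose to reduce to product POVM elements $\bigotimes_j E_j$ and then argue, via \Cref{prop:reflsim}, that each $E_j$ may be taken rotation-covariant ``up to $\poly(T_1,T_2)/\sqrt{T_1}$ loss.'' But $\Cref{prop:reflsim}$ is a \emph{constructive} statement (you \emph{can} implement $R_{\ket\phi}$ from $t$ copies up to error $2/\sqrt{t+1}$); it is not a no-go theorem, and even if one could extract the intended converse, the resulting error term is $\Theta(1/\sqrt{T_1})$. Since $T_1 = \poly(n)$ is fixed, this is an inverse-polynomial quantity, whereas the target bound is $O((T_1+T_2)^5/\sqrt{2^n})$, i.e.\ inverse-\emph{exponential}. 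There is no way to ``balance'' or absorb an inverse-polynomial slack into an inverse-exponential budget, so the rotation-covariance reduction, as described, already overshoots the entire allowance. (Restricting to product POVMs is also not obviously lossless for LOCC protocols, which are adaptive; bounding against separable or PPT measurements is the cleaner route.)

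The paper sidesteps the measurement-analysis entirely: it invokes the observation from~\cite{TCC:AnaGulLin24} (\Cref{lem:loccbound}) that LOCC distinguishing advantage between two bipartite states is bounded by $\tfrac12\norm{\rho^{\Gamma} - \sigma^{\Gamma}}_1$, the trace norm of the partial transpose of their difference. The work then becomes purely algebraic: expand $\rho^\Gamma - \sigma^\Gamma$ over type/sector decompositions (\Cref{lem:zerosplit,lem:zerosplit2}), observe that only cross-sector terms ($c_1 \neq c_2$) survive, dominate each piece by a block-ones matrix whose trace norm is $\sqrt{MN}$ (\Cref{lem:tracenorm}), and sum. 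That is how the $\sqrt{N}$ in the denominator and the $(T_1+T_2)^5$ in the numerator actually appear. If you want to complete your proof, I'd suggest dropping the rotation-covariance idea and instead importing the partial-transpose bound, at which point your hybrid structure should carry through essentially as written.
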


The key idea behind this proof will rely on a key lemma (\Cref{lem:keylemma}), which informally says that any pair of LOCC adversaries cannot tell if they are given two copies of this construction or one shared copy. The rest of the proof proceeds by a hybrid argument, given in full in~\Cref{sec:loccfinal}.

To show the key lemma, we rely on the technique of~\cite{TCC:AnaGulLin24}, which observes that two states are LOCC indistinguishable if their partial transposes are close in trace distance. We then explicitly compute the difference in the partial transposes of the relevant density matrices. Like in~\cite{TCC:AnaGulLin24}, bounding this value can be reduced to bounding the spectrum of a graph with a particular structure. We then bound this value by relying on techniques for analyzing Kneser graphs.
\subsection{Representing Haar random states}

We give a number of definitions regarding type vectors, a notion closely connected to representations of Haar random states.

A $t$-copy type over $[N]$ is a multiset $T$ over elements in $[N]$. A vector $v \in [N]^t$ has type $T$ if for all $x \in [N]$, $x$ appears $k$ times in $v$ if and only if $x$ has multiplicity $k$ in $T$. We write $type(v) = T$. We define a type vector
    $$\ket{T} \propto \sum_{type(v) = T}\ket{v}$$

    In the spirit of~\cite{TCC:AnaGulLin24}, we say a type $T$ is collision free if the multiplicity of all of its elements is at most $1$. Then
    $$\ket{T} = \frac{1}{\sqrt{t!}} \sum_{type(v) = T}\ket{v}$$

    We denote $Ty(t,X)$ to be the set of $t$-copy types over $X$. We denote ${X \choose t}$ to be the set of $t$-copy collision-free types over $X$, i.e. the subsets of $X$ of size $t$.

    We will often consider types over $[N] \cup \{0\}$ with special significance added to the $0$ locations. We will often require that all $0$ locations appear in some subset of the registers.

    Let $T \subseteq [N]$ be any type. We define $T^{0(t)}$ to be the type defined by adding $0$s to $T$ until it reaches $t$ elements. Formally,
    $$T^{0(t)} \coloneq T \cup \underbrace{\{0,\dots,0\}}_{t-|T|}$$

    Let $A,B$ be any registers containing $|A|,|B|$ states of dimension $N+1$ respectively. Let $T$ be any type satisfying $|A| \leq |T| \leq |A| + |B|$. We will define
    $$\ket{\Zero(T,B)}_{AB} \propto \sum_{\substack{|v| = |A|+|B|\\type(v) = T^{0(|A|+|B|)}\\v_i \neq 0\text{ for }i\in A}} \ket{v}$$
    which will represent the type vector $T^{0(|A|+|B|)}$ restricted to all the $0$s appearing in the subregisters of $B$.

    For any collision-free subset $T\subseteq [N]$,
    $$\ket{\Zero(T,B)}_{AB} = \frac{1}{\sqrt{|T|!{t \choose |T|-|A|}}}\sum_{\substack{|v|=|A|+|B|\\type(v) = T^{0(|A|+|B|)}\\v_i \neq 0\text{ for }i\in A}} \ket{v}$$

    We also extend the requirement that all $0$s appear in some subset of the registers to the case where a certain amount of $0$s appear in one subset and a different amount in another. In particular, we define
    $$\ket{\Zero^2(T,B_1,B_2)^{b^f_1,b^f_2}}_{AB_1B_2} \propto \sum_{\substack{|v| = |A|+|B_1|+|B_2|\\type(v) = T^{0(|A|+|B_1|+|B_2|)}\\v_i \neq 0\text{ for }i\in A\\\#\text{non-}0s\text{ in }v_{B_1}=b_1^f\\\#\text{non-}0s\text{ in }v_{B_2}=b_2^f}} \ket{v}$$
    Note that for a collision-free type $T$,
    $$\ket{\Zero^2(T,B_1,B_2)^{b^f_1,b^f_2}}_{AB_1B_2} = \frac{1}{\sqrt{|T|!{b_1 \choose b_1^f}{b_2 \choose b_2^f}}}\sum_{\substack{|v| = |A|+|B_1|+|B_2|\\type(v) = T^{0(|A|+|B_1|+|B_2|)}\\v_i \neq 0\text{ for }i\in A\\\#\text{non-}0s\text{ in }v_{B_1}=b_1^f\\\#\text{non-}0s\text{ in }v_{B_2}=b_2^f}} \ket{v}$$

\begin{lemma}\label{lem:zerosplit}
    Let $A_1,A_2,B_1,B_2$ be registers containing $a_1,a_2,b_1,b_2$ states of dimension $N+1$ respectively. For all collision-free $T$,
    \begin{equation*}
        \begin{split}
            \ket{\Zero(T,(B_1,B_2))}_{A_1,A_2,B_1,B_2} \\
            = \sum_{\substack{X \subseteq T\\|X|\in [a_1,a_1+b_1]\\|T|-|X|\in [a_2,a_2+b_2]}} \sqrt{\alpha_{|X|}} \ket{\Zero(X,B_1)}_{A_1,B_1}\ket{\Zero(T\setminus X,B_2)}_{A_2,B_2}
        \end{split}
    \end{equation*}
    where
    $$\alpha_i = \frac{{b_1 \choose i-a_1}{b_2 \choose |T|-i-a_2}}{{b_1+b_2 \choose |T|-a_1-a_2}{|T| \choose i}}$$
\end{lemma}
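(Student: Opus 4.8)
The plan is to prove the identity by direct computation at the level of unnormalized vectors, then read off the normalization constant $\alpha_i$ from comparing norms. First I would work with the unnormalized type vectors $\ket{\Zero(T,(B_1,B_2))}$, i.e. the bare sums $\sum_v \ket{v}$ over basis vectors $v$ of length $a_1+a_2+b_1+b_2$ whose type is $T^{0(\cdot)}$ and whose $A_1,A_2$ positions are all nonzero. The key combinatorial observation is that such a $v$ is determined by (a) the choice of which elements of the collision-free multiset $T$ land in the first block $(A_1,B_1)$ versus the second block $(A_2,B_2)$ — call that subset $X\subseteq T$ — together with (b) an assignment of $v$ restricted to $(A_1,B_1)$ consistent with type $X^{0(a_1+b_1)}$ and nonzero on $A_1$, and (c) an analogous assignment on $(A_2,B_2)$ consistent with $(T\setminus X)^{0(a_2+b_2)}$ and nonzero on $A_2$. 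Since $T$ is collision-free, the only constraint linking the two blocks is that $X$ and $T\setminus X$ partition $T$; the $0$'s are interchangeable and get freely distributed within the allotted $B$-registers. Hence the unnormalized sum factors exactly as $\sum_{X} \ket{\Zero(X,B_1)}^{\mathrm{unnorm}}_{A_1B_1}\otimes\ket{\Zero(T\setminus X,B_2)}^{\mathrm{unnorm}}_{A_2B_2}$, with the range of $X$ restricted to those sizes for which each block can actually accommodate $X$ (at least $a_1$ nonzero elements, at most $a_1+b_1$, and symmetrically for the complement) — exactly the index set in the statement.

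Next I would restore normalizations. Using the explicit normalized forms recorded just before the lemma, $\ket{\Zero(T,B)}_{AB}$ carries the factor $1/\sqrt{|T|!\binom{t}{|T|-|A|}}$ (with $t=|B|$), so each of the three vectors in the identity has a known norm. Writing the claimed identity as $\ket{\Zero(T,(B_1,B_2))} = \sum_X \sqrt{\alpha_{|X|}}\,\ket{\Zero(X,B_1)}\ket{\Zero(T\setminus X,B_2)}$, I match coefficients of a fixed basis vector $\ket{v}$: on the left it is $1/\sqrt{|T|!\binom{b_1+b_2}{|T|-a_1-a_2}}$, and on the right only the single term with $X = \mathrm{type}(v_{A_1B_1})\cap T$ contributes, giving $\sqrt{\alpha_{|X|}}\cdot \frac{1}{\sqrt{|X|!\binom{b_1}{|X|-a_1}}}\cdot\frac{1}{\sqrt{(|T|-|X|)!\binom{b_2}{|T|-|X|-a_2}}}$. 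Solving for $\alpha_{|X|}$ and using $|X|!\,(|T|-|X|)! = |T|!/\binom{|T|}{|X|}$ yields precisely $\alpha_i = \binom{b_1}{i-a_1}\binom{b_2}{|T|-i-a_2}\big/\big(\binom{b_1+b_2}{|T|-a_1-a_2}\binom{|T|}{i}\big)$, as claimed. As a sanity check I would also verify $\sum_i \alpha_i\binom{|T|}{i} = 1$ via Vandermonde's identity, confirming the right-hand side is a unit vector and the coefficients form a genuine (Vandermonde/hypergeometric) distribution over $|X|$.

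The main obstacle is purely bookkeeping: making the "factorization at the unnormalized level" rigorous requires checking that the map $v \mapsto (X, v_{A_1B_1}, v_{A_2B_2})$ is a bijection between the index set of the left-hand sum and the disjoint union of the index sets of the right-hand products — in particular that every $0$-placement allowed on the left is reachable, and that no double-counting occurs because $T$ is collision-free (so $X$ is recovered unambiguously from $v$). Once that bijection is pinned down, both the vector identity and the constant $\alpha_i$ drop out of a one-line coefficient comparison; no diamond-norm or trace-distance machinery is needed here, as this lemma is a combinatorial identity used later in the LOCC analysis.
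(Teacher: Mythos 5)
Your proof is correct and follows essentially the same approach as the paper: expand the $\Zero$-states into sums over basis vectors, use collision-freeness of $T$ to get a clean bijection between basis vectors $v$ on the left and triples $(X, v|_{A_1B_1}, v|_{A_2B_2})$ on the right, and then match normalizations to pin down $\alpha_i$. The paper works forward from the right-hand side (verifying the given $\alpha_i$ yields the correct combined normalization $1/\sqrt{|T|!\binom{b_1+b_2}{|T|-a_1-a_2}}$) while you factor the left-hand side and solve for $\alpha_i$ by coefficient comparison, but these are the same calculation in opposite directions.
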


\begin{lemma}\label{lem:zerosplit2}
    Let $A_1,A_2,B_1,B_2$ be registers containing $a_1,a_2,b_1,b_2$ states of dimension $N+1$ respectively. For all collision-free $T$,
    \begin{equation}
        \begin{split}
            \ket{\Zero^2(T,B_1,B_2)^{b_1^f,b_2^f}}_{A_1,A_2,B_1,B_2}\\
            = \frac{1}{\sqrt{{|T|\choose a_1 + b_1^f}}}\sum_{\substack{X \subseteq T\\|X|=a_1+b_1^f}} \ket{\Zero(X,B_1)}_{A_1 B_1}\ket{\Zero(T\setminus X,B_2)}_{A_2 B_2}
        \end{split}
    \end{equation}
\end{lemma}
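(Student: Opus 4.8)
The plan is to prove the identity by comparing both sides as (normalized) uniform superpositions over explicitly described sets of computational basis vectors, and then checking that the two normalization constants agree via a one-line binomial identity.

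First I would unpack the left-hand side: by definition $\ket{\Zero^2(T,B_1,B_2)^{b_1^f,b_2^f}}_{A_1,A_2,B_1,B_2}$ is, up to normalization, the uniform superposition over the set $\mathcal{V}$ of all vectors $v$ of length $a_1+a_2+b_1+b_2$ with $type(v)=T^{0(\cdot)}$, with all $A_1,A_2$ coordinates non-zero, and with exactly $b_1^f$ non-zero coordinates in $B_1$ and $b_2^f$ in $B_2$. Since $T$ is collision-free, each value of $T$ occurs exactly once, so $\mathcal{V}$ is non-empty only if $|T| = a_1+a_2+b_1^f+b_2^f$ and $b_1^f\in[0,b_1]$, $b_2^f\in[0,b_2]$; outside this range both sides are the zero vector (the right-hand side being an empty sum), so I may assume all quantities lie in their valid ranges. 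From the collision-free closed form recorded in the excerpt, $|\mathcal{V}| = |T|!\binom{b_1}{b_1^f}\binom{b_2}{b_2^f}$.

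The key combinatorial step is to partition $\mathcal{V}$ according to the set $X\subseteq T$ of values occurring among the $A_1$ and $B_1$ coordinates of $v$. Because $A_1$ contributes its $a_1$ non-zero values and $B_1$ contributes $b_1^f$ more, and $T$ is collision-free, necessarily $|X| = a_1+b_1^f$, and the remaining values $T\setminus X$ (of size $a_2+b_2^f$) occur among the $A_2,B_2$ coordinates. For a fixed such $X$, the set of $v\in\mathcal{V}$ with that value-set is exactly the Cartesian product of the support of $\ket{\Zero(X,B_1)}_{A_1B_1}$ with the support of $\ket{\Zero(T\setminus X,B_2)}_{A_2B_2}$ — this is immediate from the definitions, since $\ket{\Zero(X,B_1)}_{A_1B_1}$ is by construction the uniform superposition over arrangements of $X^{0(a_1+b_1)}$ with non-zero $A_1$ coordinates (which forces exactly $|X|-a_1 = b_1^f$ non-zeros in $B_1$), and symmetrically for the second factor. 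Checking that concatenation identifies such a pair $(v_1,v_2)$ with an element of $\mathcal{V}$ and vice versa is routine bookkeeping of the zero counts. Hence $\mathcal{V}$ is the disjoint union over $X\subseteq T$ with $|X|=a_1+b_1^f$ of these products, so $\sum_{v\in\mathcal{V}}\ket{v} = \sum_{X}\big(\sum_{v_1}\ket{v_1}\big)\otimes\big(\sum_{v_2}\ket{v_2}\big)$.

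Finally I would reconcile normalizations. Writing $M_1 = (a_1+b_1^f)!\binom{b_1}{b_1^f}$ and $M_2 = (a_2+b_2^f)!\binom{b_2}{b_2^f}$ for the support sizes of the two factors (again via the collision-free closed form), the right-hand side equals $\frac{1}{\sqrt{\binom{|T|}{a_1+b_1^f}\,M_1M_2}}\sum_{v\in\mathcal{V}}\ket{v}$, and the identity $\binom{|T|}{a_1+b_1^f}(a_1+b_1^f)!(a_2+b_2^f)! = |T|!$ gives $\binom{|T|}{a_1+b_1^f}M_1M_2 = |T|!\binom{b_1}{b_1^f}\binom{b_2}{b_2^f} = |\mathcal{V}|$, so the prefactor is exactly $1/\sqrt{|\mathcal{V}|}$, matching the left-hand side. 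The only real work is the zero-count bookkeeping in the partition step and handling the degenerate ranges; the normalization computation is a single binomial identity. (One could alternatively try to derive this as a specialization of Lemma~\ref{lem:zerosplit} by restricting to fixed $b_1^f,b_2^f$, but the direct basis-set argument is cleaner and self-contained.)
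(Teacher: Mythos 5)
Your proposal is correct and follows essentially the same route as the paper's proof in Appendix~B: both expand the right-hand side via the collision-free closed form for $\ket{\Zero(\cdot,\cdot)}$, observe that summing over $X\subseteq T$ with $|X|=a_1+b_1^f$ reproduces exactly the basis-vector set of $\ket{\Zero^2(T,B_1,B_2)^{b_1^f,b_2^f}}$, and reconcile the normalization constant via the identity $\binom{|T|}{a_1+b_1^f}(a_1+b_1^f)!(a_2+b_2^f)! = |T|!$. Your write-up is somewhat more explicit about the disjoint-partition bookkeeping and the degenerate ranges, but the substance is the same.
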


The proofs of these two lemmas are deferred to~\Cref{sec:lemmas}.

\begin{theorem}\label{lem:typevector}
    Let $t,n\in \N$. Then,
    $$\E_{\ket{\phi} \gets \mu_{n}}\left[\ketbra{\phi}^{\otimes t}\right] = \E_{T \gets Ty(t, [2^n])}\left[\ketbra{T}\right]$$
\end{theorem}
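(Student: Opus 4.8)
The plan is to show that both sides of the claimed identity equal the maximally mixed state on the symmetric subspace $\vee^t\mathbb{C}^{N}$, where $N = 2^n$.

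First I would verify that the type vectors $\{\ket{T} : T \in Ty(t,[N])\}$ form an orthonormal basis of $\vee^t\mathbb{C}^{N}$. Orthonormality is immediate: for distinct types the supporting sets $\{v : type(v)=T\}$ are disjoint, so distinct $\ket{T}$ are orthogonal, and the proportionality constant in the definition of $\ket{T}$ is exactly what makes $\|\ket{T}\| = 1$ (it is $1/\sqrt{m_T}$, with $m_T$ the number of $v$ of type $T$). Each $\ket{T}$ lies in $\vee^t\mathbb{C}^N$ since it is invariant under permuting the $t$ registers, and conversely any symmetric vector assigns equal amplitude to computational basis vectors of the same type, hence is a combination of the $\ket{T}$. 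Since the number of $t$-copy types over $[N]$ is the number of size-$t$ multisets from $[N]$, namely $\binom{N+t-1}{t} = \dim\vee^t\mathbb{C}^N$, we get
$$\E_{T\gets Ty(t,[N])}\!\left[\ketbra{T}\right] = \frac{1}{\binom{N+t-1}{t}}\sum_{T\in Ty(t,[N])}\ketbra{T} = \frac{\Pi^{Sym}}{\binom{N+t-1}{t}},$$
where $\Pi^{Sym}$ is the projector onto $\vee^t\mathbb{C}^N$.

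Next I would handle the Haar side. The operator $\rho_{Haar} := \E_{\ket{\phi}\gets\mu_n}[\ketbra{\phi}^{\otimes t}]$ is supported on $\vee^t\mathbb{C}^N$ because each $\ket{\phi}^{\otimes t}$ is, and by unitary invariance of the Haar measure $U^{\otimes t}\rho_{Haar}(U^{\otimes t})^\dagger = \rho_{Haar}$ for every unitary $U$ on $\mathbb{C}^N$. Since $\vee^t\mathbb{C}^N$ is an irreducible representation of the unitary group, Schur's lemma forces $\rho_{Haar}$ to be a scalar multiple of $\Pi^{Sym}$; taking the trace, $\mathrm{tr}(\rho_{Haar}) = \E[\,\mathrm{tr}(\ketbra{\phi}^{\otimes t})\,] = 1$, so $\rho_{Haar} = \Pi^{Sym}/\binom{N+t-1}{t}$, matching the previous display and proving the theorem. (If one prefers to avoid invoking irreducibility, one can instead compute matrix entries: a random-phase argument identical to the one in the proof of~\Cref{lem:binom} shows $\langle v|\rho_{Haar}|w\rangle$ vanishes unless $type(v)=type(w)$ and is otherwise a function of the shared type only, which is visibly true of $\E_T[\ketbra{T}]$ as well; combined with trace one and support on $\vee^t\mathbb{C}^N$ on both sides, this pins down both operators.)

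The only genuinely substantive step is the irreducibility of the symmetric subspace as a $U(N)$-representation (or, in the alternative route, confirming that being constant on each type-block, supported on $\vee^t\mathbb{C}^N$, and trace one uniquely determines the operator); everything else is bookkeeping about types and multiset counts. I expect this irreducibility/uniqueness point to be the main obstacle only in the sense of needing to state it carefully — it is entirely standard.
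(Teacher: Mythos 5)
Your main argument is correct, and it is the standard derivation: both sides equal $\Pi^{Sym}/\binom{N+t-1}{t}$ with $N=2^n$, the type vectors forming an orthonormal basis of $\vee^t\mathbb{C}^N$ and the Haar average being pinned to the maximally mixed state on that subspace by unitary invariance plus Schur's lemma (irreducibility of $\vee^t\mathbb{C}^N$ under $U\mapsto U^{\otimes t}$). The paper itself states this theorem without proof, treating it as the standard Schur--Weyl fact, so there is no in-paper argument to compare against; your write-up supplies exactly the proof one would expect.

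One small caution about your parenthetical ``alternative'' route: the random-phase argument in the proof of Lemma~\ref{lem:binom} uses a single \emph{global} phase $e^{i\theta}$ applied to $\ket{\phi}$, which leaves $\ketbra{\phi}^{\otimes t}$ unchanged and therefore gives no information here. To kill the off-type matrix entries of $\E_\phi[\ketbra{\phi}^{\otimes t}]$ you need invariance under \emph{diagonal} unitaries $D=\mathrm{diag}(e^{i\theta_1},\dots,e^{i\theta_N})$ with independent phases, i.e.\ a different (though still easy) symmetry than the one invoked in Lemma~\ref{lem:binom}. Moreover, even after you know that both operators are supported on $\vee^t\mathbb{C}^N$, vanish off type-blocks, have entries depending only on the type, and have trace one, this only forces each to be of the form $\sum_T c_T\ketbra{T}$ with $\sum_T c_T=1$; it does not by itself force the $c_T$ to be equal across types, so some further symmetry (e.g.\ invariance under permutations of the basis $[N]$, which also is not the symmetry used in Lemma~\ref{lem:binom}, or just falling back on Schur's lemma) is still required. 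Since this is only offered as an aside, it does not affect the correctness of your proof, but as written the alternative route would not actually close.
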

\subsection{Key Lemma}

Let $\cal{D}$ be the Haar distribution. We will define $C^{\SSM}$ to be exactly the simulator $\Sim^{\SSM}$ from~\Cref{cons:statesim}.

\begin{lemma}[Key Lemma]\label{lem:keylemma}
    Let $A_1,B_1,A_2,B_2$ be registers containing $a_1,b_1,a_2,b_2$ respectively states of dimension $N+1$, satisfying $N\geq (a_1+a_2+b_1+b_2+1)^2$.
    Define $\wt{\rho}_{(A_1,B_1),(A_2,B_2)}$ to be the state
    $$\E_{\substack{c\gets B(b_1+b_2,1/2)\\T\gets {[N] \choose a_1+a_2 + c}}}\left[\ketbra{\Zero(T,(B_1,B_2))}_{A_1,A_2,B_1,B_2}\right]$$
    and $\wt{\sigma}_{(A_1,B_1),(A_2,B_2)}$ to be the state
    $$\E_{\substack{c_1\gets B(b_1,1/2)\\c_2\gets B(b_2,1/2)\\T\gets {[N] \choose a_1+a_2 + c_1 + c_2}}}\left[\ketbra{\Zero^2(T,B_1,B_2)^{c_1,c_2}}_{A_1,A_2,B_1,B_2}\right]$$
    Then $\norm{\wt{\rho}^{\Gamma^{A_2B_2}}-\wt{\sigma}^{\Gamma^{A_2B_2}}}_1 \leq \frac{e(a_1+a_2+b_1+b_2)^6}{\sqrt{N}}$.
\end{lemma}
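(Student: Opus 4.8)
The goal is to bound the trace norm of the difference of the partial transposes (over the $A_2B_2$ subsystem) of two carefully chosen density matrices $\wt\rho$ and $\wt\sigma$. The state $\wt\rho$ corresponds to ``one shared pool'' of $c = c_1+c_2$ non-zero entries distributed across both $B$ registers, while $\wt\sigma$ corresponds to ``two independent pools'' where $B_1$ has exactly $c_1$ non-zeros and $B_2$ has exactly $c_2$. I would first use \Cref{lem:zerosplit} and \Cref{lem:zerosplit2} to expand both $\ket{\Zero(T,(B_1,B_2))}$ and $\ket{\Zero^2(T,B_1,B_2)^{c_1,c_2}}$ in the common basis $\ket{\Zero(X,B_1)}_{A_1B_1}\otimes\ket{\Zero(T\setminus X,B_2)}_{A_2B_2}$, indexed by a subset $X\subseteq T$. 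Once both states are written in this product form, the density matrices $\wt\rho,\wt\sigma$ — after taking expectations over $T\gets\binom{[N]}{a_1+a_2+c}$ and over the appropriate binomial choices of $c$ (or $c_1,c_2$) — become sums of outer products $\op{\Zero(X,B_1)\Zero(T\setminus X,B_2)}{\Zero(X',B_1)\Zero(T\setminus X',B_2)}$ with explicit coefficients coming from the $\alpha_{|X|}$ weights and the binomial probabilities.

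\textbf{Computing the partial transpose.} Applying the partial transpose $\Gamma^{A_2B_2}$ swaps the $A_2B_2$-kets and -bras: $\op{\cdot}{\Zero(T\setminus X,B_2)}\otimes\ket{\Zero(T\setminus X',B_2)}{\cdot}$ becomes $\ket{\Zero(T\setminus X',B_2)}\bra{\Zero(T\setminus X,B_2)}$ on that factor. The crucial point is that after this swap the ``type'' carried by the $A_1B_1$ side is $X$ (on the ket) versus $X'$ (on the bra) but the $A_2B_2$ side now carries $T\setminus X'$ on the ket and $T\setminus X$ on the bra; consistency of types forces relationships between $X,X'$ and the summation variable $T$. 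The plan is to reorganize $\wt\rho^{\Gamma}-\wt\sigma^{\Gamma}$ as a block matrix indexed by the relevant combinatorial data (essentially pairs of compatible subsets), and to observe — this is the conceptual heart — that the difference matrix vanishes except on a block that is a rank-one-like ``all ones'' pattern, up to scalar normalization. This is precisely the structure isolated by \Cref{lem:tracenorm}: a matrix that is zero except for an $M\times N$ block of ones has trace norm exactly $\sqrt{MN}$. So the strategy is to identify the effective $M$ and $N$ (which will be counts of subsets, i.e. binomial coefficients) and the scalar prefactor, then invoke \Cref{lem:tracenorm} directly.

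\textbf{Bounding the prefactor.} Having reduced $\norm{\wt\rho^{\Gamma}-\wt\sigma^{\Gamma}}_1$ to (scalar)$\times\sqrt{MN}$, the remaining work is to bound this quantity by $e(a_1+a_2+b_1+b_2)^5/\sqrt N$. Here $M,N$ will be binomial coefficients like $\binom{b_i}{c_i}$ or $\binom{a_i+b_i}{\cdot}$, and the scalar will involve ratios of binomial coefficients in $N$ (of the form $\binom{N}{k}/\binom{N}{k'}$) coming from the difference in how many ways $T$ can be chosen. The key quantitative fact is that for $k,k'$ within a bounded polynomial range, $\binom{N}{k}/\binom{N}{k'} = 1 + O(\mathrm{poly}/N)$, so the two states are individually close to a common reference; the difference of the prefactors is then $O(\mathrm{poly}/N)$, and multiplied by $\sqrt{MN}\le\sqrt{\mathrm{poly}\cdot\mathrm{poly}}$ gives $O(\mathrm{poly}/\sqrt N)$. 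Tracking the exact exponent — getting the $(a_1+a_2+b_1+b_2)^5$ and the constant $e$ — requires care: one factor of the total count comes from $\sqrt{MN}$, and roughly four more powers come from crudely bounding the sum over the $X$-indexing and the binomial ratio errors, with the constant $e$ entering through an inequality like $(1+x/N)^N\le e^x$. I expect \textbf{this last prefactor bound — and in particular correctly identifying which block of the difference matrix is nonzero and what its normalization is} — to be the main obstacle; once the problem is in the ``all-ones block'' form, \Cref{lem:tracenorm} does the rest mechanically.

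\textbf{Remark on the overall structure.} I would organize the proof as: (1) substitute the split-lemma expansions; (2) take expectations and write both density matrices in the product basis; (3) subtract and apply $\Gamma^{A_2B_2}$; (4) show the difference is supported on a single all-ones block with an explicit scalar; (5) apply \Cref{lem:tracenorm} and bound the scalar times $\sqrt{MN}$. Steps (1)–(3) are bookkeeping, step (4) is the insight, and step (5) is the estimate yielding the stated bound.
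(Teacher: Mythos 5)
Your high-level plan matches the paper's (expand via \Cref{lem:zerosplit} and \Cref{lem:zerosplit2}, take the partial transpose, subtract, invoke \Cref{lem:tracenorm}, then estimate), but steps (4) and (5) — which you correctly flag as the crux — misidentify the actual structure of the difference and the actual source of the $1/\sqrt{N}$ saving, and these misconceptions would send you in the wrong direction.

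On step (4): after subtraction, $\wt\rho^{\Gamma}-\wt\sigma^{\Gamma}$ is \emph{not} supported on a single all-ones block. The decomposition in the common basis has matrix elements indexed by pairs $(X,Y)$ with $|X|=a_1+c_1$, $|Y|=a_1+c_2$. The expansion of $\wt\sigma$ (via \Cref{lem:zerosplit2}) contributes exactly the ``diagonal'' terms $c_1=c_2$ with coefficient $\gamma_{c,c_1,c_1}$, which cancel against the same terms in $\wt\rho^{\Gamma}$; what survives is the entire sum over $c_1\neq c_2$. One then decomposes this remainder further by fixing $I=X\cap Y$ and $C=T\setminus(X\cup Y)$, getting a double sum over subsets $I,C$ of many operators $\tau_{c_1,c_2,I,C}$. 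Each such $\tau$ is not equal to an all-ones block matrix either — rather it is \emph{positive-semidefinite dominated} by one (the dominating matrix drops the disjointness constraint $X'\cap Y'=\emptyset$), and that PSD dominance is what lets you invoke \Cref{lem:tracenorm} on each piece. You then pay a triangle inequality over the (polynomially many) choices of $c,c_1,c_2,I,C$. So the argument is ``sum of trace-norm bounds over many blocks,'' not ``one block.''

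On step (5): the $1/\sqrt N$ gain does not come from ratios $\binom{N}{k}/\binom{N}{k'}=1+O(\mathrm{poly}/N)$. If you track powers of $N$ naively, each $\tau_{c_1,c_2,I,C}$ contributes $\sqrt{\binom{N}{a_1+c_1-|I|}\binom{N}{a_1+c_2-|I|}}\lesssim N^{a_1+(c_1+c_2)/2-|I|}$, the sums over $I,C$ contribute $N^{|I|+|C|}$, and the weight $\gamma_{c,c_1,c_2}$ contains $\alpha_c\propto 1/\binom{N}{a_1+a_2+c}\approx N^{-(a_1+a_2+c)}$. The exponents cancel to leave $N^{(c_1+c_2)/2-\max(c_1,c_2)}$ at worst, and the entire half-power savings comes from the observation that $c_1\neq c_2$ forces $(c_1+c_2)/2\leq\max(c_1,c_2)-\tfrac12$, i.e.\ the off-diagonality of the surviving terms gives you an honest $N^{-1/2}$. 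Without this arithmetic observation your bound would be $O(1)$, not $O(1/\sqrt N)$. The constant $e$ does indeed come from $\left(\frac{N}{N-k}\right)^{k}\leq e$ (using $N\geq(a_1+a_2+b_1+b_2+1)^2$), and the polynomial factor $(a_1+a_2+b_1+b_2)^5$ comes from crudely summing over the five indices $c,c_1,c_2,i,j$, so your intuition for where those come from is closer to right. But the central $N^{-1/2}$ mechanism is the half-integer gap between $(c_1+c_2)/2$ and $\max(c_1,c_2)$, not cancellation of binomial-ratio errors.
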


\begin{proof}
    \begin{equation*}
        \begin{split}
            \wt{\rho}=\sum_{c\in [b_1+b_2]}\sum_{T \in {[N] \choose a_1+a_2+c}} \frac{{b_1+b_2\choose c}}{2^{b_1+b_2}} \frac{1}{{N\choose a_1+a_2+c}} \ketbra{\Zero(T,(B_1,B_2))}_{A_1A_2B_1B_2}
        \end{split}
    \end{equation*}
    Setting
    \begin{align}
            \alpha_{c} &= \frac{{b_1+b_2\choose c}}{2^{b_1+b_2}} \frac{1}{{N\choose a_1+a_2+c}}&
            \beta_{c,c_1} &= \frac{{b_1 \choose c_1}{b_2 \choose c-c_1}}{{b_1+b_2 \choose c}{a_1+a_2+c \choose a_1+c_1}}&
            \gamma_{c,c_1,c_2} &= \alpha_c\sqrt{\beta_{c,c_1}\beta_{c,c_2}}
    \end{align}
    and applying~\Cref{lem:zerosplit} gives
    \begin{equation*}
        \sum_{\substack{c\in [b_1+b_2]\\c_1,c_2\leq c}}\gamma_{c,c_1,c_2}\sum_{\substack{T \in {[N] \choose a_1+a_2+c}\\X,Y\subseteq T\\|X|=|Y|=a_1+c_1}}\ket{\Zero(X,B_1)}\bra{\Zero(Y,B_1)}_{A_1B_1}\otimes \ket{\Zero(T\setminus X,B_2)}\bra{\Zero(T\setminus Y,B_2)}_{A_2B_2}
    \end{equation*}
    Observe that when $c_1 > b_1$ or $c_2 > b_2$, $\gamma_{c,c_1,c_2} = 0$. And so
    \begin{equation}
        \begin{split}
            &\wt{\rho}^{\Gamma^{(A_2,B_2)}}\\
            &=
        \sum_{\substack{c\in [b_1+b_2]\\c_1,c_2\leq c}}\gamma_{c,c_1,c_2}\sum_{\substack{T \in {[N] \choose a_1+a_2+c}\\X,Y\subseteq T\\|X|=a_1+c_1\\|Y|=a_2+c_2}}\ket{\Zero(X,B_1)}\bra{\Zero(Y,B_1)}_{A_1B_1}\otimes \ket{\Zero(T\setminus Y,B_2)}\bra{\Zero(T\setminus X,B_2)}_{A_2B_2}
        \end{split}
    \end{equation}

    We can similarly expand out $\wt{\sigma}$.
    \begin{equation}
        \begin{split}
            \wt{\sigma}_{(A_1,B_1),(A_2,B_2)}\\
            =\sum_{\substack{c_1\in[b_1]\\c_2\in [b_2]}}\frac{{b_1\choose c_1}{b_2\choose c_2}}{2^{b_1+b_2}{N\choose a_1+a_2+c_1+c_2}}\sum_{T\in {[N]\choose a_1+a_2+c_1+c_2}}\ket{\Zero^2(T,B_1,B_2)^{c_1,c_2})}\bra{\Zero^2(T,B_1,B_2)^{c_1,c_2}}\\
            =\sum_{\substack{c \in [b_1+b_2]\\c_1\leq c}}\frac{{b_1\choose c_1}{b_2\choose c-c_1}}{2^{b_1+b_2}{N\choose a_1+a_2 + c}}\sum_{T\in {[N]\choose a_1+a_2+c}}\ket{\Zero^2(T,B_1,B_2)^{c_1,c_2})}\bra{\Zero^2(T,B_1,B_2)^{c_1,c_2}}
        \end{split}
    \end{equation}
    Applying~\Cref{lem:zerosplit2} gives
    \begin{equation}
        \begin{split}
            \wt{\sigma}_{(A_1,B_1),(A_2,B_2)}\\
            =\sum_{\substack{c \in [b_1+b_2]\\c_1\leq c}}\frac{{b_1\choose c_1}{b_2\choose c-c_1}}{2^{b_1+b_2}{N\choose a_1+a_2 + c}}\frac{1}{{a_1+a_2+c \choose a_1+c_1}}\\
            \sum_{\substack{T\in {[N]\choose a_1+a_2+c}\\X,Y\subseteq T\\|X|=|Y|=a_1+c_1}}\ket{\Zero(X,B_1)}\bra{\Zero(Y,B_1)}_{A_1B_1}\otimes \ket{\Zero(T\setminus X,B_2)}\bra{\Zero(T\setminus Y,B_2)}_{A_2B_2}
        \end{split}
    \end{equation}

    As $$\frac{{b_1\choose c_1}{b_2\choose c-c_1}}{2^{b_1+b_2}{N\choose a_1+a_2 + c}}\frac{1}{{a_1+a_2+c \choose a_1+c_1}} = \gamma_{c,c_1,c_1},$$
    subtraction then gives
    \begin{equation}\label{eq:diff}
        \begin{split}
            \wt{\rho}^{\Gamma^{A_2B_2}} - \wt{\sigma}^{\Gamma^{A_2B_2}}\\
            =\sum_{\substack{c\in [b_1+b_2]\\c_1\neq c_2\leq c}}\gamma_{c,c_1,c_2}\sum_{\substack{T \in {[N] \choose a_1+a_2+c}\\X,Y\subseteq T\\|X|=a_1+c_1\\|Y|=a_1+c_2}}\begin{array}{c}
                \ket{\Zero(X,B_1)}\bra{\Zero(Y,B_1)}_{A_1B_1}\\
                \otimes \ket{\Zero(T\setminus Y,B_2)}\bra{\Zero(T\setminus X,B_2)}_{A_2B_2}
            \end{array}
        \end{split}
    \end{equation}

    We then observe (again in the style of~\cite{TCC:AnaGulLin24}) that for each $X,Y$, if we set $I=X\cap Y$, $C = T\setminus (X\cup Y)$, $X'=X\setminus I$, $Y'=Y\setminus I$, then
    \begin{equation}\label{eq:counting}
        \begin{split}
            \ket{\Zero(X,B_1)}\bra{\Zero(Y,B_1)}_{A_1B_1}\otimes \ket{\Zero(T\setminus Y,B_2)}\bra{\Zero(T\setminus X,B_2)}_{A_2B_2} \\
            = \ket{\Zero(I\cup X',B_1)}\bra{\Zero(I\cup Y',B_1)}_{A_1B_1}\otimes \ket{\Zero(C\cup X',B_2)}\bra{\Zero(C\cup Y',B_2)}_{A_2B_2}
        \end{split}
    \end{equation}

    We will define
    \begin{equation}\label{eq:taudef}
        \tau_{c_1,c_2,I,C} = \sum_{\substack{X\in {[N]\setminus (I\cup C) \choose a_1+c_1-|I|}\\Y\in {[N]\setminus (I\cup C) \choose a_1+c_2-|I|}\\X\cap Y=\emptyset}}\begin{array}{c}
        \ket{\Zero(I\cup X',B_1)}\bra{\Zero(I\cup Y',B_1)}_{A_1B_1}\\
        \otimes \ket{\Zero(C\cup X',B_2)}\bra{\Zero(C\cup Y',B_2)}_{A_2B_2}
        \end{array}
    \end{equation}
    ~\Cref{eq:diff,eq:counting,eq:taudef} together give us
    \begin{equation}
        \begin{split}
            \wt{\rho}^{\Gamma^{A_2B_2}} - \wt{\sigma}^{\Gamma^{A_2B_2}}\\
            =\sum_{\substack{c\in [b_1+b_2]\\c_1\neq c_2\leq c}}\gamma_{c,c_1,c_2}\sum_{i=0}^{a_1+\min(c_1,c_2)}\sum_{j=0}^{a_2+c-\max(c_1,c_2)}\sum_{I\in {[N]\choose i}}\sum_{C\in {[N]\setminus I\choose j}}\tau_{c_1,c_2,I,C}
        \end{split}
    \end{equation}

    We can then directly bound the trace norm of $\tau_{c_1,c_2,I,C}$ as follows
    \begin{corollary}\label{cor:tauboundmain}
        $$\norm{\tau_{c_1,c_2,I,C}}_1 \leq (a_1+c)N^{a_1+\frac{c_1+c_2}{2}-|I|}$$
    \end{corollary}

    We defer the proof of this to~\Cref{app:kneser}. The rough idea is similar to~\cite{TCC:AnaGulLin24}, in that we relate the spectrum of $\tau_{c_1,c_2,I,C}$ to that of a graph, and then directly compute the spectrum of the graph. The graph in question is a modified variant of the Kneser graph used by~\cite{TCC:AnaGulLin24}, and the argument used to bound its spectrum follows a similar approach to works bounding the spectrum of the Kneser graph~\cite{kneserwayback}.

    Thus, we get that
    \begin{equation}\label{eq:bigmath}
        \begin{split}
            \norm{\wt{\rho}^{\Gamma^{A_2B_2}} - \wt{\sigma}^{\Gamma^{A_2B_2}}}_1\\
            \leq \sum_{\substack{c\in [b_1+b_2]\\c_1\neq c_2\leq c}}\gamma_{c,c_1,c_2}\sum_{i=0}^{a_1+\min(c_1,c_2)}\sum_{j=0}^{a_2+c-\max(c_1,c_2)}\sum_{I\in {[N]\choose i}}\sum_{C\in {[N]\setminus I\choose j}}\norm{\tau_{c_1,c_2,I,C}}_1\\
            \leq \sum_{\substack{c\in [b_1+b_2]\\c_1\neq c_2\leq c}}\gamma_{c,c_1,c_2}\sum_{i=0}^{a_1+\min(c_1,c_2)}\sum_{j=0}^{a_2+c-\max(c_1,c_2)}{N\choose i}{N\choose j}(a_1+c)N^{a_1-i+\frac{c_1+c_2}{2}}\\
            \leq \sum_{\substack{c\in [b_1+b_2]\\c_1\neq c_2\leq c}}\gamma_{c,c_1,c_2}\sum_{i=0}^{a_1+\min(c_1,c_2)}\sum_{j=0}^{a_1+c-\max(c_1,c_2)}(a_1+c) N^i\cdot N^j\cdot N^{a_1-i+\frac{c_1+c_2}{2}}\\
            \leq \sum_{\substack{c\in [2t]\\c_1\neq c_2\leq c}}(a_1+c)\gamma_{c,c_1,c_2}\sum_{i=0}^{t+\min(c_1,c_2)}\sum_{j=0}^{t+c-\max(c_1,c_2)} N^{a_1+j+(c_1+c_2)/2}\\
        \end{split}
    \end{equation}
    But note that since $c_1\neq c_2$, we have
    $\max(c_1,c_2)\geq \min(c_1,c_2)+1$
    and so
    $\frac{c_1+c_2}{2}=\frac{\max(c_1,c_2)+\min(c_1,c_2)}{2}\leq \max(c_1,c_2)-\frac{1}{2}$
    and so expanding on~\Cref{eq:bigmath} we get
    \begin{equation}\label{eq:bigmath2}
        \begin{split}
            \norm{\wt{\rho}^{\Gamma^{A_2B_2}} - \wt{\sigma}^{\Gamma^{A_2B_2}}}_1\\
            \leq \sum_{\substack{c\in [b_1+b_2]\\c_1\neq c_2\leq c}}\gamma_{c,c_1,c_2}(a_1+c)\sum_{i=0}^{a_1+\min(c_1,c_2)}\sum_{j=0}^{a_2+c-\max(c_1,c_2)} N^{a_1+j+\max(c_1,c_2)-1/2}\\
            \leq \sum_{\substack{c\in [b_1+b_2]\\c_1\neq c_2\leq c}}\gamma_{c,c_1,c_2}(a_1+c)(a_1+\min(c_1,c_2))(a_2+c-\max(c_1,c_2)) N^{a_1+a_2+c-1/2}\\
            \leq \sum_{\substack{c\in [b_1+b_2]\\c_1\neq c_2\leq c}}((a_1+c)^2(a_2+c)\cdot N^{a_1+a_2+c-1/2}\alpha_c)\sqrt{\beta_{c,c_1}\beta_{c,c_2}}\\
        \end{split}
    \end{equation}

    We then compute out that
    \begin{equation}\label{eq:neglbound}
        \begin{split}
            (a_1+c)^2(a_2+c)N^{a_1+a_2+c-1/2}\alpha_c\\
            = (a_1+c)^2(a_2+c)\frac{{b_1+b_2\choose c}}{2^{b_1+b_2}}\frac{N^{a_1+a_2+c-1/2}}{{N\choose a_1+a_2+c}}\\
            \leq (a_1+c)^2(a_2+c)\frac{{b_1+b_2\choose c}}{2^{b_1+b_2}}\frac{N^{a_1+a_2+c-1/2}}{(N-a_1-a_2-c)^{a_1+a_2+c}}\\
            =(a_1+c)^2(a_2+c)\frac{{b_1+b_2\choose c}}{2^{b_1+b_2}}\left(\frac{N}{N-a_1-a_2-c}\right)^{a_1+a_2+c}\frac{N^{a_1+a_2+c-1/2}}{N^{a_1+a_2+c}}\\
            \leq (a_1+c)^2(a_2+c)\frac{{b_1+b_2\choose c}}{2^{b_1+b_2}}\left(\frac{N}{N-a_1-a_2-c}\right)^{a_1+a_2+c}\frac{1}{\sqrt{N}}
        \end{split}
    \end{equation}
    And since $N>>a_1,a_2$, we have $N\geq (a_1+a_2+c+1)^2$ and so
    \begin{equation}\label{eq:neglbound2}
        \begin{split}
            \frac{N}{a_1+a_2+c} \geq a_1+a_2+c+1\\
            N \leq N-a_1-a_2-c+\left(\frac{N-a_1-a_2-c}{a_1+a_2+c}\right)\\
            \frac{N}{N-a_1-a_2-c} \leq 1 + \frac{1}{a_1+a_2+c}\\
            \left(\frac{N}{N-a_1-a_2-c}\right)^{a_1+a_2+c} \leq e
        \end{split}
    \end{equation}
    Combining~\Cref{eq:neglbound,eq:neglbound2}, we get
    \begin{equation}\label{eq:neglboundf}
        (a_1+c)^2(a_2+c) N^{a_1+a_2+c-1/2}\alpha_c \leq \frac{(a_1+c)(a_2+c)e}{\sqrt{N}}
    \end{equation}

    Combining~\Cref{eq:bigmath2,eq:neglboundf}, we get
    \begin{equation}
        \begin{split}
            \norm{\wt{\rho}^{\Gamma^{A_2B_2}} - \wt{\sigma}^{\Gamma^{A_2B_2}}}_1
            \leq \sum_{\substack{c\in [b_1+b_2]\\c_1\neq c_2\leq c}}\frac{(a_1+c)^2(a_2+c)e}{\sqrt{N}}\sqrt{\beta_{c,c_1}\beta_{c,c_2}}\\
            \leq \frac{(a_1+b_1+b_2)^2(a_2+b_1+b_2)e}{\sqrt{N}}\sum_{\substack{c\in [b_1+b_2]\\c_1\neq c_2\leq c}}\sqrt{\beta_{c,c_1}\beta_{c,c_2}}\\
            \leq \frac{(a_1+b_1+b_2)^2(a_2+b_1+b_2)e}{\sqrt{N}}\sum_{\substack{c\in [b_1+b_2]\\c_1\neq c_2\leq c}}1\\
            \leq (a_1+b_1+b_2)^2(a_2+b_1+b_2)(b_1+b_2)^3\frac{e}{\sqrt{N}}\leq \frac{e(a_1+a_2+b_1+b_2)^6}{\sqrt{N}}
        \end{split}\qedhere
    \end{equation}
\end{proof}

\subsection{Proof of LOCC indifferentiability given key lemma}\label{sec:loccfinal}

\begin{lemma}[\cite{TCC:AnaGulLin24} Proof of Theorem 7.9]\label{lem:loccbound}
    Let $\rho_{AB},\sigma_{AB}$ be two states over registers $A$ and $B$. The LOCC distinguishing advantage between $\rho_{AB}$ and $\sigma_{AB}$ is bounded by
    $$\frac{1}{2}\norm{\rho^{\Gamma^{B}} - \sigma^{\Gamma^{B}}}_1$$
\end{lemma}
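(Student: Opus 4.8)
The plan is to relax LOCC to the larger class of measurements with positive partial transpose, for which the desired bound is a one-line consequence of trace-norm/operator-interval duality. \textbf{Step 1 (reduce to a separable two-outcome POVM).} I would first invoke the well-known structural fact that every LOCC protocol on a bipartite system $AB$ which outputs a single classical bit is implemented by a POVM $\{M_0,M_1\}$ on $AB$ in which \emph{both} $M_0$ and $M_1$ are separable operators, i.e. finite sums $\sum_i P_i\otimes Q_i$ with $P_i,Q_i\succeq 0$. (Each branch of an LOCC computation applies a product Kraus operator; coarse-graining the branches by the output bit keeps each outcome operator a sum of PSD tensor products, and completeness of the overall instrument gives $M_0+M_1=I$.) The LOCC distinguishing advantage between $\rho_{AB}$ and $\sigma_{AB}$ is therefore at most $\sup\abs{\mathrm{Tr}[M_1(\rho_{AB}-\sigma_{AB})]}$ over all such separable $M_1$ with $0\preceq M_1\preceq I$.

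\textbf{Step 2 (move the partial transpose onto the measurement).} Using the elementary identity $\mathrm{Tr}[XY]=\mathrm{Tr}[X^{\Gamma^{B}}Y^{\Gamma^{B}}]$, which follows from the corresponding identity for the ordinary transpose applied to the $B$ factor, rewrite the advantage as $\abs{\mathrm{Tr}[M_1^{\Gamma^{B}}\,\Delta]}$ where $\Delta\coloneq\rho^{\Gamma^{B}}-\sigma^{\Gamma^{B}}$. Since partial transposition sends $P_i\otimes Q_i\mapsto P_i\otimes Q_i^{T}$ and $Q_i^{T}\succeq 0$, the operator $M_1^{\Gamma^{B}}$ is again separable, hence $M_1^{\Gamma^{B}}\succeq 0$; applying the same to $M_0=I-M_1$ gives $M_0^{\Gamma^{B}}=I-M_1^{\Gamma^{B}}\succeq 0$, so in fact $0\preceq M_1^{\Gamma^{B}}\preceq I$. \textbf{Step 3 (trace-norm duality).} Note $\Delta$ is Hermitian and traceless, since partial transposition preserves Hermiticity and trace and $\mathrm{Tr}[\rho]=\mathrm{Tr}[\sigma]=1$. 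For any Hermitian traceless $\Delta$ with Jordan decomposition $\Delta=\Delta_+-\Delta_-$, $\Delta_\pm\succeq 0$, $\mathrm{Tr}[\Delta_+]=\mathrm{Tr}[\Delta_-]=\tfrac12\norm{\Delta}_1$, and any $P$ with $0\preceq P\preceq I$, one has $\mathrm{Tr}[P\Delta]=\mathrm{Tr}[P\Delta_+]-\mathrm{Tr}[P\Delta_-]\in[-\tfrac12\norm{\Delta}_1,\tfrac12\norm{\Delta}_1]$. Taking $P=M_1^{\Gamma^{B}}$ yields the claimed bound $\tfrac12\norm{\rho^{\Gamma^{B}}-\sigma^{\Gamma^{B}}}_1$.

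The only real subtlety, and the step I would be most careful about, is Step 1: ensuring that an arbitrary many-round LOCC distinguisher really does collapse to a single separable two-outcome POVM on the shared state, and in particular that the \emph{complement} element $M_0$ is also separable — it is the upper bound $M_1^{\Gamma^{B}}\preceq I$, not merely positivity, that Step 3 relies on. I would justify this either by induction on the number of communication rounds (tracking only the coarse-grained output bit) or by citing the standard inclusion of LOCC instruments in separable instruments. Everything after Step 1 is routine linear algebra.
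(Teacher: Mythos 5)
The paper does not prove this lemma itself --- it is cited verbatim from \cite{TCC:AnaGulLin24} (Proof of Theorem 7.9), so there is no internal proof to compare against. Your argument (LOCC $\subseteq$ SEP, separable POVM elements remain PSD under partial transpose so that $0\preceq M_1^{\Gamma^B}\preceq I$, then $\mathrm{Tr}[XY]=\mathrm{Tr}[X^{\Gamma^B}Y^{\Gamma^B}]$ together with the elementary bound $\abs{\mathrm{Tr}[P\Delta]}\le\tfrac12\norm{\Delta}_1$ for $0\preceq P\preceq I$ and traceless Hermitian $\Delta$) is correct, and it is the standard PPT-relaxation proof of LOCC distinguishability, which is what the cited reference does as well.
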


\begin{theorem}\label{thm:mainthm}
    Let $n\in \N$ and let $N = 2^n$. Let $A_1,B_1,A_2,B_2$ be register containing $a_1,b_1,a_2,b_2$ respectively states of dimension $N+1$ satisfying $N \geq (a_1+a_2+b_1+b_2+1)^2$.
    Define $\rho_{(A_1,B_1),(A_2,B_2)}$ to be the state
    $$\E_{\substack{\ket{\phi} \gets \mu_n\\c\gets B(b_1+b_2,1/2)}}\left[\ketbra{\phi}^{\otimes a_1}_{A_1} \otimes \ketbra{\phi}^{\otimes a_2}_{A_2} \otimes \ketbra{\Rep{b_1+b_2,c,\phi}}_{B_1,B_2}\right]$$
    and $\sigma_{(A_1,B_1),(A_2,B_2)}$ to be the state
    $$\E_{\substack{\ket{\phi} \gets \mu_n\\c_1\gets B(b_1,1/2)\\c_2 \gets B(b_2,1/2)}}\left[\ketbra{\phi}^{\otimes a_1}_{A_1} \otimes \ketbra{\phi}^{\otimes a_2}_{A_2} \otimes \ketbra{\Rep{b_1,c_1,\phi}}_{B_1} \otimes \ketbra{\Rep{b_2,c_2, \phi}}_{B_2}\right]$$
    Then $\rho,\sigma$ are $O\left(\frac{(a_1+a_2+b_1+b_2)^5}{\sqrt{N}}\right)$-LOCC indistinguishable.
\end{theorem}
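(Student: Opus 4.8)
The plan is to reduce Theorem~\ref{thm:mainthm} to the Key Lemma (\Cref{lem:keylemma}) via two standard moves: first replace the Haar-random state $\ket{\phi}$ by a type-vector representation using \Cref{lem:typevector}, and second invoke \Cref{lem:loccbound} to convert LOCC indistinguishability into a statement about partial transposes. Concretely, I would start from $\rho_{(A_1,B_1),(A_2,B_2)}$ and $\sigma_{(A_1,B_1),(A_2,B_2)}$ and unfold the definition of the $\Rep$ states. Recall that $\ket{\Rep{t,c,\phi}}$ is, up to the binomial weighting, a uniform superposition over $\ket{\Set{t,S,\phi}}$ with $|S|=c$; rewriting each tensor factor $\ket{0}$ or $-\ket{\phi}$ and then averaging over $\ket{\phi}\gets\mu_n$ using \Cref{lem:typevector}, the $\ket{\phi}^{\otimes a_i}$ factors on $A_i$ together with the $\ket{\phi}$-components inside $B_1,B_2$ collapse into an expectation over random collision-free subsets $T\subseteq[N]$ of the appropriate size. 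The upshot is that $\rho^{\Gamma^{A_2B_2}}$ becomes exactly (a relabeling of) $\wt\rho$ from \Cref{lem:keylemma} and $\sigma^{\Gamma^{A_2B_2}}$ becomes $\wt\sigma$: in $\rho$ a single binomial $c\gets B(b_1+b_2,1/2)$ governs how many of the $b_1+b_2$ "pool" registers carry a $\ket{\phi}$-type versus a $0$, with those non-zero positions spread across $B_1\cup B_2$ jointly (this is precisely the $\Zero(T,(B_1,B_2))$ structure); whereas in $\sigma$ the two independent binomials $c_1\gets B(b_1,1/2)$, $c_2\gets B(b_2,1/2)$ pin down separately how many non-zeros land in $B_1$ and in $B_2$, which is exactly $\Zero^2(T,B_1,B_2)^{c_1,c_2}$.

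Having made that identification, the argument is short: by \Cref{lem:loccbound} the LOCC distinguishing advantage between $\rho$ and $\sigma$ is at most $\tfrac12\norm{\rho^{\Gamma^{B_2}}-\sigma^{\Gamma^{B_2}}}_1$. Since the $A_2$ registers are classical-free ancillas that can be grouped with $B_2$ for the purposes of the partial transpose bound (the distinguisher is LOCC across the cut $(A_1,B_1)\,|\,(A_2,B_2)$), this is controlled by $\norm{\rho^{\Gamma^{A_2B_2}}-\sigma^{\Gamma^{A_2B_2}}}_1 = \norm{\wt\rho^{\Gamma^{A_2B_2}}-\wt\sigma^{\Gamma^{A_2B_2}}}_1$, and \Cref{lem:keylemma} bounds this by $\tfrac{e(a_1+a_2+b_1+b_2)^5}{\sqrt N}$. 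Substituting $N=2^n$ and absorbing the constant $e/2$ into the big-$O$ gives the claimed $O\!\left(\tfrac{(a_1+a_2+b_1+b_2)^5}{\sqrt N}\right)$ bound. The hypothesis $N\geq(a_1+a_2+b_1+b_2+1)^2$ carries over verbatim, and in that regime \Cref{lem:typevector} applies cleanly because all the relevant types are collision-free with overwhelming probability — more precisely, the collision-free restriction is exactly the content of writing $\mu_n$-averages as averages over $\binom{[N]}{t}$ in the lemmas we are allowed to cite.

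The main obstacle is the bookkeeping in the first step: one must be careful that expanding $\ket{\Rep{t,c,\phi}}$ into $\ket{\Set{t,S,\phi}}$ terms, then each $\ket{\Set{}}$ into its $\{\ket{0},-\ket{\phi}\}$ tensor factors, then averaging $\ketbra{\phi}$-blocks over the Haar measure, reproduces precisely the normalization constants ${t\choose c}^{-1/2}$, the $1/2^{b_1+b_2}$ weights, and the $\binom{N}{\cdot}^{-1}$ factors appearing in $\wt\rho$ and $\wt\sigma$ — i.e.\ that no sign, phase, or combinatorial factor is dropped when the $-\ket{\phi}$'s are pushed through. The "balanced"/phase-averaging phenomenon from \Cref{lem:binom} is what guarantees that cross terms between $\Set$'s of different size vanish, so the off-diagonal-in-$c$ contamination does not appear; citing \Cref{lem:binom} (rather than redoing its phase integral) keeps this clean. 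Once the dictionary $\rho\leftrightarrow\wt\rho$, $\sigma\leftrightarrow\wt\sigma$ is pinned down, everything else is a one-line chain of inequalities.
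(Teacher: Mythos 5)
Your high-level outline is essentially the paper's: bound the LOCC advantage by $\tfrac12\norm{\rho^{\Gamma^{A_2B_2}}-\sigma^{\Gamma^{A_2B_2}}}_1$ via \Cref{lem:loccbound}, replace the Haar averages by type vectors via \Cref{lem:typevector}, and invoke \Cref{lem:keylemma}. But there are two concrete inaccuracies in your bridge between $\rho,\sigma$ and the states $\wt\rho,\wt\sigma$ of \Cref{lem:keylemma}, and one of them is a real gap rather than just loose wording.

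First, you write $\norm{\rho^{\Gamma^{A_2B_2}}-\sigma^{\Gamma^{A_2B_2}}}_1 = \norm{\wt\rho^{\Gamma^{A_2B_2}}-\wt\sigma^{\Gamma^{A_2B_2}}}_1$ and assert that the ``collision-free restriction is exactly the content of writing $\mu_n$-averages as averages over $\binom{[N]}{t}$ in the lemmas we are allowed to cite.'' That is not what \Cref{lem:typevector} says: it gives an average over \emph{all} $t$-copy types $Ty(t,[2^n])$, not over collision-free types $\binom{[N]}{t}$. The passage from $\rho,\sigma$ (averages over all types) to $\wt\rho,\wt\sigma$ (averages over collision-free types) is genuinely lossy, and the paper bounds this loss explicitly via the birthday bound, obtaining $\norm{\rho-\wt\rho}_1 = O\!\left(\tfrac{(a_1+a_2+b_1+b_2)^2}{2^n}\right)$ and similarly for $\sigma$, then closes with a triangle inequality. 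Your equality should be an inequality up to these additive terms; as written, you have not accounted for the non-collision-free contributions, and no cited lemma does it for you. (This additive error is dominated by the $(a_1+\cdots+b_2)^5/\sqrt N$ term from the key lemma, so the final bound is unaffected — but the step needs to appear.)

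Second, \Cref{lem:binom} plays no role in this proof, and your invocation of it to argue that ``cross terms between $\Set$'s of different size vanish'' signals a misreading of the setup: $\rho$ and $\sigma$ in the theorem statement are already \emph{classical mixtures} over $c$ (resp. $c_1,c_2$), with $\ketbra{\Rep{t,c,\phi}}$ appearing only at a fixed $c$ per term. There is no cross-$c$ contamination to eliminate; \Cref{lem:binom} is used elsewhere in the paper to pass from $\ket{\phi-}^{\otimes t}$ to $\Rep$ form, which is upstream of this theorem. Relatedly, your plan of directly unfolding $\ket{\Rep{t,c,\phi}}$ into $\ket{\Set{t,S,\phi}}$ terms and pushing $-\ket{\phi}$ tensor factors through the Haar average is doable but is more delicate than what the paper does. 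The cleaner route, which you should adopt, is to observe that $\rho$ (resp. $\sigma$) is obtained by applying a \emph{fixed} isometry to $\E_{\phi,c}\!\left[\ketbra{\phi}^{\otimes a_1+a_2+c}\right]$ (resp. $\E_{\phi,c_1,c_2}\!\left[\ketbra{\phi}^{\otimes a_1+a_2+c_1+c_2}\right]$); then \Cref{lem:typevector} applies to the inner expectation directly, the isometry carries it to the $\ket{\Zero(\cdot)}$ and $\ket{\Zero^2(\cdot)}$ form, and you avoid re-deriving the $\Rep$-to-$\Set$ combinatorics altogether.

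Finally, a cosmetic point: state the application of \Cref{lem:loccbound} directly with the cut $A = A_1B_1$, $B = A_2B_2$, so that the bound is $\tfrac12\norm{\rho^{\Gamma^{A_2B_2}}-\sigma^{\Gamma^{A_2B_2}}}_1$ immediately; the detour through $\rho^{\Gamma^{B_2}}$ followed by ``grouping in $A_2$'' is not how the lemma is phrased and muddies the argument.
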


\begin{proof}
    By~\Cref{lem:loccbound}, the LOCC distinguishing advantage between $\rho_{A_1B_1A_2B_2}$ and $\sigma_{A_1B_1A_2B_2}$ is upper bounded by
    $$\frac{1}{2}\norm{\rho^{\Gamma^{A_2B_2}} - \sigma^{\Gamma^{A_2B_2}}}_1$$

    Note that we can construct $\rho$ and $\sigma$ by applying an isometry to $$\E_{\substack{\ket{\phi} \gets \mu_n\\c\gets B(b_1+b_2,1/2)}}\left[\ketbra{\phi}^{\otimes a_1+a_2+c}\right]$$ and $$\E_{\substack{\ket{\phi} \gets \mu_n\\c_1\gets B(b_1,1/2)\\c_2 \gets B(b_2,1/2)}}\left[\ketbra{\phi}^{\otimes a_1+a_2+c_1+c_2}\right]$$ respectively. Applying~\Cref{lem:typevector}, we can generate $\rho$ and $\sigma$ by applying the same isometry to 
    $$\E_{\substack{c\gets B(b_1+b_2,1/2)\\T\gets Ty([N], a_1+a_2+c)}}\left[\ketbra{T}\right]$$
    and
    $$\E_{\substack{c_1\gets B(b_1,1/2)\\c_2 \gets B(b_2,1/2)\\T\gets Ty([N], a_1+a_2+c_1+c_2)}}\left[\ketbra{T}\right]$$ 

    This leads to the observation that
    $$\rho = \E_{\substack{c\gets B(2t,1/2)\\T\gets Ty([N],a_1+a_2+c)}}\left[\ketbra{\Zero(T,(B_1,B_2))}_{A_1,A_2,B_1,B_2}\right]$$
    and
    $$\sigma = \E_{\substack{c_1\gets B(t,1/2)\\c_2\gets B(t,1/2)\\T\gets Ty([N],a_1+a_2+c_1+c_2)}}\left[\ketbra{\Zero^2(T,B_1,B_2)^{c_1,c_2}}_{A_1,A_2,B_1,B_2}\right]$$

    Let $\wt{\rho},\wt{\sigma}$ be from~\Cref{lem:keylemma}. Since the probability that a given type $T \gets Ty([N],t)$ is not collision-free is $\leq \frac{t^2}{N}$ by the birthday bound, we have
    \begin{equation}
        \begin{split}
            \norm{\rho - \wt{\rho}}_1 = O\left(\frac{(a_1+a_2+b_1+b_2)^2}{2^n}\right)\\
            \norm{\sigma - \wt{\sigma}}_1 = O\left(\frac{(a_1+a_2+b_1+b_2)^2}{2^n}\right)
        \end{split}
    \end{equation}

    By the triangle inequality and~\Cref{lem:keylemma},
    $$\norm{\rho^{\Gamma^{A_2B_2}} - \sigma^{\Gamma^{A_2B_2}}}_1 \leq O\left(\frac{(a_1+a_2+b_1+b_2)^6}{\sqrt{N}}\right)$$
    which is negligible in $n$.

    And so therefore $\rho,\sigma$ are LOCC indistinguishable.
\end{proof}

\begin{lemma}\label{lem:hybtechlemma}
    Define 
    $$\Pi_c^t = \sum_{\substack{S\subseteq [t]\\|S|=c}}\bigotimes_{i=1}^t\left(\Ind_{i\notin S}(\ketbra{0}) + \Ind_{i\in S}(I - \ketbra{0})\right)$$
    and let
    $M^t = \{\Pi^t_i\}_{i \in [t]}$. That is, $M^t$ is the measurement which counts the number of non-zero registers out of $t$ registers.

    Let $\ket{\phi}$ be any state and let $A_1,\dots,A_\ell,B_1,\dots,B_\ell$ be registers each containing $a_1,\dots,a_\ell,b_1,\dots,b_\ell$ states respectively. Let $b = b_1+\dots+b_\ell,a=a_1+\dots+a_\ell$ Let $\rho$ be the state
    $$\E_{c \gets B(b,1/2)}\left[\ketbra{\phi}^{\otimes a}_{A_1\dots A_\ell} \otimes \ketbra{\Rep{b,c,\phi}}_{B_1\dots B_\ell}\right]$$
    Let $\sigma_{A_1A_2}$ be the state produced by measuring the number of copies of $\ket{\phi}$ contained in register $B_1,\dots,B_i$. That is, the mixed state resulting from applying the measurement $M^{b_i}$ to each register $B_1,\dots,B_i$ of $\rho$. Then,
    \begin{equation*}
        \begin{split}
            \sigma = \E_{\substack{c_j \gets B(b_j,1/2),j\leq i\\ c \gets B(b_{i+1}+\dots+b_\ell, 1/2)}}\bigg[&\ketbra{\phi}^{\otimes a}_{A_1\dots A_\ell}\\
            &\otimes \ketbra{\Rep{b_1,c_1,\phi}}_{A_1} \otimes \dots \otimes \ketbra{\Rep{b_i,c_i,\phi}}_{A_i}\\
            &\otimes \ketbra{\Rep{b_{i+1}+\dots+b_\ell,c,\phi}}\bigg]
        \end{split}
    \end{equation*}
\end{lemma}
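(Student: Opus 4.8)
The plan is to compute $\sigma$ as an explicit mixed state, reducing the claim to a routine binomial identity. The $A$-registers are untouched by the measurements and merely carry along the factor $\ketbra{\phi}^{\otimes a}$, so it suffices to track the $B$-registers. The heart of the argument is a \emph{splitting identity} for the representative states: if we partition the $b=b_1+\dots+b_\ell$ copies into the blocks $B_1,\dots,B_i$ and the combined leftover block $B':=B_{i+1}\cdots B_\ell$ of size $b':=b_{i+1}+\dots+b_\ell$, then
\[
\ket{\Rep{b,c,\phi}} = {b\choose c}^{-1/2}\sum_{\substack{c_1,\dots,c_i,c'\ge 0\\ c_1+\dots+c_i+c'=c}}\Big(\prod_{j=1}^i{b_j\choose c_j}^{1/2}\Big){b'\choose c'}^{1/2}\,\ket{\Rep{b_1,c_1,\phi}}_{B_1}\otimes\cdots\otimes\ket{\Rep{b_i,c_i,\phi}}_{B_i}\otimes\ket{\Rep{b',c',\phi}}_{B'}.
\]
This follows straight from the definitions: $\ket{\Set{b,S,\phi}}$ is a product state across the $b$ registers, so writing $S$ (with $|S|=c$) as the disjoint union of its restrictions $S_j=S\cap B_j$ gives $\ket{\Set{b,S,\phi}}=\bigotimes_j\ket{\Set{b_j,S_j,\phi}}$; grouping the sum over $S$ by the size tuple $(c_1,\dots,c_i,c')$ and using $\sum_{|S_j|=c_j}\ket{\Set{b_j,S_j,\phi}}={b_j\choose c_j}^{1/2}\ket{\Rep{b_j,c_j,\phi}}$ yields the display.

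\textbf{Action of the measurements.} Next I would observe that each $\Pi^{b_j}_{c}$ acts diagonally on this decomposition. Since $\ket{0}\perp\ket{\phi}$ (our standing convention), every $\ket{\Set{b_j,S_j,\phi}}$ with $|S_j|=c_j$ lies in the $c_j$-sector of the measurement $M^{b_j}$, so $\Pi^{b_j}_{c}\ket{\Rep{b_j,c_j,\phi}}=\Ind_{c=c_j}\ket{\Rep{b_j,c_j,\phi}}$. Hence applying $M^{b_j}$ to block $B_j$ for each $j\le i$ collapses the superposition above to a single term: the outcome $(c_1,\dots,c_i)$ occurs with probability ${b\choose c}^{-1}\big(\prod_{j=1}^i{b_j\choose c_j}\big){b'\choose c-c_1-\dots-c_i}$, with post-measurement $B$-state $\ket{\Rep{b_1,c_1,\phi}}_{B_1}\otimes\cdots\otimes\ket{\Rep{b_i,c_i,\phi}}_{B_i}\otimes\ket{\Rep{b',c-c_1-\dots-c_i,\phi}}_{B'}$.

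\textbf{Combining with the binomial prior.} Finally, $\sigma$ is the mixture over $c\gets B(b,1/2)$ and the measurement outcomes. The weight on a tuple $(c_1,\dots,c_i,c')$ with $c_1+\dots+c_i+c'=c$ is
\[
\frac{{b\choose c}}{2^{b}}\cdot\frac{\left(\prod_{j=1}^i{b_j\choose c_j}\right){b'\choose c'}}{{b\choose c}} = \prod_{j=1}^i\frac{{b_j\choose c_j}}{2^{b_j}}\cdot\frac{{b'\choose c'}}{2^{b'}},
\]
where the ${b\choose c}$ cancels and $2^{b}=2^{b_1}\cdots 2^{b_i}\cdot 2^{b'}$. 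This says exactly that the $c_j$ are independent with $c_j\gets B(b_j,1/2)$ and $c'\gets B(b',1/2)$ (and the Vandermonde identity $\sum_{c_1+\dots+c_i+c'=c}\big(\prod_j{b_j\choose c_j}\big){b'\choose c'}={b\choose c}$ confirms these weights sum to one), while the corresponding post-measurement state is precisely the summand of the claimed expression for $\sigma$. Tensoring back the untouched $\ketbra{\phi}^{\otimes a}$ on the $A$-registers finishes the proof.

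\textbf{Main obstacle.} There is no essential difficulty beyond bookkeeping. The one point requiring care is that the blocks $B_1,\dots,B_i$ are measured \emph{separately}, so one must either iterate the two-block split $i$ times or, as above, perform the $(i{+}1)$-block split in a single step and track the joint outcome distribution. One must also invoke $\ket{0}\perp\ket{\phi}$ so that $M^{b_j}$ genuinely reports the copy-count $c_j$; the sign conventions ($-\ket{\phi}$ inside $\ket{\Set{}}$) are immaterial since only $\ketbra{\cdot}$ ever appears.
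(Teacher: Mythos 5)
Your proof is correct and follows the same route as the paper's: decompose $\ket{\Rep{b,c,\phi}}$ into blockwise $\ket{\Rep{b_j,c_j,\phi}}$ states with binomial coefficients, note the measurement projectors act diagonally in this decomposition, and then observe that the factor $\binom{b}{c}$ cancels so the resulting weights factor as a product of independent $B(b_j,1/2)$ masses. The only difference is presentational — you carry out the $(i{+}1)$-block split in one step, whereas the paper writes out only the $i=1,\ell=2$ case and declares the general case analogous — so the content matches.
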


\begin{proof}
    For ease of presentation, we will omit the $A$ registers and consider the case $i=1$, $\ell=2$. The full proof is analogous.

    In this case, we will work with
    $$\rho = \E_{c \gets B(b,1/2)}\left[\ketbra{\Rep{b,c,\phi}}_{B_1B_2}\right]$$
    and $\sigma$ the result of applying $M^t$ to register $B_1$.

    Let us first fix $c$, and compute the mixed state resulting from applying $M^{b_1}$ to $\ketbra{\Rep{b,c,\phi}}$. Note that
    \begin{equation}
        \begin{split}
            \ket{\Rep{b,c,\phi}} = {b\choose c}^{-1/2}\sum_{S\in {[b]\choose c}} \ket{\Set{b,S,\phi}}\\
            ={b\choose c}^{-1/2} \sum_{c_1 = 0}^{c}\sum_{S_1 \in {[b_1]\choose c_1}}\sum_{S_2 \in {[b_2] \choose c-c_1}}\ket{\Set{b_1,S_1,\phi}}\ket{\Set{b_2,S_2,\phi}}\\
            ={b\choose c}^{-1/2} \sum_{c_1 = 0}^{c}\left(\sum_{S_1 \in {[b_1] \choose c_1}}\ket{\Set{b_1,S_1,\phi}}\right)\otimes \left(\sum_{S_2 \in {[b_2] \choose c-c_1}}\ket{\Set{b_2,S_2,\phi}}\right)\\
            ={b\choose c}^{-1/2} \sum_{c_1=0}^{c}\left({b_1\choose c_1}^{1/2}\ket{\Rep{b_1,c_1,\phi}}\right)\otimes \left({b_2\choose c-c_1}^{1/2}\ket{\Rep{b_2,c_2,\phi}}\right)\\
            = \sum_{c_1=0}^c \sqrt{\frac{{b_1\choose c_1}{b_2\choose c-c_1}}{{b\choose c}}}\ket{\Rep{b_1,c_1,\phi}}\ket{\Rep{b_2,c-c_1,\phi}}
        \end{split}
    \end{equation}

    But observe that for all $b_1,c_1$, $$\ket{\Rep{a_1,c_1,\phi}}\in \Pi^{b_1}_{c_1}.$$ And so applying the measurement $M^{a_1}$ on the $B_1$ register of $\ketbra{\Rep{b_1+b_2,c,\phi}}_{B_1B_2}$ produces the mixed state
    \begin{equation}
        \begin{split}
            \sum_{c_1=0}^c \frac{{b_1\choose c_1}{b_2\choose c_2}}{{b_1+b_2\choose c}} \ketbra{\Rep{b_1,c_1,\phi}}\otimes \ketbra{\Rep{b_2,c-c_1,\phi}}
        \end{split}
    \end{equation}

    By linearity, we have that $\sigma$ is equal to
    \begin{equation}
        \begin{split}
            \sum_{c = 0}^{b_1+b_2} \frac{{b_1+b_2 \choose c}}{2^{b_1+b_2}} \left(\sum_{c_1=0}^c \frac{{b_1\choose c_1}{b_2\choose c_2}}{{b_1+b_2\choose c}} \ketbra{\Rep{b_1,c_1,\phi}}\otimes \ketbra{\Rep{b_2,c-c_1,\phi}}\right)\\
            = \sum_{c_1=0}^{b_1}\sum_{c_2=0}^{c_2}\frac{{b_1\choose c}{b_2\choose c}}{2^{b_1+b_2}}\ketbra{\Rep{b_1,c_1,\phi}}\ketbra{\Rep{b_2,c_2,\phi}}\\
            =\E_{\substack{c_1 \gets B(b_1,1/2)\\ c_2 \gets B(b_2, 1/2)}}\left[\ketbra{\Rep{b_1,c_1,\phi}}_{B_1} \otimes \ketbra{\Rep{b_2,c_2,\phi}}_{B_2}\right]
        \end{split}
    \end{equation}
\end{proof}

\begin{corollary}\label{cor:hybindis}
    Let $\ell,n\in \N$ with $N = 2^n$. Let $A_1,\dots, A_{\ell},B_1,\dots,B_{\ell}$ be registers containing $a_1,\dots,a_\ell,b_1,\dots,b_\ell$ respectively states of dimension $N+1$. Let $a = a_1+\dots+a_\ell,b=b_1+\dots+b_\ell$. We will require that $N>(a+b+1)^2$.
    
    Define $\rho_{(A_1,B_1),\dots,(A_\ell,B_\ell)}^{a,b}$ to be the state
    $$\E_{\substack{\ket{\phi} \gets \mu_n\\c\gets B(b,1/2)}}\left[\ketbra{\phi}^{\otimes a}_{A_1\dots A_\ell} \otimes \ketbra{\Rep{\ell t, c,\phi}}_{B_1,\dots,B_\ell}\right]$$
    and $\sigma_{(A_1,B_1),\dots,(A_\ell,B_\ell)}^{a,b}$ to be the state
    $$\E_{\substack{\ket{\phi} \gets \mu\\c_1,\dots,c_\ell \gets B(t,1/2)}}\left[\ketbra{\phi}^{\otimes a}_{A_1\dots A_\ell} \otimes \ketbra{\Rep{b_1,c_1,\phi}}_{B_1} \otimes\dots  \otimes \ketbra{\Rep{b_\ell,c_\ell, \phi}}_{B_\ell}\right]$$
    Then $\rho^{\ell},\sigma^{\ell}$ are $O(\ell(a+b)^5)/\sqrt{2^n})$-LOCC indistinguishable.
\end{corollary}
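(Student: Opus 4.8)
The plan is to run a standard hybrid argument across the $\ell$ parties, reducing each single step to the two-party statement of \Cref{thm:mainthm}. For $k=0,1,\dots,\ell$, let $H_k$ be the state in which registers $B_1,\dots,B_k$ carry independent counts $c_j\gets B(b_j,1/2)$ while $B_{k+1},\dots,B_\ell$ jointly carry a single shared count $c\gets B(b_{k+1}+\dots+b_\ell,1/2)$ (with the $A$ registers always holding $\ket{\phi}^{\otimes a_j}$ for a single Haar $\ket{\phi}\gets\mu_n$). By \Cref{lem:hybtechlemma}, $H_k$ is exactly what results from $\rho^\ell$ after measuring the number of non-zero entries in each of $B_1,\dots,B_k$; in particular $H_0=\rho^\ell$ and $H_\ell=\sigma^\ell$, since splitting a shared-count register into its sub-registers is just a relabelling of tensor factors. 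By the triangle inequality over LOCC distinguishers it then suffices to show, for each $k$, that $H_k$ and $H_{k+1}$ are $O((a+b)^5/\sqrt{2^n})$-LOCC indistinguishable, and summing the $\ell$ steps gives the claimed $O(\ell(a+b)^5/\sqrt{2^n})$ bound.

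For a fixed $k$, the states $H_k$ and $H_{k+1}$ agree everywhere except in the count structure of $B_{k+1},\dots,B_\ell$: in $H_k$ these share one count, whereas in $H_{k+1}$ register $B_{k+1}$ has its own count and $B_{k+2},\dots,B_\ell$ share one. This is precisely the setup of \Cref{thm:mainthm} with $\hat B_1=B_{k+1}$, $\hat B_2=B_{k+2}\cup\dots\cup B_\ell$, $\hat A_1=A_{k+1}$, and $\hat A_2$ a pool of $\ket{\phi}$-copies of size $(a-a_{k+1})+(b_1+\dots+b_k)$. Given any $\ell$-party LOCC distinguisher $\mathcal A$ for $H_k$ versus $H_{k+1}$, I would build a two-party LOCC distinguisher $\mathcal A'$ for the states $\rho,\sigma$ of \Cref{thm:mainthm} under this grouping: party $L$ receives $\hat A_1,\hat B_1$, which are literally party $k+1$'s registers; party $R$ receives $\hat A_2,\hat B_2$, hands out $a_j$ copies of $\ket{\phi}$ for each $A_j$ with $j\neq k+1$, re-labels $\hat B_2$ as the sub-registers $B_{k+2},\dots,B_\ell$, and locally builds the already-split registers $B_1,\dots,B_k$ by sampling $c_j\gets B(b_j,1/2)$ and running the state-preparation procedure of \Cref{cons:statesim}, which turns $c_j$ copies of $\ket{\phi}$ into $\ket{\Rep{b_j,c_j,\phi}}$ (discarding the fixed residue $\ket{1}^{\otimes c_j}$). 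The reconstructed joint state is exactly $H_k$ when the input was $\rho$ and exactly $H_{k+1}$ when it was $\sigma$, and then $\mathcal A'$ runs $\mathcal A$ with $L$ playing party $k+1$ and $R$ playing all the other parties; messages among the "other" parties are internal to $R$ and messages to/from party $k+1$ become genuine classical $L$–$R$ messages, so $\mathcal A'$ is LOCC with exactly the advantage of $\mathcal A$. Since $\hat A_1+\hat A_2+\hat B_1+\hat B_2 = a+b+(b_1+\dots+b_k)=O(a+b)$, \Cref{thm:mainthm} bounds this advantage by $O((a+b)^5/\sqrt{2^n})$, using that $2^n$ is polynomially larger than $(a+b)^2$.

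The main point to verify carefully is the faithfulness of the reconstruction performed by party $R$: one must check that re-preparing $B_1,\dots,B_k$ from fresh copies of $\ket{\phi}$, rather than receiving them directly, produces literally the same density matrix, including all the correlations with the shared $\ket{\phi}$ on which the $A$ registers and the shared-count blob also depend. This is exactly what \Cref{lem:binom} together with \Cref{cons:statesim} guarantee, since the state left behind after the swap-in step is the fixed product state $\ket{1}^{\otimes\sum_j c_j}$ and is uncorrelated with $\ket{\phi}$. A secondary bookkeeping issue is that $R$ must budget enough $\ket{\phi}$-copies for the worst-case draw of the $c_j$, which costs only a polynomial number of extra copies and does not affect the asymptotics (and any slack can be absorbed into the constants in the hypothesis $N>(a+b+1)^2$). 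Everything else is routine: the hybrid telescopes, and each step is a black-box invocation of \Cref{thm:mainthm}.
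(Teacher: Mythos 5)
Your proposal is correct and achieves the right bound, but your reduction for each hybrid step is genuinely different from the paper's. The paper keeps the same chain of hybrids $H_0,\dots,H_\ell$, but instantiates Theorem~\ref{thm:mainthm} with the bipartition $\hat A_1\hat B_1=(A_1,\dots,A_{i+1},B_1,\dots,B_{i+1})$ versus $\hat A_2\hat B_2=(A_{i+2},\dots,A_\ell,B_{i+2},\dots,B_\ell)$; the left party then applies the measurement $M^{b_j}$ of Lemma~\ref{lem:hybtechlemma} to $B_1,\dots,B_i$, which (by that lemma) collapses the theorem's two-count state onto exactly $H_i$ (for $\rho$) or $H_{i+1}$ (for $\sigma$). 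Your reduction instead isolates party $k+1$ as $L$ and lumps all others into $R$, and $R$ \emph{rebuilds} $B_1,\dots,B_k$ from fresh copies of $\ket{\phi}$ via Construction~\ref{cons:statesim} rather than inheriting and measuring them. Both are valid LOCC reductions; yours avoids re-invoking Lemma~\ref{lem:hybtechlemma} inside the hybrid step (you only use it, somewhat gratuitously, to note $H_0=\rho^\ell$ and $H_\ell=\sigma^\ell$, which is really just a relabelling), at the cost of having to verify that the rebuild is faithful and leaves no $\ket{\phi}$-correlated garbage --- which you correctly observe follows from the fact that the swap-in leaves behind only the fixed state $\ket{1}^{\otimes c_j}$. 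One small arithmetic slip: with your choices $\hat a_1=a_{k+1}$, $\hat a_2=(a-a_{k+1})+(b_1+\dots+b_k)$, $\hat b_1=b_{k+1}$, $\hat b_2=b_{k+2}+\dots+b_\ell$, the total is $\hat a_1+\hat a_2+\hat b_1+\hat b_2 = a+b$ exactly, not $a+b+(b_1+\dots+b_k)$; this actually makes your bookkeeping paragraph unnecessary, since the theorem's precondition $N\geq(\hat a_1+\hat a_2+\hat b_1+\hat b_2+1)^2$ coincides precisely with the corollary's hypothesis $N>(a+b+1)^2$.
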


\begin{proof}
This follows a hybrid argument. In particular, let $Hyb_i$ be
\begin{equation}
    \begin{split}
        \E_{\substack{\ket{\phi} \gets \mu_n\\c_j\gets B(b_j,1/2),j\leq i\\c \gets B(b_{i+1}+\dots+b_\ell, 1/2)}}\bigg[&\ketbra{\phi}^{\otimes a}_{A_1\dots A_\ell}\\  &\otimes \ketbra{\Rep{b_1,c_1,\phi}}_{B_1} \otimes\dots  \otimes \ketbra{\Rep{b_i,c_i, \phi}}_{B_i}\\
        &\otimes \ketbra{\Rep{b_{i+1}+\dots+b_\ell,c,\phi}}_{B_{i+1}\dots B_\ell}\bigg]
    \end{split}
\end{equation}
We will show that for all $i$, $Hyb_i$ and $Hyb_{i+1}$ are LOCC indistinguishable.

Let $\A$ be any algorithm distinguishing $Hyb_i$ and $Hyb_{i+1}$ with advantage $\epsilon$. We will construct an algorithm $\A'$ distinguishing $\rho$ and $\sigma$ from~\Cref{thm:mainthm} with advantage $\epsilon$.

The $A_1$ and $B_1$ registers from~\Cref{thm:mainthm} will consist of the registers $A_1,\dots,A_{i+1}$ and $B_1,\dots,B_{i+1}$ respectively. The $A_2$ and $B_2$ registers will consist of the registers $A_{i+2},\dots,A_\ell$ and $B_{i+2},\dots,B_\ell$ respectively.

$\A'$ will operate as follows: for each $1\leq j\leq i$, apply the measurement $M^{b_i}$ from ~\Cref{lem:hybtechlemma} to register $B_i$. Then, run $\A$, passing party $\A_j$ registers $A_jB_j$, and produce the same output. 

By~\Cref{lem:hybtechlemma}, if the initial state was $\rho$, then measuring the number of copies of $\ket{\phi}$ in registers $B_1,\dots,B_i$ produces exactly $Hyb_i$. 

Similarly, if the initial state was $\sigma$, then measuring the number of copies of $\ket{\phi}$ in the registers $B_1,\dots,B_i$ produces exactly $Hyb_{i+1}$.

Thus, $\A'$'s distinguishing advantage is exactly the same as $\A$'s, and so by~\Cref{thm:mainthm}, $Hyb_i$ and $Hyb_{i+1}$ are $O((a+b)^5/\sqrt{2^n})$-LOCC indistinguishable.

By the triangle inequality, $\rho^{a,b}$ and $\sigma^{a,b}$ are $O(\ell(a+b)^5)/\sqrt{2^n})$-LOCC indistinguishable.
\end{proof}

\begin{theorem}[\Cref{cor:finalloccindiff} restated]
    For all $\ell_\A,T_1,T_2=\poly(n)$, $C^{\SSM}$ simulating $t$ copies is \\
    $(\ell_\A, n, T_1, T_2, O(\ell_\A (T_1+T_2)^5/\sqrt{2^n}))$-LOCC indifferentiable from $\SSMM$ as long as $T_2 \leq t$.
\end{theorem}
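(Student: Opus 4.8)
The plan is to exhibit the simulator and then \emph{reduce} the whole claim to \Cref{cor:hybindis}, which already isolates the only substantive difference between the two worlds: in the real world each of the $\ell_\A$ fresh copies $C_j^{\SSM}$ (the stateful construction of \Cref{cons:statesim}) carries its \emph{own} pool $\ket{\Rep{t,c_j,\phi}}$ with an independent $c_j$, whereas the single shared $\SSMM$ looks like one shared pool. For the simulator I would take $\Sim^{\SSMM}$ to be the post-selection construction $C^{\SSMM}$ of \Cref{cons:ssmmtossm} (with the number of retries set to $\Theta(n)$), which simulates $\SSM$ using $\SSMM$: on each query it draws fresh copies of $\ket{\phi_n-}$ from $\SSMM$, measures $\{\ketbra{0}, I-\ketbra{0}\}$, and returns the residual state on the first ``non-zero'' outcome (outputting $\bot$ after $\Theta(n)$ failures). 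Because $\ket 0 \perp \ket{\phi_n}$ by our convention, a successful branch returns \emph{exactly} $\ket{\phi_n}$, and a produced copy fails with probability $2^{-\Omega(n)}$; crucially this simulator is stateless, so the fresh instantiations $\Sim_j$ can be taken identical (which is what the LOCC-indifferentiability definition needs).

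Next I would identify the two worlds with the states $\sigma^{a,b}$ and $\rho^{a,b}$ of \Cref{cor:hybindis}, with $a\le T_1$ and $b\le T_2$. In the real world the internal $\SSM$-queries of each $C_j^{\SSM}$ only serve to build its pool and are never exposed to $\A$, so what $\A_j$ actually receives is $a_j$ fresh copies of $\ket{\phi_n}$ (in a register $A_j$) and $b_j$ pool registers of $C_j^{\SSM}$ (in a register $B_j$); using that each query to $C^{\SSM}$ ``hands out a fresh register of $\E_{c\gets B(t,1/2)}[\ketbra{\Rep{t,c,\phi}}]$'' — so tracing out the $t-b_j$ untouched registers, which is legitimate since $T_2\le t$ and which by a short binomial identity (essentially \Cref{lem:hybtechlemma}) collapses the pool to $\E_{c'\gets B(b_j,1/2)}[\ketbra{\Rep{b_j,c',\phi}}]$ — the joint received state on $\bigl((A_1,B_1),\dots,(A_{\ell_\A},B_{\ell_\A})\bigr)$ is exactly $\sigma^{a,b}$. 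In the ideal world, condition on the event that no $\Sim_j$ ever aborts, which by a union bound fails with probability at most $a/2^n\le T_1/2^n$; conditioned on that, $\A_j$ receives $a_j$ copies of $\ket{\phi_n}$ (from $\Sim_j^{\SSMM}$) and $b_j$ copies of $\ket{\phi_n-}$ (directly from the shared $\SSMM$), all for the same $\ket{\phi_n}$, so the joint received state is $\E_{\ket\phi\gets\mu_n}[\ketbra{\phi}^{\otimes a}\otimes\ketbra{\phi-}^{\otimes b}]$, which by \Cref{lem:binom} is exactly $\rho^{a,b}$.

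Finally, since each $\A_j$ touches only its bundle $(A_j,B_j)$, the $\A_j$ communicate only classically, and the oracles merely hand over these fixed registers, the whole experiment is an LOCC computation on $\sigma^{a,b}$ (real) resp.\ on $\rho^{a,b}$ (ideal, conditioned on no abort); \Cref{cor:hybindis} — whose hypothesis $N>(a+b+1)^2$ holds for all sufficiently large $n$ since $T_1,T_2=\poly(n)$ — then bounds the gap by $O(\ell_\A(a+b)^5/\sqrt{2^n})\le O(\ell_\A(T_1+T_2)^5/\sqrt{2^n})$, and adding the dominated $T_1/2^n$ abort term gives the claimed bound. The main obstacle is making the ``the experiment is an LOCC computation on a fixed state'' step fully rigorous despite adaptive, on-demand oracle access: the counts $a_j,b_j$ and the assignment of copies to parties are chosen by $\A$, not fixed in advance, so one must argue (by fixing the query pattern, or via the standard reduction that handing out copies on demand is the same as handing out a maximal fixed bundle and letting $\A$ ignore the remainder) that for every realized pattern the received state is literally $\sigma^{a,b}$ resp.\ $\rho^{a,b}$ and that \Cref{cor:hybindis}'s bound — which depends only on $\ell_\A$ and $a+b$ — applies uniformly across patterns.
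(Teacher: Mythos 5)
Your proposal is correct and takes essentially the same route as the paper: the same post-selection simulator (\Cref{cons:ssmmtossm} with $\Theta(n)$ retries, which is stateless), the identification of the real world with $\sigma^{a,b}$ via the trace/\Cref{lem:hybtechlemma} argument, the identification of the ideal world (conditioned on no abort) with $\rho^{a,b}$ via \Cref{lem:binom}, and then \Cref{cor:hybindis} — the paper simply presents this as a three-hybrid chain $Hyb_1\leftrightarrow Hyb_2\leftrightarrow Hyb_3$ where you collapse $Hyb_2$ and $Hyb_3$ directly, and in fact your $T_1/2^n$ abort term is the right one (the paper's $T_2/2^n$ appears to be a typo, since it is the simulator replacing $\SSM$ that aborts and it is queried $T_1$ times). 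The adaptivity caveat you flag at the end is real and is likewise left implicit in the paper's one-line appeal to \Cref{cor:hybindis}; the maximal-bundle fix you sketch resolves it, possibly at the cost of a harmless $\poly(\ell_\A)$ blowup in the bound.
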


\begin{proof}
    We first define the simulator $\Sim^{\SSMM}$ to be the construction from~\Cref{cons:ssmmtossm} instantiated with $m=n$. That is, the simulator samples $\SSMM$ and measures if the state is $0$. If it is, it samples again, otherwise, it returns the state.

    We will prove this via a sequence of hybrids.
    \begin{enumerate}
        \item $Hyb_1$: The real game, where $\A^{(\cdot)} = (\A_1^{(\cdot)},\dots,\A_\ell^{(\cdot)})$ has query access to the oracles\\
        $((\SSM,C_1^{\SSM}),\dots,(\SSM,C_\ell^{\SSM}))$ where each $C_i^{\SSM}$ is a different instance of the construction.
        \item $Hyb_2$: Where we allow the construction to share state between each LOCC party. That is, we replace the oracle with \\
        $((\SSM,C^{\SSM}),\dots,(\SSM,C^{\SSM}))$.
        \item $Hyb_3$: The ideal game, where $\A^{\cdot}$ has query access to the oracles \\$((\Sim^{\SSMM},\SSMM),\dots,(\Sim^{\SSMM},\SSMM))$.
    \end{enumerate}

    The argument for each hybrid goes as follows:

    \begin{enumerate}
        \item $\abs{\Pr[Hyb_1\to 1] - \Pr[Hyb_2 \to 1]} \leq O\left(\frac{\ell\cdot T^5}{\sqrt{2^n}}\right)$:\\
        this follows immediately from~\Cref{cor:hybindis}. In particular, the oracle $((\SSM,C_1^{\SSM}),\dots,(\SSM,C_\ell^{\SSM}))$ can be simulated by giving each $\A_i$ query access to $(A_i,B_i)$ of $\rho^{a,b}$.
        \item $\abs{\Pr[Hyb_2 \to 1] - \Pr[Hyb_3 \to 1]} \leq \frac{T_2}{2^n}$:\\
        this follows from exactly the same argument as~\Cref{thm:ssmmtossmindiff}, but with the role of construction and simulator swapped.
    \end{enumerate}

    By the triangle inequality,
    \begin{equation}
        \begin{split}
            \bigg|\Pr[\A^{((\SSM,C_1^{\SSM}), \dots, (\SSM,C_t^{\SSM}))} \to 1] - \\\Pr[\A^{((\Sim_1^{\SSMM},\SSMM), \dots, (\Sim_5^{\SSMM},\SSMM))} \to 1]\bigg| \leq O\left(\frac{\ell\cdot T^5}{\sqrt{2^n}}\right)
        \end{split}
    \end{equation}
\end{proof}
\section{Applications}
We formalize a number of useful theorems which can be used to directly apply our results in practice.

\begin{corollary}[Constructions in $\SSMM$ to $\Swap$]\label{corf:ssmmtoswap}
    Let $\D = \{\D_n\}_{n\in\N}$ be any balanced family of distributions over $n$ qubit states parameterizing $\SSMM,\Swap$. Let $G$ be any efficient security game. Let $d\in\N$ and let $\epsilon:\N\to [0,1]$. Let $\Sim^{\Swap}$ be the construction from~\Cref{cor:simswapsmmm}.

    Let $\Prim^{\SSMM}$ be a primitive. Define 
    $$\ol{\Prim}^{\Swap} \coloneqq \Prim^{\Sim^{\Swap}}.$$
    
    If $\Prim^{\SSMM}$ is $(\epsilon;c)$-secure under $G$ relative to $\SSMM$, then $\ol{\Prim}^{\Swap}$ is $(\epsilon + n^{-d};c)$-secure under $G$ relative to $\Swap$.
\end{corollary}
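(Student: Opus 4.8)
The plan is to read this off as a direct instance of the general composition theorem, \Cref{cor:compcor}, applied to the reset-indifferentiable construction of $\SSMM$ from $\Swap$ furnished by \Cref{cor:simswapsmmm}. Concretely, I would set $\mathcal{P} = \Swap$, $\mathcal{Q} = \SSMM$, and take the construction $C^{\mathcal{P}}$ to be $\Sim^{\Swap}$, so that $\Prim^{C^{\mathcal{P}}}$ is exactly $\ol{\Prim}^{\Swap}$.

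First I would discharge the hypothesis of \Cref{cor:compcor} with the error function $\delta(n) = n^{-d}$. By \Cref{cor:simswapsmmm}, for every $t \in \N$ and all $T_1, T_2$ the construction $\Sim^{\Swap}$ is $\bigl(t, T_1, T_2, \tfrac{2T_1}{\sqrt{t+1}}\bigr)$-reset indifferentiable from $\SSMM$. So given any $p = \poly(n)$, choosing $q = t = \lceil 4 p^2 n^{2d}\rceil$ --- still polynomial in $n$ --- makes the error at most $\tfrac{2p}{\sqrt{t+1}} \le n^{-d} = \delta(n)$ whenever $T_1 \le p$; hence $\Sim^{\Swap}$ is $(q, p, p, \delta)$-indifferentiable from $\SSMM$ with $q = \poly(n)$. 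This is precisely the premise of \Cref{cor:compcor}.

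Applying \Cref{cor:compcor} then yields, for any single-stage cryptographic game $G$: if $\Prim^{\SSMM}$ is $(\epsilon;c)$-secure under $G$ relative to $\SSMM$, then $\Prim^{\Sim^{\Swap}} = \ol{\Prim}^{\Swap}$ is $(\epsilon + \delta; c) = (\epsilon + n^{-d}; c)$-secure under $G$ relative to $\Swap$. To allow $G$ to be an arbitrary (possibly multi-stage) efficient game, I would invoke the remark following \Cref{cor:compcor}: the indifferentiability simulator in \Cref{cor:simswapsmmm} is stateless, i.e.\ $\Sim^{\Swap}$ is in fact \emph{reset} indifferentiable from $\SSMM$, so the composition conclusion extends to multi-stage games as well \cite{EC:RisShaShr11}. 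This gives the stated bound.

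The proof is essentially bookkeeping once \Cref{cor:simswapsmmm} is available, and I do not anticipate a genuine obstacle. The one point that needs a little care is the quantitative choice of the number $t$ of simulated $\ket{\phi-}$ states: it must be taken polynomially large enough that the per-query simulation error $2T_1/\sqrt{t+1}$ stays below $n^{-d}$ across all the polynomially many queries that $G$ and its adversary make, while $q = t$ remains polynomial so that the hypothesized $(\epsilon;c)$-security of $\Prim^{\SSMM}$ (which is security against every polynomially-query-bounded adversary) can be invoked against the $q$-query indifferentiability simulator. I would also note in passing that this direction does not actually use the ``balanced'' hypothesis on $\D$, since \Cref{cor:simswapsmmm} holds for an arbitrary family of state distributions.
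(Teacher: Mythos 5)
Your proposal is correct and follows exactly the same route as the paper: instantiate \Cref{cor:compcor} with $\mathcal P=\Swap$, $\mathcal Q=\SSMM$, $C^{\mathcal P}=\Sim^{\Swap}$, and use the $(t,T_1,T_2,2T_1/\sqrt{t+1})$-reset indifferentiability from \Cref{cor:simswapsmmm} together with the remark that reset indifferentiability lifts the single-stage restriction. Your choice $t=q=\lceil 4p^2 n^{2d}\rceil$ is in fact slightly more careful than the paper's $t=q=(pn^d)^2$ (which only gives error $\le 2n^{-d}$), and your side observation that balancedness is never used in this direction is accurate.
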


\begin{proof}
    This follows from~\Cref{cor:simswapsmmm,cor:compcor} by setting $t=q=(pn^d)^2$.
\end{proof}

\begin{corollary}[Constructions in $\Swap$ to $\SSMM$]\label{corf:swaptossmm}
    Let $\D = \{\D_n\}_{n\in\N}$ be any balanced family of distributions over $n$ qubit states parameterizing $\SSMM,\Swap$. Let $G$ be any efficient security game making at most $T_{G,1},T_{G,2}\leq \poly(n)$ queries to the primitive and adversary respectively. Let $d\in\N$ and let $\epsilon:\N\to [0,1]$. 

    Let $\Prim^{\Swap}$ be a primitive making at most $T_{\Prim}$ queries to its oracle.
    
    Let $C^{\SSMM}$ be the construction from~\Cref{thm:ssmmtoswap} with $t = (2T_{G,1}\cdot T_{\Prim}\cdot n^d)^2$. Define 
    $$\ol{\Prim}^{\SSMM} \coloneqq \Prim^{C^{\SSMM}}.$$
    
    If $\Prim^{\Swap}$ is $(\epsilon;c)$-secure under $G$ relative to $\Swap$, then $\ol{\Prim}^{\SSMM}$ is $(\epsilon + n^{-d};c)$-secure under $G$ relative to $\SSMM$.
\end{corollary}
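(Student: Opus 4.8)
The plan is to combine the reset-indifferentiability of $C^{\SSMM}$ from $\Swap$ (Theorem~\ref{thm:ssmmtoswap}) with the bounded-query composition theorem (Theorem~\ref{thm:indiffcomp} / Corollary~\ref{cor:compcor}), choosing the parameter $t$ large enough that the accumulated simulation error is below $n^{-d}$. The key observation is that this corollary is the $\Swap\to\SSMM$ direction, so here $\mathcal{P} = \SSMM$ is the model we are building on and $\mathcal{Q} = \Swap$ is the model the primitive $\Prim^{\Swap}$ natively lives in; $C^{\SSMM}$ is the construction of $\Swap$ from $\SSMM$. Since $C^{\SSMM}$ is \emph{reset} indifferentiable, the composition theorem applies even without the single-stage restriction, which is why $G$ may be an arbitrary efficient security game.

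The main steps are as follows. First, fix a polynomial-time adversary $\A$ attacking $\ol{\Prim}^{\SSMM} = \Prim^{C^{\SSMM}}$, say making $T_{\A} \le \poly(n)$ queries; unrolling the game $G$, the quantities $T_{G,1}, T_{G,2}, T_{\Prim}$ and $T_{\A}$ are all polynomially bounded, so the total number of queries the composed experiment makes into the $\SSMM$ oracle through copies of $C^{\SSMM}$ is $T_2 := T_{G,1}\cdot T_{\Prim}\cdot T_{\A}\cdot(\text{per-call cost})$, which is $\poly(n)$. Second, instantiate Theorem~\ref{thm:ssmmtoswap} with this $T_2$ and with the chosen $t = (2T_{G,1}T_{\Prim} n^d)^2$: the simulator is stateless and makes a single query, and the indifferentiability error is $\frac{2T_2}{\sqrt{t+1}}$. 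Third, plug these bounds into Corollary~\ref{cor:compcor} (or directly Theorem~\ref{thm:indiffcomp}): security of $\Prim^{\Swap}$ under $G$ relative to $\Swap$ transfers to security of $\Prim^{C^{\SSMM}}$ under $G$ relative to $\SSMM$, with the advantage increasing by the indifferentiability error. Fourth, verify that $\frac{2T_2}{\sqrt{t+1}} \le n^{-d}$ for all sufficiently large $n$: since $\sqrt{t+1} \ge 2T_{G,1}T_{\Prim}n^d$ and $T_2$ contains a factor of $T_{G,1}T_{\Prim}$ times lower-order polynomial factors, the ratio is $O(\text{poly}(n)/n^d \cdot 1/\text{poly}(n)) = o(n^{-d})$ once $t$ is taken a hair larger than the naive count — which is exactly why the "$2$" and the squaring appear in the definition of $t$. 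One then invokes Corollary~\ref{cor:compcor} with this $\delta(n) \le n^{-d}$.

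The main obstacle — really a bookkeeping obstacle rather than a conceptual one — is matching the query-budget inequalities $T_{G,1}\cdot T_{\Prim} \le T_2$ and $T_{G,2}\cdot T_{\A} \le T_1$ required by the composition theorem, and in particular confirming that the choice $t = (2T_{G,1}T_{\Prim}n^d)^2$ makes the error $\frac{2T_{G,1}T_{\Prim}n^d + (\text{lower order})}{2T_{G,1}T_{\Prim}n^d}\cdot\frac{1}{n^d}$-ish term genuinely fall below $n^{-d}$ for large $n$. Concretely, one argues: for any polynomial $p$ governing the adversary's resources and any fixed $d$, for all sufficiently large $n$ we have $\frac{2 \cdot p(n) \cdot T_{G,1}(n) T_{\Prim}(n) n^d}{\sqrt{(2T_{G,1}(n)T_{\Prim}(n)n^d)^2}} \le n^{-d}$ — but this inequality as written is false unless $p(n)\le 1$, so the correct reading is that $t$ must absorb \emph{all} the polynomial factors (including the adversary's query count and the per-query cost of $C^{\SSMM}$), and the statement's clean form is obtained because Corollary~\ref{cor:compcor} quantifies "for all $p$ there exists $q$" so one simply picks $t$ after seeing the adversary. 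The writeup should therefore mirror the proof of Corollary~\ref{cor:ssmmtossm}: reduce to Corollary~\ref{cor:compcor}, and note that for every polynomial $p$ and every $d$, for sufficiently large $n$, $\frac{2p(n)}{\sqrt{(2p(n)n^d)^2+1}} \le n^{-d}$, which holds since the left side is at most $\frac{1}{n^d}$. This closes the argument.
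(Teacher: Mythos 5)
There is a genuine gap, caused by a misreading of which queries go where in the indifferentiability game. You set $T_2 := T_{G,1}\cdot T_{\Prim}\cdot T_{\A}\cdot(\text{per-call cost})$, but this is wrong: in the composed experiment, the adversary $\A$ only queries the \emph{first} oracle ($\SSMM$ directly), not the construction $C^{\SSMM}$. It is the primitive $\Prim$, driven by the game $G$, that queries $C^{\SSMM}$ — for a total of $T_{G,1}\cdot T_{\Prim}$ queries. So the correct instantiation of Theorem~\ref{thm:indiffcomp} is $T_1 = T_{G,2}\cdot T_{\A}$ and $T_2 = T_{G,1}\cdot T_{\Prim}$, and crucially the indifferentiability error from Theorem~\ref{thm:ssmmtoswap} is $\frac{2T_2}{\sqrt{t+1}}$, which depends \emph{only} on $T_2$. (Intuitively: the simulator in that theorem is exact, so queries to the first oracle contribute no error; the only error comes from the Ji--Liu--Song reflection approximation used by $C^{\SSMM}$.) With $T_2 = T_{G,1}\cdot T_{\Prim}$ and $t = (2T_{G,1}T_{\Prim}n^d)^2$, one gets $\frac{2T_{G,1}T_{\Prim}}{\sqrt{t+1}} \le n^{-d}$ directly, with no slack needed and no dependence on $T_{\A}$.

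Because of the mis-counting, you concluded that the inequality ``is false unless $p(n)\le 1$'' and that $t$ must therefore be chosen after seeing the adversary, with Corollary~\ref{cor:compcor} providing the existential quantifier. This is the wrong fix on two counts. First, as above, the inequality is actually true as written once $T_2$ is counted correctly, so there is nothing to fix. Second, Corollary~\ref{cor:compcor} is the wrong tool here anyway: it requires a single construction to be $(q,p,p,\delta)$-indifferentiable for all polynomial $p$ with a fixed $\delta$, which $C^{\SSMM}$ with fixed $t$ does \emph{not} satisfy (its error grows with $T_2$). Worse, the corollary's statement fixes $t = (2T_{G,1}T_{\Prim}n^d)^2$ as a single, adversary-independent construction, so any reading that picks $t$ adaptively after the adversary would not even match what is being claimed. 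The paper's proof instead invokes Theorem~\ref{thm:indiffcomp} directly with the asymmetric bounds $T_1 = T_{G,2}\cdot T_{\A}$, $T_2 = T_{G,1}\cdot T_{\Prim}$, $T_{\Sim}=1$, $\delta = n^{-d}$, and observes that $\Prim^{\Swap}$ being $(\epsilon;c)$-secure gives $(T_{\A}\cdot 1, \epsilon;c)$-security for every polynomial $T_{\A}$, which is exactly what the theorem's hypothesis needs. Your high-level plan (reset-indifferentiability plus a composition theorem, noting the stateless simulator lets you drop the single-stage restriction) is right; the error is purely in the query accounting and the resulting choice of composition lemma.
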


\begin{proof}
    This follows from~\Cref{thm:ssmmtoswap,thm:indiffcomp}. In particular, we want to show that for all $T_{\A} \leq \poly(n)$, $\ol{\Prim}^{\Swap}$ is $(t,\epsilon+n^{-d};c)$ secure under $G$.

    We know that $\Prim^{\Swap}$ is $(T_{\A},\epsilon;c)$ secure under $G$. Furthermore,~\Cref{thm:ssmmtoswap} gives that $C^{\SSMM}$ is $\left(1, T_{G,2}\cdot T_{\A},T_{G,1}\cdot T_{\Prim},\frac{2T_{G,1}\cdot T_{\Prim}}{\sqrt{t+1}} \leq n^{-d}\right)$-indifferentiable from $\mathcal{Q}$.

    And so, by~\Cref{thm:indiffcomp}, we get that $\ol{\Prim}^{\SSMM}$ is $(T_{\A},\epsilon + n^{-d};c)$-secure under $G$ relative to $\SSMM$.
\end{proof}

\begin{corollary}[Constructions in $\SSM$ to $\SSMM$]\label{corf:ssmtossmm}
    Let $\D = \{\D_n\}_{n\in\N}$ be any balanced family of distributions over $n$ qubit states parameterizing $\SSMM,\SSM$. Let $G$ be any efficient single-stage security game. Let $\epsilon:\N\to [0,1]$. Let $C^{\SSMM}$ be the construction from~\Cref{cons:ssmmtossm} with $m=2n$.

    Let $\Prim^{\SSM}$ be a primitive. Define 
    $$\ol{\Prim}^{\SSMM} \coloneqq \Prim^{\Sim^{\SSMM}}.$$
    
    If $\Prim^{\SSM}$ is $(\epsilon;c)$-secure under $G$ relative to $\SSMM$, then $\ol{\Prim}^{\SSMM}$ is $\left(\epsilon + \frac{1}{2^n};c\right)$-secure under $G$ relative to $\SSMM$.
\end{corollary}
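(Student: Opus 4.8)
The plan is to derive this statement as a repackaging — plus a mild generalization from the Haar distribution to an arbitrary balanced family — of Corollary~\ref{cor:ssmmtossm}, by feeding the indifferentiability of Construction~\ref{cons:ssmmtossm} into the general composition theorem. First I would recall Theorem~\ref{thm:ssmmtossmindiff}: with $m = 2n$, for all $T_{\Sim}, T_1, T_2$ satisfying $T_1 \leq T_{\Sim}$, the construction $C^{\SSMM}$ is $(T_{\Sim}, T_1, T_2, T_2/2^{2n})$-indifferentiable from $\SSM$. Although that theorem is phrased for the Haar distribution, I would point out that its proof uses the distribution in exactly two places: Lemma~\ref{lem:binom} (identifying the simulator's internal state, jointly with its $\SSM$-responses, with a genuine tensor power of $\ket{\phi-}$), and the fact that the post-selection step of Construction~\ref{cons:ssmmtossm} succeeds with probability exactly $1/2$ on $\ket{\phi-}$. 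The latter holds for any state since $\ket{\phi}\perp\ket{0}$, and Lemma~\ref{lem:binom} only requires $\D$ to be balanced; hence the same $(T_{\Sim}, T_1, T_2, T_2/2^{2n})$-indifferentiability bound holds for the balanced family in the hypothesis.

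Next I would verify the hypothesis of Corollary~\ref{cor:compcor}: for every $p = \poly(n)$ there must be some $q = \poly(n)$ with $C^{\SSMM}$ being $(q, p, p, \delta)$-indifferentiable from $\SSM$ for one fixed function $\delta:\N\to[0,1]$. Taking $q = p$ and $T_{\Sim} = T_1 = T_2 = p$ (so the constraint $T_1 \leq T_{\Sim}$ is met), the error is $p(n)/2^{2n}$, and since $p(n)/2^{2n} \leq 1/2^n$ for all sufficiently large $n$, one may uniformly set $\delta(n) = 1/2^n$. As $G$ is single-stage, Corollary~\ref{cor:compcor} then applies verbatim and turns $(\epsilon;c)$-security of $\Prim^{\SSM}$ under $G$ relative to $\SSM$ into $(\epsilon + 1/2^n; c)$-security of $\ol{\Prim}^{\SSMM} = \Prim^{C^{\SSMM}}$ under $G$ relative to $\SSMM$, as claimed. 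This is the identical estimate already used in the proof of Corollary~\ref{cor:ssmmtossm}.

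I do not anticipate any genuine obstacle: the only nontrivial point is the passage from the Haar statement of Theorem~\ref{thm:ssmmtossmindiff} to arbitrary balanced $\D$, and that reduces to isolating where balancedness enters (Lemma~\ref{lem:binom}) and where it does not (the failure analysis of the post-selection, which is distribution-free). If one prefers to keep the proof here short, the cleanest route is to restate Theorem~\ref{thm:ssmmtossmindiff} once for balanced $\D$, after which the argument here is literally ``apply Corollary~\ref{cor:compcor} with $\delta(n) = 1/2^n$, using $m = 2n$ and $q = p$.''
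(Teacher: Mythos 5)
Your proposal is correct and follows the same route as the paper: feed the $(T_{\Sim},T_1,T_2,T_2/2^{2n})$-indifferentiability of $C^{\SSMM}$ from Theorem~\ref{thm:ssmmtossmindiff} into Corollary~\ref{cor:compcor} with $q=p$ and $\delta(n)=1/2^n$ (using $p(n)/2^{2n}\leq 1/2^n$ for large $n$), exactly as in Corollary~\ref{cor:ssmmtossm}. The one thing you add that the paper's one-line citation glosses over is the explicit check that Theorem~\ref{thm:ssmmtossmindiff}, though stated for Haar, really only uses $\D$ through Lemma~\ref{lem:binom} (which already holds for any balanced $\D$) and the post-selection failure bound (which holds for any $\ket{\phi}\perp\ket{0}$); that observation is correct and worth making explicit.
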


\begin{proof}
    This follows immediately from~\Cref{cor:ssmmtossm,cor:compcor}.
\end{proof}

\begin{corollary}[Constructions in $\SSM$ to $\Swap$]\label{corf:ssmtoswap}
    Let $G$ be any single-stage security game. If there exists a primitive $\Prim^{\SSM}$ which is $c$-secure under $G$ relative to $\SSM$, then there exists a primitive $\ol{\Prim}^{\Swap}$ which is $c$-secure under $G$ relative to $\Swap$.
\end{corollary}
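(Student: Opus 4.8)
The plan is to build $\ol{\Prim}^{\Swap}$ by chaining the two lifting results already established: first move from the $\SSM$ model to the $\SSMM$ model using \Cref{corf:ssmtossmm}, and then move from the $\SSMM$ model to the $\Swap$ model using \Cref{corf:ssmmtoswap}. Throughout I would take $\D$ to be the Haar distribution on $n$ qubits, which is balanced (as required to invoke \Cref{corf:ssmmtoswap}) and is the distribution for which the construction $C^{\SSMM}$ of \Cref{cons:ssmmtossm} is defined.

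Concretely: given $\Prim^{\SSM}$ that is $c$-secure under $G$ relative to $\SSM$, I would first set $\Prim_1^{\SSMM} \coloneqq \Prim^{C^{\SSMM}}$, where $C^{\SSMM}$ is the construction of $\SSM$ out of $\SSMM$ from \Cref{cons:ssmmtossm} with $m = 2n$. Since $G$ is single-stage, \Cref{corf:ssmtossmm} applies and shows that $\Prim_1^{\SSMM}$ is $(n^{-d} + 2^{-n};c)$-secure under $G$ relative to $\SSMM$ for every $d$; folding the $2^{-n}$ term into the polynomial slack shows $\Prim_1^{\SSMM}$ is in fact $c$-secure relative to $\SSMM$. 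I would then set $\ol{\Prim}^{\Swap} \coloneqq \Prim_1^{\Sim^{\Swap}}$, where $\Sim^{\Swap}$ is the reset-indifferentiable construction of $\SSMM$ out of $\Swap$ from \Cref{cor:simswapsmmm}; applying \Cref{corf:ssmmtoswap} (for each $d$, with the prescribed number of copies) yields that $\ol{\Prim}^{\Swap}$ is $(n^{-d};c)$-secure relative to $\Swap$ for all $d$, i.e.\ $c$-secure. Unwinding the composition, $\ol{\Prim}^{\Swap}$ is just $\Prim$ with its $\SSM$ calls answered by running $C^{\SSMM}$ on top of $\Sim^{\Swap}$ on top of $\Swap$.

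Since both of the ingredients are already proven, there is no genuine obstacle here; the argument is essentially bookkeeping, and the only points needing care are: (i) the single-stage hypothesis on $G$ is exactly what the first step needs, because the $\SSM \to \SSMM$ simulator is stateful and so relies on the single-stage composition theorem \Cref{cor:compcor}, whereas the $\SSMM \to \Swap$ simulator is stateless and imposes no such restriction; (ii) the efficiency of $G$ demanded by \Cref{corf:ssmmtoswap} is automatic, since cryptographic games are QPT by definition; and (iii) one must unfold the definition of $c$-security, checking that the additive errors $2^{-n}$ and $n^{-d}$ accumulated along the chain can be driven below any prescribed inverse polynomial for all sufficiently large $n$.
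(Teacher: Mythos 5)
Your proof is correct and follows essentially the same route as the paper's: chain \Cref{corf:ssmtossmm} (using the single-stage hypothesis, since the $\SSMM$-to-$\SSM$ simulator of~\Cref{cons:statesim} is stateful) with \Cref{corf:ssmmtoswap} (where reset-indifferentiability from \Cref{cor:simswapsmmm} means no single-stage restriction is needed), and observe that the errors $2^{-n}$ and $n^{-d}$ can be absorbed since $d$ is arbitrary. The paper's own proof is exactly this one-line chaining plus the remark that negligible security is preserved; your write-up just makes the bookkeeping explicit.
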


\begin{proof}
    This follows from~\Cref{corf:ssmtossmm,corf:ssmmtoswap} and the observation that since $d$ can be arbitrary, the constructions actually preserve negligible security.
\end{proof}

\begin{corollary}[Constructions in $\SSMM$ to $\SSM$]\label{corf:ssmmtossm}
    Let $\SSM,\SSMM$ be parameterized by $\D = \{\mu_n\}_{n\in\N}$, the Haar distribution over $n$ qubits. Let $d\in\N$ and let $\epsilon:\N\to [0,1]$. Let $G$ be any efficient LOCC security game making at most $T_{G,1},T_{G,2}\leq \poly(n)$ queries to the primitive and adversary respectively. 

    Let $\Prim^{\Swap}$ be a primitive making at most $T_{\Prim}$ queries to its oracle. 
    
    Let $C^{\SSMM}$ be the construction from~\Cref{cor:finalloccindiff} with $t \geq T_{G,1}\cdot T_{\Prim}$. Define 
    $$\ol{\Prim}^{\SSMM} \coloneqq \Prim^{C^{\SSMM}}.$$
    
    If $\Prim^{\Swap}$ is $(\epsilon;c)$-secure under $G$ relative to $\Swap$, then $\ol{\Prim}^{\SSMM}$ is $(\epsilon + O(2^{-n/2});c)$-secure under $G$ relative to $\SSMM$.
\end{corollary}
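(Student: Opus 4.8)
The plan is to apply the LOCC composition theorem (\Cref{thm:locccomp}) to the LOCC indifferentiability established in \Cref{cor:finalloccindiff}, in complete analogy with the proofs of \Cref{corf:ssmmtoswap,corf:swaptossmm}. (If the hypothesis is phrased relative to $\Swap$, one first invokes \Cref{corf:swaptossmm} — equivalently, the equivalence of $\Swap$ and $\SSMM$ — to obtain a primitive secure relative to $\SSMM$ at the cost of an extra $n^{-d}$; below we simply assume $\Prim^{\SSMM}$ is $(\epsilon;c)$-secure relative to $\SSMM$.) Instantiate \Cref{thm:locccomp} with $\mathcal{P}=\SSM$, $\mathcal{Q}=\SSMM$, and $C^{\SSM}$ the stateful construction of \Cref{cons:statesim} that simulates $t$ copies of $\SSMM$ from $\SSM$; its simulator $\Sim^{\SSMM}$ is the rejection-sampling construction of \Cref{cons:ssmmtossm} with $m=n$, which makes $T_{\Sim}=n$ queries.

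Concretely, I would fix an arbitrary polynomial query bound $T_\A$ for a prospective adversary against $\ol{\Prim}^{\SSM}\coloneqq\Prim^{C^{\SSM}}$, and set $T_1\coloneqq T_{G,2}\cdot T_\A$ and $T_2\coloneqq T_{G,1}\cdot T_{\Prim}$, both polynomial in $n$ since $G$ is efficient and $T_{\Prim}=\poly(n)$. The assumption $t\ge T_{G,1}\cdot T_{\Prim}$ guarantees $T_2\le t$, so \Cref{cor:finalloccindiff} applies and yields that $C^{\SSM}$ is $(\ell_\A,\,n,\,T_1,\,T_2,\,\delta)$-LOCC indifferentiable from $\SSMM$ with $\delta=O(\ell_\A (T_1+T_2)^5/\sqrt{2^n})$; since $\ell_\A,T_1,T_2$ are fixed polynomials in $n$, $\delta$ is negligible and is bounded by $O(2^{-n/2})$ for all sufficiently large $n$ (absorbing the polynomial prefactor into the stated bound). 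Moreover, since $\Prim^{\SSMM}$ is $(\epsilon;c)$-secure it is in particular $(T_\A\cdot n,\epsilon;c)$-secure, as $T_\A\cdot n\le\poly(n)$.

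Finally I would invoke \Cref{thm:locccomp}: the side conditions $T_{G,1}\cdot T_{\Prim}\le T_2$ and $T_{G,2}\cdot T_\A\le T_1$ hold with equality by our choices, $G$ is an LOCC game with at most $\ell_\A$ adversaries, and $\Prim^{\SSMM}$ is $(T_\A\cdot T_{\Sim},\epsilon;c)$-secure relative to $\SSMM$; the theorem then gives that $\ol{\Prim}^{\SSM}$ is $(T_\A,\epsilon+\delta;c)$-secure under $G$ relative to $\SSM$. Since $T_\A$ ranged over all polynomials and $\delta=O(2^{-n/2})$ has no harmful dependence on $T_\A$, this upgrades to $(\epsilon+O(2^{-n/2});c)$-security, as claimed.

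I do not expect a conceptual obstacle here: like \Cref{corf:swaptossmm} for \Cref{thm:ssmmtoswap,thm:indiffcomp}, this is essentially a matter of plugging parameters into already-established machinery. The one point worth double-checking is the error accounting — that \Cref{cor:finalloccindiff} genuinely supplies a \emph{single} simulator good against the whole LOCC adversary rather than a fresh one per party, and that the $(T_1+T_2)^5$ and $\ell_\A$ factors in $\delta$ are honest polynomials once $G$, $\Prim$, and $T_\A$ are fixed, so that the total error collapses to $O(2^{-n/2})$ — which is exactly where the specialization to the Haar distribution and the $\sqrt{N}$-scaling of the Key Lemma (\Cref{lem:keylemma}) is being used.
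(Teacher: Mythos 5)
Your proposal is correct and matches the paper's one-line derivation from \Cref{cor:finalloccindiff} and \Cref{thm:locccomp}; you also rightly read through the statement's typos ($\Prim^{\Swap}$, $C^{\SSMM}$, $\ol{\Prim}^{\SSMM}$ should be $\Prim^{\SSMM}$, $C^{\SSM}$, $\ol{\Prim}^{\SSM}$, as the corollary's title and its downstream use in the key-exchange application confirm). The one caveat you already flag---that $O(\ell_\A(T_1+T_2)^5/\sqrt{2^n})$ is really a $\poly(n)\cdot 2^{-n/2}$ term rather than literally $O(2^{-n/2})$---is an imprecision inherited from the paper's own statement, not a flaw in your argument.
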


\begin{proof}
    This follows immediately from~\Cref{cor:finalloccindiff,thm:locccomp}.
\end{proof}

\paragraph*{Sample applications} Note that existing works show that for $\D$ a Haar random state
\begin{enumerate}
    \item One-way state generators~\cite{EPRINT:BosCheNeh24,EPRINT:BMMMY24}, multi-copy pseudorandom states~\cite{ARXIV:CheColSat24,TCC:AnaGulLin24}, and key exchange and commitments with classical communication~\cite{TCC:AnaGulLin24} do not exist relative to $\SSM$.
    \item One-way puzzles~\cite{EPRINT:BosCheNeh24}, one-copy pseudorandom states~\cite{ARXIV:CheColSat24,TCC:AnaGulLin24}, and quantum commitments exist relative to $\SSM$.
\end{enumerate}

These composition theorems then give us (with a little bit of work) that all those statements are true when $\SSM$ is replaced by $\Swap$.
\begin{enumerate}
    \item One-way state generators, multi-copy pseudorandom states, and key exchange and commitments with classical communication do not exist relative to $\Swap$.
    \item One-way puzzles, one-copy pseudorandom states, and quantum commitments exist relative to $\Swap$.
\end{enumerate}

In detail, ~\cite{EPRINT:BosCheNeh24}'s argument that one-way state generators do not exist in the $\SSM$ can easily be adapted to the $\SSMM$ and to handle inverse polynomial error. Our results then immediately give that one-way state generators do not exist relative to $\Swap$. Full details of this argument are given in~\Cref{sec:owsg,sec:noowsgssmm}.

Note that this immediately implies that multi-copy pseudorandom states do not exist in the $\Swap$ model, since pseudorandom states can be used to build one-way state generators~\cite{C:MorYam22,ARXIV:CGG+24}.

Furthermore, our~\Cref{corf:ssmtoswap} immediately gives that key exchange and commitments with classical communication do not exist relative to $\Swap$, since their security games are LOCC and neither primitive exists relative to $\SSM$~\cite{TCC:AnaGulLin24}. To show this explicitly, we formally define the security game for classical communication key exchange in~\Cref{sec:ke}.

On the other hand, all positive constructions (one-way puzzles, one-copy pseudorandom states, and quantum commitments) translate to the $\Swap$ model immediately via~\Cref{corf:ssmtoswap}. Note that there is a subtlety here, which is that one-way puzzle security cannot be described by an efficient cryptographic game. However, one-way puzzles are equivalent (via black box reduction) to state puzzles, a primitive introduced by~\cite{ARXIV:KhuTom24} with an efficient security game. And so, through this reduction, it can also be observed that one-way puzzles exist in the $\Swap$ model.
\subsection{OWSG do not exist in the $\SSM $ model for any distribution}\label{sec:noowsgssmm}

A one-way state generator is a procedure $(StateGen, Ver)$, where $StateGen$ takes in a classical key $k$ and produces an output state $\rho_k$. $Ver$ takes in an output state $\rho$ and a key $k$ and checks if $k$ matches $\rho$. Security says that given any polynomially many copies of $\rho_k$, it should be hard to find a $k'$ which passes verification. A game-based definition is given in~\Cref{sec:owsg}.

It is known from~\cite{EPRINT:BosCheNeh24} that one-way state generators do not exist in the $\SSM$ model. In order to adapt this impossibility to the $\Swap$ model, it turns out that it is sufficient to adapt the attack from~\cite{EPRINT:BosCheNeh24} to the $\SSMM$ model with bounded polynomial copy security and inverse polynomial error. In particular, we prove a slightly more general result
\begin{lemma}[Proof adapted from Theorem 4.1 of~\cite{EPRINT:BosCheNeh24}]\label{lem:noowsg}
    For any family of distributions $\D$, there does not exist a $\left(O(n^2), \frac{1}{10n^2}\right)$-one-way state generator relative to $\SSM$.
\end{lemma}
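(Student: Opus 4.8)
The plan is to build a single generic attacker that works for every family $\D$, exploiting only that $\SSM_n$ is the \emph{stateless} isometry $\ket{0} \mapsto \ket{\phi_n}$: each query merely hands out another copy of the shared state and never disturbs it, so the adversary $\A^{\SSM}$ --- which has the same oracle access as the honest algorithms --- may internally execute $\mathrm{StateGen}^{\SSM}$ and $\mathrm{Ver}^{\SSM}$ on any inputs it likes, at the cost of fresh $\SSM$ queries. Recall the game relative to $\SSM$: the challenger samples a key $k$ from the key space $\mathcal K$, runs $\mathrm{StateGen}^{\SSM}(k)$ enough times to give $\A$ its copies of the output state $\rho_k$, receives a guess $k^*$, and accepts iff $\mathrm{Ver}^{\SSM}(k^*,\rho_k)=1$ on a freshly generated copy; correctness gives $\Pr[\mathrm{Ver}^{\SSM}(k,\rho_k)=1]\ge 1-\negl(n)$, the probability over $\ket{\phi_n}\gets\D_n$ and all coins. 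Nothing in the argument will use the structure of $\D$, which is why the statement holds for any distribution.

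The core idea reduces breaking the OWSG to a search problem. Fix the sampled state $\ket{\phi_n}$; for each $k'\in\mathcal K$ let $E_{k'}$ be the two-outcome measurement on the output register that runs $\mathrm{Ver}^{\SSM}(k',\cdot)$ and reports its bit, so $\mathrm{Tr}(E_{k'}\rho_k)=\Pr[\mathrm{Ver}^{\SSM}(k',\rho_k)=1]$. Correctness says $\mathrm{Tr}(E_k\rho_k)\ge 1-\negl(n)$: \emph{some} member of the family $\{E_{k'}\}_{k'\in\mathcal K}$ accepts $\rho_k$ with probability essentially $1$, and the adversary needs only to find one (say one accepting with probability $\tfrac12$) and output the corresponding key. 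This is exactly a quantum threshold-search / Quantum-OR task, solvable with only $O(\log^2|\mathcal K|)=O(n^2)$ copies of $\rho_k$ for $n$-bit keys (a fixed polynomial in general, which is plenty since $\tfrac12\gg\tfrac1{10n^2}$). Averaging over $\ket{\phi_n}$ and discarding, by Markov, the negligible-measure set of states on which correctness degrades, the challenger's final check --- a fresh application of $E_{k^*}$ to a fresh copy of $\rho_k$ --- then accepts with probability $\Omega(1)$, contradicting $(O(n^2),\tfrac1{10n^2})$-security.

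The step I expect to be the main obstacle is making this attack respect the adversary's resource bounds, specifically its \emph{query} budget (number of $\SSM$ queries) rather than its copy budget. A naive threshold search touches all $|\mathcal K|$ candidate measurements, and each application of $E_{k'}$ spends the $\poly(n)$ $\SSM$ queries needed to run $\mathrm{Ver}^{\SSM}$; over an exponential key space this is far too many queries for a query-bounded adversary, so one cannot literally run shadow tomography over all keys. The resolution --- where the real work lies --- is to extract a good key using only $\poly(n)$ copies of $\ket{\phi_n}$: think of the whole attack as a single (computationally unbounded but query-cheap) measurement on $\rho_k^{\otimes O(n^2)}\otimes\ketbra{\phi_n}^{\otimes\poly(n)}$ that outputs $k^*$, and argue it achieves $\Omega(1)$ advantage. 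One clean way to engineer this is a dichotomy: if many keys induce nearly identical output states then a fixed or random key already passes verification (by the triangle inequality and correctness), so one may assume the output states are ``spread out'' and then recover a verifying key directly from $O(n^2)$ copies of $\rho_k$ by a suitable $\SSM$-query-free measurement (e.g.\ a pretty-good / hypothesis-selection measurement over a net of the output ensemble). Once the key-recovery subroutine is query-efficient, the remaining points --- distribution-obliviousness, the averaging over $\ket{\phi_n}$, and absorbing the $\negl$ correctness slack --- are routine, and the advantage obtained is $\Omega(1)$, comfortably above the required $\tfrac1{10n^2}$.
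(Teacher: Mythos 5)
You correctly identify the high-level strategy (reduce to quantum threshold search over the family $\{E_{k'}\}$ with $O(n^2)$ copies), which is indeed what the paper does, following~\cite{EPRINT:BosCheNeh24}. But the ``main obstacle'' you then raise --- that each application of $E_{k'}$ costs $\poly(n)$ oracle queries and so you ``cannot literally run shadow tomography over all keys'' --- is a misdiagnosis, and the dichotomy/pretty-good-measurement workaround you sketch to resolve it is not in the paper and is not clearly sound as stated.

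The actual resolution is much simpler and is the one step you are missing. In the $\SSM$ model the oracle is the input-free isometry $\ket{0}\mapsto\ket{\phi_n}$, so every oracle call inside $\mathrm{Ver}^{\SSM}(k',\cdot)$ just produces another copy of $\ket{\phi_n}$. The paper therefore defines the measurement not on $\rho_k$ alone but on $\rho_k\otimes\ket{\phi_n}^{\otimes T_{Ver}}$: concretely $\Pi_{k'}=U_{k'}^{\dagger}(\ketbra{1}\otimes I)U_{k'}$, where $U_{k'}$ is the fixed unitary the verifier applies once its oracle responses are pre-supplied. Each $\Pi_{k'}$ is then a \emph{query-free} two-outcome measurement on a fixed register. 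Threshold search is a single (computationally unbounded, query-free) measurement on $O(n)$ copies of the amplified block $\left(\rho_k\otimes\ket{\phi_n}^{\otimes T_{Ver}}\right)^{\otimes 10n}$; the adversary's only $\SSM$ queries are the $\poly(n)$ many needed to stock the $\ket{\phi_n}^{\otimes T_{Ver}}$ registers up front. There is no per-$k'$ query cost, and no need for the net/dichotomy detour. (The paper also uses an amplified measurement $\Pi_{k'}^{\otimes 10n}$ and two small lemmas to convert constant threshold-search soundness into $(1-1/n)$ verification probability, but this is bookkeeping.) Your proposed alternative --- partition keys by whether their output states cluster, then run a hypothesis-selection measurement over a net --- is an untested replacement for a step that does not need replacing, and you give no argument that the ``spread out'' branch actually yields a verifying key; as written this is a genuine gap.
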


with the following corollary.
\begin{corollary}[Proof in~\Cref{sec:owsg}]
    There does not exist a one-way state generator relative to $\Swap$.
\end{corollary}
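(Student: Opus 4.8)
The plan is to derive a contradiction from Lemma~\ref{lem:noowsg}, which is deliberately stated for \emph{every} family of distributions $\D$, not merely the Haar distribution. The crucial observation is that the $\SSMM$ model over the Haar distribution $\mu$ is itself an instance of the $\SSM$ model over a different distribution: if $\mu^-$ denotes the distribution obtained by sampling $\ket\phi\gets\mu$ and outputting $\ket{\phi-}$ (padding the output into $\C^{2^{n+1}}$ via $\ket 0\mapsto\ket{00}$ and $\ket\phi\mapsto\ket 1\ket\phi$ as in the remark following the model definitions), then $\SSMM$ over $\mu$ and $\SSM$ over $\mu^-$ are literally the same oracle. Consequently, Lemma~\ref{lem:noowsg} instantiated with $\D=\mu^-$ already rules out any $(O(n^2),1/(10n^2))$-one-way state generator relative to $\SSMM$ over $\mu$.

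To finish, I would argue by contradiction: suppose a one-way state generator $(StateGen,Ver)$ exists relative to $\Swap$ over $\mu$. First restrict the adversary to receive only $O(n^2)$ copies of the generated state; the result is still a one-way state generator relative to $\Swap$ with negligible completeness and security errors, and its security experiment is single-stage, efficient, and makes only a polynomially bounded number of queries to its primitive (and one to its adversary). Now apply Corollary~\ref{corf:swaptossmm} — which lifts constructions from $\Swap$ to $\SSMM$ through the reset-indifferentiable simulator of Theorem~\ref{thm:ssmmtoswap} — with $d=3$. This yields $\ol{\Prim}^{\SSMM}=\Prim^{C^{\SSMM}}$, a one-way state generator relative to $\SSMM$ over $\mu$ whose completeness and security errors have each grown only by $n^{-3}$ and are therefore below $1/(10n^2)$ for all sufficiently large $n$; that is, a $(O(n^2),1/(10n^2))$-one-way state generator relative to $\SSMM$ over $\mu$ — equivalently, relative to $\SSM$ over $\mu^-$. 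This contradicts the first paragraph, completing the proof.

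The step requiring the most care is checking that the inverse-polynomial errors introduced by the lifting are genuinely harmless: replacing each $\Swap$-query of $StateGen$ and $Ver$ by a call to $C^{\SSMM}$ only implements an approximate reflection (Proposition~\ref{prop:reflsim}), so one must confirm both that the generated state is still accepted by the honest key with probability compatible with the completeness requirement underlying Lemma~\ref{lem:noowsg}, and that the extra advantage this leaks to the adversary stays below $1/(10n^2)$. Passing first to the bounded-query, inverse-polynomial-error formulation of the primitive is exactly what makes this work: it lets the bounded-query composition theorem (Theorem~\ref{thm:indiffcomp}) apply with a single fixed construction parameter $t$, and it matches the robustness with which Lemma~\ref{lem:noowsg} is stated. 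Verifying that $\SSMM_\mu$ really does coincide with $\SSM_{\mu^-}$ and that the security game is single-stage are then routine.
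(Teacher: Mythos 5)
Your proposal is correct and follows essentially the same route as the paper: you (i) observe that $\SSMM_\mu$ is literally $\SSM$ over the pushforward distribution $\mu^-$, so Lemma~\ref{lem:noowsg} already rules out $\left(O(n^2),\tfrac{1}{10n^2}\right)$-OWSGs relative to $\SSMM$; (ii) lift a hypothetical $\Swap$-model OWSG to an $\SSMM$-model one via Corollary~\ref{corf:swaptossmm} with an appropriate inverse-polynomial error budget (the paper takes $d=3$ as well); and (iii) conclude by contradiction. The paper packages step (ii) as an explicit chain of intermediate corollaries with concrete parameters rather than as a single reductio, but the content is identical, and your explicit bookkeeping of the $O(n^2)$ copy bound is if anything slightly cleaner than the paper's.
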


The attack from~\cite{EPRINT:BosCheNeh24} utilizes a threshold search procedure to find the key for the one-way state generator from many copies of the state. Formally, the threshold search procedure is defined as follows.

\begin{definition}[Threshold Search Problem~\cite{STOC:BadODo21}]
    Let $\{M_i\}_{i \in [m]}$ be a collection of $2$-outcome measurements. Let $\rho$ be any quantum state with the promise that there exists some $k$ such that
    $$\Tr(M_i\rho) \geq 3/4$$
    The threshold search problem is to, given some number of copies of $\rho$, output $k$ such that $\Tr(M_i\rho)\geq 1/3$.
\end{definition}

\begin{lemma}[Randomized Threshold Search Procedure~\cite{ARXIV:WatBos22}]
    There is an algorithm which uses $O(\log^2(m))$ space and copies of $\rho$ which solves the threshold search problem with constant probability.
\end{lemma}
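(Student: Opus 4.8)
The plan is to prove this by invoking (and, where needed, reproving) the threshold-search algorithm of Badescu--O'Donnell~\cite{STOC:BadODo21} together with its randomized, low-space refinement~\cite{ARXIV:WatBos22}. The engine of the argument is a \emph{quantum OR tester}: given a set $S \subseteq [m]$ of two-outcome measurements and copies of $\rho$, promised that either $\Tr(M_i\rho)\ge 3/4$ for some $i \in S$ or $\Tr(M_i\rho)\le 1/3$ for all $i \in S$, one can decide which case holds with constant confidence using $O(\log|S|)$ copies of $\rho$ while only mildly disturbing $\rho$; this is the quantum-union-bound / gentle-measurement statement used in the prior literature.

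First I would run a binary search over $[m]$: split the current candidate interval in half, apply the OR tester to each half, and recurse into a half that still contains an index $k$ with $\Tr(M_k\rho)$ above the current (slowly decaying) threshold — at least one half must, since the globally promised index survives. There are $O(\log m)$ levels; boosting each level's tester so that a union bound over all levels succeeds costs an extra $O(\log m)$ copies per level, for $O(\log^2 m)$ copies total, and the surviving singleton index is output at the end.

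Second I would control the disturbance to $\rho$ that accumulates over the $O(\log m)$ levels. Each OR test is close to projective, so by the gentle-measurement lemma the post-test state stays within inverse-polynomial trace distance of $\rho$ provided the test accepts with probability bounded away from zero. The \emph{randomization} is exactly what makes this work within $O(\log^2 m)$ space: following the gentle-random-measurement analysis of~\cite{ARXIV:WatBos22}, one measures a \emph{uniformly random} element of the current candidate set on each fresh copy, so that the expected (rather than worst-case) disturbance is small, and the algorithm never needs to coherently hold more than $O(\log m)$ copies at once. The $O(\log^2 m)$ space bound then follows, since at most $O(\log m)$ copies of $\rho$ are held during any single OR test, plus an $O(\log m)$-bit classical register tracking the current interval, and correctness follows by a union bound over the $O(\log m)$ levels.

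The main obstacle is this second step: bounding the compounding disturbance of $\rho$ across all $O(\log m)$ levels of recursion while simultaneously keeping both the copy count and the working space at $O(\log^2 m)$. This is precisely what the gentle-random-measurement machinery is designed to handle, and faithfully reproducing its guarantees — in particular, that a randomly chosen ``reject'' outcome does not degrade $\rho$ in expectation — is the technical core; for the purposes of this paper it suffices to cite~\cite{ARXIV:WatBos22} as a black box.
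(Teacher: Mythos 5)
The paper does not actually prove this lemma: it is imported as a black-box result from~\cite{ARXIV:WatBos22} (refining the threshold-search algorithm of~\cite{STOC:BadODo21}), which is exactly the conclusion you reach at the end of your proposal. Your sketch of the binary-search-over-indices plus gentle-random-measurement mechanism is a plausible high-level account of how the cited algorithm works, but since the paper itself offers no proof to compare against, citing the reference is both what the paper does and what you correctly settle on.
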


We adapt their attack to an arbitrary $\SSM$ setting (and inverse polynomial security error) as follows.

Let $OWSG^{\SSM} = (StateGen^{\SSM},Ver^{\SSM})$. Let $T_{Ver}$ be an upper bond on the number of queries made by $Ver$.

Let $\ket{\phi_n}$ be the state returned by $\SSM$.

Since in the $\SSM$ model, the oracle has no input, we can assume that $Ver(k,\rho)$ applies some unitary $U_k$ to the state $\rho_A \otimes \left(\ket{\phi_n}^{\otimes T_{Ver}}\right)_B$ and then measures a bit in the computational basis.

We define $\Pi_k$ to be the following operator acting on the input for $U_k$ on registers $AB$,
$$\Pi_k = U^{\dagger}_k(\ketbra{1} \otimes I)U_k$$
That is, $\Pi_k$ is the operator which checks if $Ver(k,\rho)$ accepts.

\begin{lemma}\label{lem:thresholda}
    Consider the following experiment
    \begin{enumerate}
        \item Sample $k\gets \{0,1\}^n$
        \item Sample $StateGen^{\SSM}(k) \to \rho_k$, $10n$ times
        \item Check if $\Pi_k^{\otimes 10n}$ accepts $\left(\rho_k \otimes \ket{\phi_n}^{\otimes T_{Ver}}\right)^{\otimes 10n}$
    \end{enumerate}
    Then the probability that this process accepts is at least $1-\frac{1}{n}$.
\end{lemma}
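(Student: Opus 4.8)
The plan is to recognize this as essentially a restatement of the correctness of the one-way state generator, pushed through a tensor power. Fix the state $\ket{\phi_n}\gets\mu_n$ that $\SSM$ samples at initialization, and for each $k\in\{0,1\}^n$ write
$$p_k \coloneqq \Tr\!\left(\Pi_k\left(\rho_k\otimes\ketbra{\phi_n}^{\otimes T_{Ver}}\right)\right),$$
where $\rho_k$ denotes (the density matrix of) $StateGen^{\SSM}(k)$ — which, like $\Pi_k$, also depends on $\ket{\phi_n}$ since $StateGen$ has oracle access. By construction $p_k$ is exactly the probability that a single honest run of $Ver^{\SSM}(k,\cdot)$ accepts a single honestly generated copy. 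The first step is to argue that the experiment in the lemma accepts with probability exactly $\E_{\ket{\phi_n}\gets\mu_n}\E_{k\gets\{0,1\}^n}\!\left[p_k^{10n}\right]$: the $10n$ samples of step~2 are independent, so their joint state is $\rho_k^{\otimes10n}$, the $10n$ disjoint blocks of verifier copies make the overall state $\left(\rho_k\otimes\ketbra{\phi_n}^{\otimes T_{Ver}}\right)^{\otimes10n}$, and then $\Tr\!\left(\Pi_k^{\otimes10n}\,\tau^{\otimes10n}\right)=\Tr(\Pi_k\tau)^{10n}=p_k^{10n}$ by multiplicativity of the trace over tensor products.

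The second step is an elementary estimate: for $x\in[0,1]$ and integer $m\geq1$ one has $1-x^{m}=(1-x)(1+x+\dots+x^{m-1})\leq m(1-x)$, i.e.\ $x^{m}\geq1-m(1-x)$. Applying this with $m=10n$ and taking expectations,
$$\E_{\ket{\phi_n}}\E_{k}\!\left[p_k^{10n}\right]\;\geq\;1-10n\cdot\E_{\ket{\phi_n}}\E_{k}\!\left[1-p_k\right].$$
But $\E_{\ket{\phi_n}}\E_{k}\!\left[1-p_k\right]$ is precisely the correctness error of the one-way state generator relative to $\SSM$ (see the definition in \Cref{sec:owsg}), which is negligible — in particular at most $\frac{1}{10n^2}$ for all sufficiently large $n$. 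Hence the acceptance probability is at least $1-10n\cdot\frac{1}{10n^2}=1-\frac1n$, which is the claim.

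I do not expect a genuine obstacle here; this is a warm-up toward the threshold-search attack. The two places that need a line of care are: (i) justifying the factorization of the acceptance probability even when $StateGen^{\SSM}$ is randomized or outputs a mixed state — handled by working with density matrices and noting that independent samples tensor — and (ii) recalling, as already observed before the lemma statement, that $\Pi_k=U_k^\dagger(\ketbra{1}\otimes I)U_k$ is a well-defined projector once any internal randomness of $Ver$ is folded into $U_k$ and the $T_{Ver}$ queried copies of $\ket{\phi_n}$. It is worth noting that the stated bound $1-\frac1n$ is deliberately loose: it is all the subsequent argument needs, and it survives even under a merely inverse-polynomial correctness error $o(1/n^2)$, which is convenient when later porting the attack to the $\SSMM$ model.
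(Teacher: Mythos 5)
Your proof is correct and is essentially the same argument the paper gives: you apply the correctness guarantee to each of the $10n$ tensor factors and then take a union-bound/Bernoulli-type estimate $p_k^{10n}\geq 1-10n(1-p_k)$ before averaging over $k$ and $\ket{\phi_n}$, exactly what the paper's appeal to ``$\Pr[BAD_i]\leq\frac{1}{10n^2}$ plus union bound'' amounts to. Your write-up is a bit more careful than the paper's in justifying the factorization $\Tr(\Pi_k^{\otimes 10n}\tau^{\otimes 10n})=p_k^{10n}$, and your final arithmetic $1-\frac{10n}{10n^2}=1-\frac1n$ is the right constant (the paper's displayed $1-\frac{1}{10n}$ is a slip, but the weaker $1-\frac1n$ is what the lemma claims and what both arguments actually deliver).
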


\begin{proof}
    Note that this experiment is just the correctness game for the one-way state generator played $10n$ times with the same $k$. In particular, if $BAD_i$ is the event that the $i$th copy of the verifier rejects, then correctness says that $\Pr[BAD_i] \leq \frac{1}{10n^2}$. And so by the union bound, $\Pi_k$ accepts with probability $\geq 1 - \frac{10n}{10n^2} = 1-\frac{1}{10n}$.
\end{proof}

\begin{lemma}\label{lem:thresholdb}
    If for any key $k$ and fixed state $\rho$, if the probability that $\Pi_k^{\otimes 10n}$ accepts $\left(\rho \otimes \ket{\phi_n}^{\otimes T_{Ver}}\right)^{\otimes 10n}$ is $\geq 1/3$, then the probability that $\Pi_k$ accepts $\rho \otimes \ket{\phi_n}^{\otimes T_{Ver}}$ is $\geq 1 - \frac{1}{n}$.
\end{lemma}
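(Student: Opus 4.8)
The plan is to exploit the fact that the state on which $\Pi_k^{\otimes 10n}$ acts is \emph{literally} a $10n$-fold tensor power of a single fixed state, so that the acceptance probability factorizes. First I would observe that $\Pi_k = U_k^\dagger(\ketbra{1}\otimes I)U_k$ is a projector, hence so is $\Pi_k^{\otimes 10n}$, and set
$$p \coloneq \Tr\!\left(\Pi_k\left(\rho \otimes \ketbra{\phi_n}^{\otimes T_{Ver}}\right)\right),$$
the probability that $\Pi_k$ accepts a single copy. Since $\left(\rho \otimes \ketbra{\phi_n}^{\otimes T_{Ver}}\right)^{\otimes 10n}$ is a product state and $\Pi_k^{\otimes 10n}$ is the corresponding tensor-product measurement, the probability that $\Pi_k^{\otimes 10n}$ accepts equals $p^{10n}$.

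Next I would use the hypothesis $p^{10n} \geq 1/3$ to conclude $p \geq 3^{-1/(10n)}$, and then bound this below using the elementary inequality $e^{-y} \geq 1 - y$ for $y \geq 0$: writing $3^{-1/(10n)} = e^{-\ln 3/(10n)}$ gives
$$p \;\geq\; 1 - \frac{\ln 3}{10n} \;\geq\; 1 - \frac{1}{n},$$
where the last step uses $\ln 3 < 10$. This is exactly the claimed bound, so the lemma follows.

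There is essentially no hard step here; the only point requiring (minor) care is justifying that the $10n$-copy acceptance probability is exactly $p^{10n}$ rather than merely bounded by it — this is immediate because the $\ket{\phi_n}$ registers are fresh, independent copies and $\rho$ is a single fixed state, so the global state is a genuine product and no entanglement across the $10n$ blocks can occur. I view this lemma as the ``one-sided amplification'' warm-up that will be combined with \Cref{lem:thresholda} and the randomized threshold search procedure: \Cref{lem:thresholda} guarantees the existence of a good key (acceptance $\geq 3/4$, comfortably above $1/3$), threshold search then finds some $k$ with $10n$-copy acceptance $\geq 1/3$, and the present lemma converts that back into single-copy acceptance $\geq 1 - 1/n$, which will contradict $(O(n^2),\tfrac{1}{10n^2})$-security of the one-way state generator.
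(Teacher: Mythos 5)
Your proof is correct and matches the paper's argument essentially line for line: set $p = \Tr(\Pi_k(\rho\otimes\ketbra{\phi_n}^{\otimes T_{Ver}}))$, observe the $10n$-copy acceptance probability factorizes to $p^{10n}$ on a product state, and conclude $p \geq 3^{-1/(10n)} = e^{-\ln 3/(10n)} \geq 1 - \ln 3/(10n) \geq 1 - 1/n$. Your extra remark justifying the factorization (fresh independent $\ket{\phi_n}$ registers, fixed $\rho$, so no cross-block entanglement) is a harmless elaboration of a step the paper treats as immediate.
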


\begin{proof}
    Let $p = \Tr\left(\Pi_k \left(\rho \otimes \ket{\phi_n}^{\otimes T_{Ver}}\right)\right)$ be the probability that $Ver(k,\cdot)$ accepts $\rho$. Then $p^{10n} \geq \frac{1}{3}$ and so we get
    \begin{equation}
        \begin{split}
            p \geq \frac{1}{3^{1/10n}} = e^{-\frac{\ln 3}{10n}} \geq 1 - \frac{\ln 3}{10n} \geq 1 - \frac{1}{n}
        \end{split}
    \end{equation}
\end{proof}

These two lemmas together show that for an honestly sampled $StateGen(k)\to \rho_k$, the threshold search procedure will find an accepting preimage. 

In particular, the algorithm to find $k$ given $\rho_k^{O(n^2)}$ is to run threshold search for $\{\Pi_k\}_{k\in 2^n}$ on $O(n)$ copies of $\left(\rho_k\otimes \ket{\phi_n}^{\otimes T_{Ver}}\right)^{\otimes 10n}$. Note that in this case, $m=2^n$, and so threshold search requires $O(\log(2^n))=O(n)$ copies.

From~\Cref{lem:thresholda}, we know that for $StateGen(k)\to \rho_k$, 
$$\Tr\left(\Pi_k \left(\rho_k\otimes \ket{\phi_n}^{\otimes T_{Ver}}\right)^{\otimes 10n}\right) \geq \frac{3}{4}$$ 
and so the promise of threshold search is met. And so, with constant probability, this algorithm finds some $k'$ such that 
$$\Tr\left(\Pi_{k'} \left(\rho_k\otimes \ket{\phi_n}^{\otimes T_{Ver}}\right)^{\otimes 10n}\right) \geq \frac{1}{3}$$
By~\Cref{lem:thresholdb}, this means that $Ver(k',\rho_k)$ accepts with probability at least $1-\frac{1}{n}$. 

In total, this algorithm wins the game $G_{OWSG}^{sec-O(n^2)}$ with probability $c\cdot (1-1/n)$ for some constant $c$ (the constant of threshold search). In particular, this is greater than $\frac{1}{10n^2}$, and so~\Cref{lem:noowsg} follows.

\section{Acknowledgments}
We would like to thank Aditya Gulati for pointing out a linear algebra mistake in an earlier version of this paper.

\bibliographystyle{alpha} 
\bibliography{abbrev0,crypto,bib}

\appendix
\section{Indifferentiability Composition Theorems}\label{sec:comp}

\begin{theorem}[Bounded query composition theorem][\Cref{thm:indiffcomp} restated]
Let $\mathcal{P},\mathcal{Q}$ be two idealized primitives, and let $C^{\mathcal{P}}$ be a construction $(T_{\Sim},T_1,T_2,\delta)$-indifferentiable from $\mathcal{Q}$. Let $G$ be any single-stage cryptographic game making at most $T_{G,1}$ queries to its primitive and $T_{G,2}$ queries to its adversary, and let $\Prim^{\mathcal{Q}}$ be any primitive relative to $\mathcal{Q}$ making at most $T_{\Prim}$ queries to its oracle. Let $\epsilon,T_{\A}:\N\to [0,1]$ be any functions and let $c$ be any constant.

As long as $T_{G,1}\cdot T_{\Prim} \leq T_2$ and $T_{G,2}\cdot T_{\A} \leq T_1$, then if $\Prim^{\cal{Q}}$ is $(T_{\A}\cdot T_{\Sim},\epsilon;c)$-secure under $G$ relative to $\cal{Q}$, then $\Prim^{C^{\cal{P}}}$ is $\left(T_{\A}, \epsilon + \delta;c\right)$-secure under $G$ relative to $\cal{P}$.
\end{theorem}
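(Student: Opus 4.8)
The plan is to run the standard indifferentiability composition argument of~\cite{TCC:MauRenHol04}, adapted to the query-bounded (rather than time-bounded) accounting of our setting. Assume $\Prim^{\mathcal{Q}}$ is $(T_{\A}\cdot T_{\Sim},\epsilon;c)$-secure under $G$ relative to $\mathcal{Q}$, and suppose toward contradiction that $\Prim^{C^{\mathcal{P}}}$ is \emph{not} $(T_{\A},\epsilon+\delta;c)$-secure under $G$ relative to $\mathcal{P}$. Then there is an oracle adversary $\B^{\mathcal{P}}$ making at most $T_{\A}$ queries such that $\Pr[G(1^n,\Prim^{C^{\mathcal{P}}},\B^{\mathcal{P}_n})\to 1]-c>\epsilon(n)+\delta(n)$ for infinitely many $n$.

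The first step is to fold the entire security experiment into a single indifferentiability distinguisher. Define $\mathcal{D}^{\mathcal{O}_1,\mathcal{O}_2}$ to internally simulate the challenger of $G$ on input $1^n$: each of the challenger's (at most $T_{G,1}$) primitive queries is answered by executing $\Prim$'s code with $\Prim$'s oracle calls routed to $\mathcal{O}_2$, and each of the challenger's (at most $T_{G,2}$) adversary queries is answered by executing $\B$'s code with $\B$'s oracle calls routed to $\mathcal{O}_1$, carrying the states of the challenger, of $\Prim$, and of $\B$ across invocations as appropriate. Since $G$ is single-stage there is only one stateful adversary procedure to run, so $\mathcal{D}$ is well-defined (and at worst query-bounded). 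Finally $\mathcal{D}$ outputs the bit the challenger outputs. In the real indifferentiability world $(\mathcal{O}_1,\mathcal{O}_2)=(\mathcal{P}_n,C^{\mathcal{P}_n})$, so $\mathcal{D}$ reproduces exactly $G(1^n,\Prim^{C^{\mathcal{P}}},\B^{\mathcal{P}_n})$; in the ideal world $(\mathcal{O}_1,\mathcal{O}_2)=(\Sim^{\mathcal{Q}_n},\mathcal{Q}_n)$ for whatever simulator $\Sim$ indifferentiability produces for $\mathcal{D}$, so $\mathcal{D}$ reproduces $G(1^n,\Prim^{\mathcal{Q}},\B^{\Sim^{\mathcal{Q}_n}})$.

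Next I would verify the query bounds, so that indifferentiability applies to $\mathcal{D}$. Its $\mathcal{O}_1$-queries arise only from the at most $T_{G,2}$ runs of $\B$, each making at most $T_{\A}$ oracle queries, hence at most $T_{G,2}\cdot T_{\A}\leq T_1$ in total; its $\mathcal{O}_2$-queries arise only from the at most $T_{G,1}$ executions of $\Prim$, each making at most $T_{\Prim}$ oracle queries, hence at most $T_{G,1}\cdot T_{\Prim}\leq T_2$ in total. Thus there is a simulator $\Sim^{\mathcal{Q}}$ making at most $T_{\Sim}$ queries with $|\Pr[\mathcal{D}^{\mathcal{P}_n,C^{\mathcal{P}_n}}\to 1]-\Pr[\mathcal{D}^{\Sim^{\mathcal{Q}_n},\mathcal{Q}_n}\to 1]|\leq\delta(n)$ for all sufficiently large $n$, i.e. $\Pr[G(1^n,\Prim^{\mathcal{Q}},\B^{\Sim^{\mathcal{Q}_n}})\to 1]\geq\Pr[G(1^n,\Prim^{C^{\mathcal{P}}},\B^{\mathcal{P}_n})\to 1]-\delta(n)$. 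Now set $\B':=\B^{\Sim^{\mathcal{Q}}}$, regarded as one stateful $\mathcal{Q}$-adversary whose internal state includes that of $\Sim$: each of its at most $T_{\A}$ calls to $\Sim$ triggers at most $T_{\Sim}$ calls to $\mathcal{Q}$, so $\B'$ makes at most $T_{\A}\cdot T_{\Sim}$ queries to $\mathcal{Q}$; and for the infinitely many sufficiently large bad $n$, $\Pr[G(1^n,\Prim^{\mathcal{Q}},\B'^{\mathcal{Q}_n})\to 1]-c>\epsilon(n)+\delta(n)-\delta(n)=\epsilon(n)$, contradicting $(T_{\A}\cdot T_{\Sim},\epsilon;c)$-security of $\Prim^{\mathcal{Q}}$ relative to $\mathcal{Q}$.

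The only delicate points are bookkeeping. The single-stage hypothesis is exactly what lets us bundle the challenger together with all of its calls to $\B$ into one distinguisher $\mathcal{D}$, so that a \emph{single} simulator instance suffices; with a multi-stage game the isolated adversaries would each demand an independent fresh simulator, which standard (non-reset) indifferentiability does not provide — this is precisely why the reset-indifferentiability remark can drop the single-stage restriction. One should also double-check that routing $\Prim$'s oracle calls to $\mathcal{O}_2$ and $\B$'s oracle calls to $\mathcal{O}_1$ is faithful to the definitions of $\Prim^{C^{\mathcal{P}}}$, $\Prim^{\mathcal{Q}}$, $\B^{\mathcal{P}}$, and $\B^{\Sim^{\mathcal{Q}}}$, so that the two worlds of $\mathcal{D}$ genuinely coincide with the real and ideal security experiments; everything else is just the two query counts displayed above.
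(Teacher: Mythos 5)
Your proposal is correct and follows essentially the same argument as the paper: view the entire security experiment $G(1^n,\Prim^{C^{\mathcal{P}}},\A^{\mathcal{P}})$ as an indifferentiability distinguisher with the two query bounds $T_{G,1}\cdot T_{\Prim}\leq T_2$ and $T_{G,2}\cdot T_{\A}\leq T_1$, then replace $(\mathcal{P},C^{\mathcal{P}})$ by $(\Sim^{\mathcal{Q}},\mathcal{Q})$ and absorb $\Sim$ into a $\mathcal{Q}$-adversary making $T_{\A}\cdot T_{\Sim}$ queries. The only cosmetic difference is that you phrase it as a contradiction while the paper argues directly; your remarks on the role of the single-stage hypothesis and on the ``infinitely many sufficiently large $n$'' bookkeeping are accurate and consistent with the paper's definitions.
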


\begin{proof}
    Given any adversary $\mathcal{A}^{\mathcal{P}}$ making at most $T_{\A}$ queries to its oracle, we can define an adversary $\mathcal{B}^{\mathcal{Q}}$ which simulates $\mathcal{A}$, and whenever $\mathcal{A}$ makes a query to $\mathcal{P}$, $\mathcal{B}$ answers it by running $\Sim^{\mathcal{Q}}$. Note that $\mathcal{B}$ makes at most $T_{\A} \cdot T_{\Sim}$ queries to its oracle.

    Note that $G(1^n,\Prim^{C^{\mathcal{P}}},\mathcal{A}^{\mathcal{P}})$ is an algorithm making at most $T_{G,1}\cdot T_{\Prim} \leq T_2$ queries to the construction and at most $T_{G,2}\cdot T_{\A} \leq T_1$ direct queries to $\mathcal{P}$.

    By $(T_{\Sim},T_1,T_2,\epsilon)$-indifferentiability,
    $$\abs{\Pr[G(1^n, \Prim^{C^{\mathcal{P}}},\mathcal{A}^{\mathcal{P}},)] - \Pr[G(1^n, \Prim^{\mathcal{Q}},\mathcal{A}^{\Sim^{\mathcal{Q}}})]} \leq \delta(n)$$

    But note that $\mathcal{A}^{\Sim^{\mathcal{Q}}} = \mathcal{B}^{\mathcal{Q}}$. By $(T_{\A}\cdot T_{\Sim},c,\epsilon)$-security of $\Prim$ under $G$ relative to $\cal{Q}$,
    $$\Pr[G(1^n,\Prim^{\mathcal{Q}},\mathcal{B}^{\mathcal{Q}})] - c \leq \epsilon(n)$$
    
    By the transitive property, we have 
    $$\Pr[G(1^n,\Prim^{C^{\mathcal{P}}},\mathcal{A}^{\mathcal{P}})] - c\leq \epsilon(n) + \delta(n)$$
\end{proof}

\begin{corollary}[General composition theorem][\Cref{cor:compcor} restated]
    Let $\mathcal{P},\mathcal{Q}$ be two idealized primitives, let $C^{\mathcal{P}}$ be a construction of $\mathcal{Q}$ from $\mathcal{P}$, and let $\delta:\N\to [0,1]$. If, for all $p = \poly(n)$, there exists some $q = \poly(n)$ such that $C^{\mathcal{P}}$ is $(q,p,p,\delta)$-indifferentiable from $\mathcal{Q}$, then the following holds:

    For all primitives $\Prim^{\mathcal{Q}}$ and single-stage cryptographic games $G$, if $\Prim^{\mathcal{Q}}$ is $(\epsilon;c)$-secure under $G$ relative to $\mathcal{Q}$, then $\Prim^{C^{\mathcal{P}}}$ is $(\epsilon+\delta;c)$-secure under $G$ relative to $\mathcal{P}$.
\end{corollary}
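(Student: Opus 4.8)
The plan is to deduce this directly from Theorem~\ref{thm:indiffcomp} by a bookkeeping argument over the relevant polynomial query bounds. Fix a primitive $\Prim^{\mathcal{Q}}$ and a single-stage game $G$, and assume $\Prim^{\mathcal{Q}}$ is $(\epsilon;c)$-secure under $G$ relative to $\mathcal{Q}$. By the definition of $(\epsilon+\delta;c)$-security, it suffices to show that for \emph{every} polynomial query bound $T_{\A}$, the primitive $\Prim^{C^{\mathcal{P}}}$ is $(T_{\A},\epsilon+\delta;c)$-secure under $G$ relative to $\mathcal{P}$; so fix such a $T_{\A}$.

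Since $G$ is a QPT oracle procedure it makes some polynomial number $T_{G,1}$ of queries to its primitive and $T_{G,2}$ to its adversary, and since $\Prim^{\mathcal{Q}}$ is a QPT oracle algorithm it makes some polynomial number $T_{\Prim}$ of oracle queries. Set $p \coloneq \max(T_{G,1}\cdot T_{\Prim},\, T_{G,2}\cdot T_{\A})$, which is again a polynomial. By hypothesis there is a polynomial $q$ with $C^{\mathcal{P}}$ being $(q,p,p,\delta)$-indifferentiable from $\mathcal{Q}$; instantiate Theorem~\ref{thm:indiffcomp} with $T_{\Sim}=q$ and $T_1=T_2=p$. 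The required inequalities $T_{G,1}\cdot T_{\Prim}\leq T_2$ and $T_{G,2}\cdot T_{\A}\leq T_1$ then hold by the choice of $p$.

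To apply the theorem we need $\Prim^{\mathcal{Q}}$ to be $(T_{\A}\cdot T_{\Sim},\epsilon;c)$-secure under $G$ relative to $\mathcal{Q}$. But $T_{\A}\cdot T_{\Sim}=T_{\A}\cdot q$ is a product of two polynomials, hence polynomially bounded, so this is immediate from the $(\epsilon;c)$-security of $\Prim^{\mathcal{Q}}$, which by definition gives $(t,\epsilon;c)$-security for every polynomially bounded $t$. Theorem~\ref{thm:indiffcomp} then yields that $\Prim^{C^{\mathcal{P}}}$ is $(T_{\A},\epsilon+\delta;c)$-secure under $G$ relative to $\mathcal{P}$. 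Since $T_{\A}$ was an arbitrary polynomial, $\Prim^{C^{\mathcal{P}}}$ is $(\epsilon+\delta;c)$-secure under $G$ relative to $\mathcal{P}$.

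I do not expect a genuine obstacle here; the only point that requires care is the nesting of quantifiers. The simulator bound $q$ is permitted to depend on $p$, and hence on the adversary's query bound $T_{\A}$, which is precisely why the argument fixes $T_{\A}$ first and only afterwards extracts $q$ from the indifferentiability hypothesis. A hypothetical strengthening that asked for one uniform simulator good for all polynomial $T_{\A}$ at once would not follow this way, but the definition of $(\epsilon+\delta;c)$-security quantifies over a single $T_{\A}$ at a time, so this weaker form is exactly what is needed.
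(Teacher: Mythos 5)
Your proof is correct and follows essentially the same route as the paper: fix a polynomial adversary query bound $T_\A$, set $p = \max(T_{G,1}\cdot T_{\Prim},\, T_{G,2}\cdot T_\A)$, extract the corresponding $q$ from the indifferentiability hypothesis, and invoke Theorem~\ref{thm:indiffcomp}. Your remark about the quantifier nesting (choosing $T_\A$ before $q$) is exactly the point the paper relies on implicitly.
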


\begin{proof}
    Let $\A^{\mathcal{P}}$ be any adversary making at most $T_\A$ queries to its oracle. Let $T_{G,1},T_{G,2}\leq \poly(n)$ be an upper bound for the number of queries to $\Prim,\A$ respectively by $G(\Prim,\A)$. Note that the values of $T_{G,1},T_{G,2}$ may depend on $\A$.
    
    Let $p=\max(T_{G,1}\cdot T_{\Prim}, T_{G,2}\cdot T_\A)$. We know that for some $q\leq \poly(n)$, $C^{\mathcal{P}}$ is $(q,p,p,\delta)$-indifferentiable from $\mathcal{Q}$. Furthermore, since $T_{\A}\cdot q\leq \poly(n)$, we know that $\Prim^{\mathcal{Q}}$ is $(T_\A\cdot q,\epsilon; c)$ secure under $G$ relative to $\mathcal{Q}$.
    
    And so the exact same argument as~\Cref{thm:indiffcomp} gives that 
    $$\Pr[G(1^n,\Prim^{C^{\mathcal{P}}},\mathcal{A}^{\mathcal{P}})] - c\leq \epsilon(n) + \delta(n)$$
\end{proof}

\begin{theorem}[LOCC composition theorem][\Cref{thm:locccomp} restated]
Let $\mathcal{P},\mathcal{Q}$ be two idealized primitives, and let $C^{\mathcal{P}}$ be a (possibly stateful) construction $(\ell_\A,T_{\Sim},T_1,T_2,\delta)$-LOCC indifferentiable from $\mathcal{Q}$. Let $G$ be any LOCC cryptographic game making at most $T_{G,1}$ queries to its primitive and $T_{G,2}$ queries to its adversary, with at most $\ell_\A$ adversaries. Let $\Prim^{\mathcal{Q}}$ be any primitive relative to $\mathcal{Q}$ making at most $T_{\Prim}$ queries to its oracle. Let $\epsilon,T_{\A}:\N\to [0,1]$ be any functions and let $c$ be any constant.

As long as $T_{G,1}\cdot T_{\Prim} \leq T_2$ and $T_{G,2}\cdot T_{\A} \leq T_1$, then if $\Prim^{\cal{Q}}$ is $(T_{\A}\cdot T_{\Sim},\epsilon;c)$-secure under $G$ relative to $\cal{Q}$, then $\Prim^{C^{\cal{P}}}$ is $\left(T_{\A}, \epsilon + \delta;c\right)$-secure under $G$ relative to $\cal{P}$.
\end{theorem}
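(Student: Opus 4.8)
The plan is to follow the proof of Theorem~\ref{thm:indiffcomp} almost verbatim, replacing ordinary indifferentiability by LOCC indifferentiability and exploiting the LOCC structure of $G$ to license the use of \emph{many} fresh copies of the construction.

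First I would fix an arbitrary LOCC adversary $\A^{\mathcal{P}}=(\A_1,\dots,\A_{\ell_\A})$ against $\Prim^{C^{\mathcal{P}}}$ under $G$ making at most $T_\A$ queries to $\mathcal{P}$ in total, and form the LOCC adversary $\B^{\mathcal{Q}}=\A^{\Sim^{\mathcal{Q}}}=(\B_1,\dots,\B_{\ell_\A})$ relative to $\mathcal{Q}$, where $\B_j$ runs $\A_j$ and answers all of $\A_j$'s queries to $\mathcal{P}$ through a single fresh instance $\Sim_j^{\mathcal{Q}}$ of the simulator promised by $(\ell_\A,T_{\Sim},T_1,T_2,\delta)$-LOCC indifferentiability. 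Since the $\A_j$'s make at most $T_\A$ queries in total and each is answered with at most $T_{\Sim}$ queries to $\mathcal{Q}$, the adversary $\B^{\mathcal{Q}}$ makes at most $T_\A\cdot T_{\Sim}$ queries; moreover $\B^{\mathcal{Q}}$ has the same external syntax as $\A^{\mathcal{P}}$, so it is a compatible adversary for $G$, and it is LOCC because the $\Sim_j^{\mathcal{Q}}$ are local to the individual $\B_j$.

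The main step is to exhibit $G(1^n,\Prim^{C^{\mathcal{P}}},\A^{\mathcal{P}})$ and $G(1^n,\Prim^{\mathcal{Q}},\B^{\mathcal{Q}})$ as the real and ideal worlds of the LOCC indifferentiability experiment against a single LOCC distinguisher. Since $G$ is a LOCC game, its execution factors into classically-communicating parties $G_1^{\Prim_1}\leftrightarrows\dots\leftrightarrows G_{\ell_\Prim}^{\Prim_{\ell_\Prim}}\leftrightarrows G_{\ell_\Prim+1}^{\A_1}\leftrightarrows\dots\leftrightarrows G_{\ell_\Prim+\ell_\A}^{\A_{\ell_\A}}$; I would merge each $G_i$ with the $i$-th primitive instance into a party touching only the ``second oracle'' — a fresh copy $C_i^{\mathcal{P}}$ of the construction in the real world, and $\mathcal{Q}$ in the ideal world — and merge each $G_{\ell_\Prim+j}$ with $\A_j$ into a party touching only the ``first oracle'' — $\mathcal{P}$ in the real world, and a fresh simulator $\Sim_j^{\mathcal{Q}}$ in the ideal world. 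All cross-party messages are classical because $G$ is LOCC, and the per-instance freshness is precisely what LOCC indifferentiability permits. Counting queries, this distinguisher makes at most $T_{G,2}\cdot T_\A\le T_1$ queries to the first oracle and at most $T_{G,1}\cdot T_{\Prim}\le T_2$ to the second, so LOCC indifferentiability yields
\[
\abs{\Pr[G(1^n,\Prim^{C^{\mathcal{P}}},\A^{\mathcal{P}})\to 1]-\Pr[G(1^n,\Prim^{\mathcal{Q}},\B^{\mathcal{Q}})\to 1]}\le\delta(n).
\]
As $\B^{\mathcal{Q}}$ makes at most $T_\A\cdot T_{\Sim}$ queries, $(T_\A\cdot T_{\Sim},\epsilon;c)$-security of $\Prim^{\mathcal{Q}}$ under $G$ relative to $\mathcal{Q}$ gives $\Pr[G(1^n,\Prim^{\mathcal{Q}},\B^{\mathcal{Q}})\to 1]-c\le\epsilon(n)$ for all sufficiently large $n$, and the triangle inequality then gives $\Pr[G(1^n,\Prim^{C^{\mathcal{P}}},\A^{\mathcal{P}})\to 1]-c\le\epsilon(n)+\delta(n)$, which is exactly $(T_\A,\epsilon+\delta;c)$-security.

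The delicate point is the reduction to LOCC indifferentiability: one must check that the composite object is a legal LOCC distinguisher — each local operation acts on a single copy of the construction/simulator and all inter-party communication is classical, which is exactly where the hypotheses that $G$ is a LOCC game (in particular the one-query restriction on invocations of stateless primitive and adversary procedures) are needed — and that the number of LOCC parties and the separate first- and second-oracle query budgets line up with the $(\ell_\A,T_{\Sim},T_1,T_2,\delta)$ parameters, in particular that the primitive instances receive \emph{independent} fresh copies of $C^{\mathcal{P}}$; this last feature is what separates LOCC indifferentiability from the plain notion used in Theorem~\ref{thm:indiffcomp} and is the whole reason a possibly stateful construction is allowed here. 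Once this structural bookkeeping is settled, the remaining steps are identical to the two-line argument of Theorem~\ref{thm:indiffcomp}.
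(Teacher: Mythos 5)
Your proposal is correct and follows essentially the same route as the paper's own proof: define $\B_j^{\mathcal{Q}} = \A_j^{\Sim_j^{\mathcal{Q}}}$ where each $\B_j$ initializes a fresh copy of the simulator, observe that the LOCC game factors into parties so that $G(1^n,\Prim^{C^{\mathcal{P}}},\A^{\mathcal{P}})$ and $G(1^n,\Prim^{\mathcal{Q}},\B^{\mathcal{Q}})$ are the real and ideal worlds of a single LOCC indifferentiability experiment, count queries, and finish by LOCC indifferentiability plus the assumed security of $\Prim^{\mathcal{Q}}$. The paper's write-up is terser (it simply says the proof is ``exactly the same'' as the bounded-query composition theorem, with the two observations you also make), so your more explicit account of which oracle pair each merged $G_i$ touches, and why per-instance freshness of the construction is the key structural point, is a faithful expansion rather than a different argument. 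One small bookkeeping point that you, like the paper, pass over lightly: the LOCC indifferentiability parameter counts the number of distinguisher parties, while the reduction's distinguisher has one party per \emph{primitive} algorithm plus one per adversary algorithm, i.e.\ $\ell_\Prim + \ell_\A$ parties rather than $\ell_\A$; this is harmless in the paper's application because the concrete LOCC-indifferentiability bound (\Cref{cor:finalloccindiff}) holds for any polynomial number of parties, but it is worth stating explicitly when invoking the abstract composition theorem.
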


\begin{proof}
    This proof is exactly the same as~\Cref{thm:indiffcomp}. In particular, given any adversary \\$\A^{\mathcal{P}} = (\A_1^{\mathcal{P}},\dots,\A_{\ell_\A}^{\mathcal{P}})$, we define the adversary $\B^{\mathcal{Q}} = (\B_1^{\mathcal{Q}},\dots,\B_{\ell_\A}^{\mathcal{Q}})$ so that $\B_i^{\mathcal{Q}} = \A_i^{\Sim^{\mathcal{Q}}}$. In particular, each $\B_i^{\mathcal{Q}}$ initializes its own copy of the simulator $\Sim$.

    We then observe that the game $G(1^n,\Prim^{C^{\mathcal{P}}},\A^{\mathcal{P}})$ is an LOCC algorithm with query access to \\$((\mathcal{P},C_1^{\mathcal{P}}), \dots, (\mathcal{P},C_t^{\mathcal{P}}))$, while the game $G(1^n,\Prim^{\mathcal{Q}},\A^{\Sim^{\mathcal{Q}}})$ is an LOCC algorithm with query access to $((\Sim_1^{\mathcal{Q}},\mathcal{Q}), \dots, (\Sim_5^{\mathcal{Q}},\mathcal{Q}))$.

    The rest of the proof proceeds identically to~\Cref{thm:indiffcomp}, replacing indifferentiability with LOCC-indifferentiability.
\end{proof}
\section{Combinatorial Lemmas for~\Cref{sec:locc}}\label{sec:lemmas}

\begin{lemma}[\Cref{lem:zerosplit} restated]
    Let $A_1,A_2,B_1,B_2$ be registers containing $a_1,a_2,b_1,b_2$ states of dimension $N+1$ respectively. For all collision-free $T$,
    \begin{equation*}
        \begin{split}
            \ket{\Zero(T,(B_1,B_2))}_{A_1,A_2,B_1,B_2} \\
            = \sum_{\substack{X \subseteq T\\|X|\in [a_1,a_1+b_1]\\|T|-|X|\in [a_2,a_2+b_2]}} \sqrt{\alpha_{|X|}} \ket{\Zero(X,B_1)}_{A_1,B_1}\ket{\Zero(T\setminus X,B_2)}_{A_2,B_2}
        \end{split}
    \end{equation*}
    where
    $$\alpha_i = \frac{{b_1 \choose i-a_1}{b_2 \choose |T|-i-a_2}}{{b_1+b_2 \choose |T|-a_1-a_2}{|T| \choose i}}$$
\end{lemma}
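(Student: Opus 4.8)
The plan is a direct counting argument: expand both sides in the computational basis and match coefficients. First I would write the left-hand side using the explicit formula for collision-free types,
$$\ket{\Zero(T,(B_1,B_2))}_{A_1,A_2,B_1,B_2} = \frac{1}{\sqrt{|T|!\,\binom{b_1+b_2}{|T|-a_1-a_2}}}\sum_{v}\ket{v},$$
where $v$ ranges over all length-$(a_1+a_2+b_1+b_2)$ strings of type $T^{0(a_1+a_2+b_1+b_2)}$ whose entries in the $A_1$ and $A_2$ positions are all nonzero. Since $T$ is collision-free, each element of $T$ occurs exactly once in any such $v$, so it sits either in the combined register $(A_1,B_1)$ or in $(A_2,B_2)$.

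Next I would partition the basis states $\ket{v}$ according to $X \coloneq \{\text{nonzero values appearing in }(A_1,B_1)\}$; then $X\subseteq T$ and $T\setminus X$ is the set of nonzero values in $(A_2,B_2)$. The requirement that the $A_1$ positions hold $a_1$ distinct nonzero entries while the $b_1$ positions of $B_1$ hold the remaining $|X|-a_1$ nonzero entries of $X$ (padded with zeros) forces $a_1\le |X|\le a_1+b_1$, and symmetrically $a_2\le|T|-|X|\le a_2+b_2$; these are exactly the index set of the sum on the right-hand side, equivalently the values of $i=|X|$ for which $\alpha_i\ne 0$. For a fixed admissible $X$ with $|X|=i$, the $v$'s in that block are in bijection with pairs $(w,u)$, where $w$ ranges over the strings on $(A_1,B_1)$ of type $X^{0(a_1+b_1)}$ with nonzero $A_1$-entries and $u$ ranges over the strings on $(A_2,B_2)$ of type $(T\setminus X)^{0(a_2+b_2)}$ with nonzero $A_2$-entries. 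In other words, the $X$-block of the left-hand side is, up to the register reordering between $(A_1,A_2,B_1,B_2)$ and $(A_1,B_1),(A_2,B_2)$, precisely the unnormalized vector defining $\ket{\Zero(X,B_1)}_{A_1,B_1}\otimes\ket{\Zero(T\setminus X,B_2)}_{A_2,B_2}$.

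Collecting the three normalization constants, the restriction of the left-hand side to the $X$-block equals
$$\sqrt{\frac{i!\,\binom{b_1}{i-a_1}\,(|T|-i)!\,\binom{b_2}{|T|-i-a_2}}{|T|!\,\binom{b_1+b_2}{|T|-a_1-a_2}}}\;\ket{\Zero(X,B_1)}_{A_1,B_1}\ket{\Zero(T\setminus X,B_2)}_{A_2,B_2},$$
and a one-line rearrangement using $\frac{i!\,(|T|-i)!}{|T|!}=\binom{|T|}{i}^{-1}$ identifies the square-root prefactor with $\sqrt{\alpha_i}$. Summing over all admissible $X$ yields the claim. The only thing requiring care — rather than genuine difficulty — is the bookkeeping: keeping the register reordering straight and checking that the three normalizations combine correctly. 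There is no conceptual obstacle; the identity simply says that a uniform superposition over "placements of the elements of $T$ into two groups of registers" factors, with the $\binom{|T|}{i}$ in the denominator of $\alpha_i$ accounting for the choice of which $i$ elements of $T$ land in the first group.
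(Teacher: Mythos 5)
Your proof is correct and uses essentially the same approach as the paper's: a direct coefficient-matching argument that partitions the computational-basis expansion of $\ket{\Zero(T,(B_1,B_2))}$ by the set $X\subseteq T$ of nonzero values falling in $(A_1,B_1)$, then combines the three normalization constants (using $\tfrac{i!\,(|T|-i)!}{|T|!}=\tbinom{|T|}{i}^{-1}$) to recover $\sqrt{\alpha_{|X|}}$. The only cosmetic difference is direction: the paper expands the right-hand side and collapses it to $\ket{\Zero(T,(B_1,B_2))}$, whereas you expand the left-hand side and split it, but the bookkeeping is the same.
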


\begin{proof}
    \begin{equation}
        \begin{split}
            \sum_{\substack{X \subseteq T\\|X|\in [a_1,a_1+b_1]\\|T|-|X|\in [a_2,a_2+b_2]}} \sqrt{\alpha_{|X|}} \ket{\Zero(X,B_1)}_{A_1,B_1}\ket{\Zero(T\setminus X,B_2)}_{A_2,B_2}\\
            = \sum_{\substack{X \subseteq T\\|X|\in [a_1,a_1+b_1]\\|T|-|X|\in [a_2,a_2+b_2]}}\sum_{\substack{|v|=a_2+b_1\\type(v) = X^{0(a_2+b_1)}\\v_i \neq 0\text{ for }i\in A_1}}\sum_{\substack{|w|=a_2+b_2\\type(w) = (T\setminus X)^{0(a_2+b_2)}\\w_i \neq 0\text{ for }i\in A_2}}\sqrt{\alpha_{|X|}}\frac{1}{\sqrt{|X|!{b_1 \choose |X|-a_2}}}\frac{1}{\sqrt{|T\setminus X|!{b_2 \choose |T\setminus X|-a_1}}} \ket{v}\ket{w}
        \end{split}
    \end{equation}

    But note that 
    \begin{equation}
        \begin{split}
            \sqrt{\alpha_{|X|}}\frac{1}{\sqrt{|T|!{b_1 \choose |T|-a_2}}}\frac{1}{\sqrt{|T|!{b_2 \choose |T|-a_2}}}\\
            =\sqrt{\frac{{b_1 \choose |X| - a_1}{b_2 \choose |T|-|X|-a_2}}{{b_1 + b_2 \choose |T| - a_1 - a_2}{|T| \choose |X|}}\frac{1}{{|X|!{b_1 \choose |X|-a_2}}}\frac{1}{{|T\setminus X|!{b_2 \choose |T\setminus X|-a_1}}}}\\
            =\sqrt{\frac{1}{{b_1 + b_2 \choose |T| - a_1 - a_2}\frac{|T|!}{|X|!(|T|-|X|)!}|X|!|T\setminus X|!}}\\
            =\sqrt{\frac{1}{{b_1 + b_2 \choose |T| - a_1 - a_2}|T|!}}\\
        \end{split}
    \end{equation}

    and so
    \begin{equation}
        \begin{split}
            \sum_{\substack{X \subseteq T\\|X|\in [a_1,a_1+b_1]\\|T|-|X|\in [a_2,a_2+b_2]}} \sqrt{\alpha_{|X|}} \ket{\Zero(X,B_1)}_{A_1,B_1}\ket{\Zero(T\setminus X,B_2)}_{A_2,B_2}\\
            = \sum_{\substack{X \subseteq T\\|X|\in [a_1,a_1+b_1]\\|T|-|X|\in [a_2,a_2+b_2]}}\sum_{\substack{|v|=a_2+b_1\\type(v) = X^{0(a_2+b_1)}\\v_i \neq 0\text{ for }i\in A_1}}\sum_{\substack{|w|=a_2+b_2\\type(w) = (T\setminus X)^{0(a_2+b_2)}\\w_i \neq 0\text{ for }i\in A_2}}\sqrt{\frac{1}{{b_1 + b_2 \choose |T| - a_1 - a_2}|T|!}}\ket{v}\ket{w}\\
            =\sum_{\substack{|v| = a_1+a_2+b_1+b_2\\type(v) = T^{0(a_1+a_2+b_1+b_2)}\\v_i\neq 0\text{ for }i\in A_1A_2}}\sqrt{\frac{1}{{b_1 + b_2 \choose |T| - a_1 - a_2}|T|!}}\ket{v}\\
            =\ket{\Zero(T, (B_1,B_2))}_{A_1A_2B_1B_2}
        \end{split}
    \end{equation}
\end{proof}

\begin{lemma}[\Cref{lem:zerosplit2} restated]
    Let $A_1,A_2,B_1,B_2$ be registers containing $a_1,a_2,b_1,b_2$ states of dimension $N+1$ respectively. For all collision-free $T$,
    \begin{equation}
        \begin{split}
            \ket{\Zero^2(T,B_1,B_2)^{b_1^f,b_2^f}}_{A_1,A_2,B_1,B_2}\\
            = \frac{1}{\sqrt{{|T|\choose a_1 + b_1^f}}}\sum_{\substack{X \subseteq T\\|X|=a_1+b_1^f}} \ket{\Zero(X,B_1)}_{A_1 B_1}\ket{\Zero(T\setminus X,B_2)}_{A_2 B_2}
        \end{split}
    \end{equation}
\end{lemma}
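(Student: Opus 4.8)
The plan is to mirror the proof of \Cref{lem:zerosplit}: expand both sides in the computational basis and check that they agree term by term. First I would substitute the explicit collision-free formulas for the two tensor factors on the right-hand side. For $X \subseteq T$ with $|X| = a_1 + b_1^f$, the state $\ket{\Zero(X,B_1)}_{A_1 B_1}$ is the uniform superposition over basis vectors $\ket{v}$ with $|v| = a_1+b_1$, type $X^{0(a_1+b_1)}$, and $v_i \neq 0$ for $i \in A_1$, normalized by $1/\sqrt{|X|!{b_1 \choose |X|-a_1}} = 1/\sqrt{(a_1+b_1^f)!{b_1 \choose b_1^f}}$; similarly, since $|T\setminus X| = |T|-a_1-b_1^f = a_2+b_2^f$, the factor $\ket{\Zero(T\setminus X,B_2)}_{A_2 B_2}$ is normalized by $1/\sqrt{(a_2+b_2^f)!{b_2 \choose b_2^f}}$. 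The key point is that this product of normalizations does not depend on the choice of $X$.

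Next I would fold in the overall prefactor $1/\sqrt{{|T| \choose a_1+b_1^f}}$ and apply the identity ${|T| \choose a_1+b_1^f}(a_1+b_1^f)!(a_2+b_2^f)! = |T|!$, which holds because $|T| = (a_1+b_1^f)+(a_2+b_2^f)$. This shows that the scalar multiplying the expanded triple sum $\sum_{X}\sum_{v}\sum_{w}\ket{v}\ket{w}$ equals exactly $1/\sqrt{|T|!{b_1 \choose b_1^f}{b_2 \choose b_2^f}}$, which is precisely the normalization in the definition of $\ket{\Zero^2(T,B_1,B_2)^{b_1^f,b_2^f}}$.

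Finally I would check that the triple sum enumerates exactly the basis states appearing in the definition of the left-hand side, each exactly once. Given a term $\ket{v}\ket{w}$ on the right, the set of nonzero entries of $v$ is a subset of $T$ of size $a_1+b_1^f$ (there are $a_1$ forced nonzeros in $A_1$ and $b_1^f$ in $B_1$), and it must coincide with the indexing set $X$; the nonzero entries of $w$ are then exactly $T\setminus X$. Hence each valid $\ket{v}\ket{w}$ is produced by a unique $X$, the combined type is $T^{0(a_1+a_2+b_1+b_2)}$ with all $A$-positions nonzero and exactly $b_1^f$ (resp. $b_2^f$) nonzeros in $B_1$ (resp. $B_2$), matching the constraints in the definition of $\ket{\Zero^2(\cdot)}$. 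I do not expect a genuine obstacle; the only care needed is in the counting of nonzero positions, which is consistent precisely in the regime $|T| = a_1+a_2+b_1^f+b_2^f$ in which $\ket{\Zero^2(T,B_1,B_2)^{b_1^f,b_2^f}}$ is defined.
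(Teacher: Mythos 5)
Your proof is correct and takes essentially the same approach as the paper's: expand the right-hand side in the computational basis using the collision-free normalizations, observe the normalizing scalar is independent of $X$, simplify it via ${|T|\choose a_1+b_1^f}(a_1+b_1^f)!(a_2+b_2^f)! = |T|!$, and identify the resulting triple sum with the sum defining $\ket{\Zero^2(T,B_1,B_2)^{b_1^f,b_2^f}}$. The paper leaves the bijection between $(X,v,w)$ triples and the basis vectors of $\ket{\Zero^2}$ implicit, whereas you spell it out; that is the only difference.
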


\begin{proof}
    \begin{equation}
        \begin{split}
            \frac{1}{\sqrt{{|T|\choose a_1 + b_1^f}}}\sum_{\substack{X \subseteq T\\|X|=a_1+b_1^f}} \ket{\Zero(X,B_1)}_{A_1 B_1}\ket{\Zero(T\setminus X,B_2)}_{A_2 B_2}\\
            =\mathfrak{N}\sum_{\substack{X \subseteq T\\|X|=a_1+b_1^f}} \sum_{\substack{|v|=a_1+b_1\\type(v)=X^{0(a_1+b_1)}\\v_i\neq 0\text{ for }i\in A_1}}\sum_{\substack{|w|=a_2+b_2\\type(v)=(T\setminus X)^{0(a_2+b_2)}\\w_i\neq 0\text{ for }i\in A_2}} \ket{v}\ket{w}\\
        \end{split}
    \end{equation}

    where
    \begin{equation}
        \begin{split}
            \mathfrak{N} = \sqrt{\frac{1}{{{|T|\choose a_1 + b_1^f}}{(a_1+b_1^f)!{b_1\choose b_1^f}}{(a_2+b_2^f)!{b_2\choose b_2^f}}}}\\
            =\sqrt{\frac{(a_1+b_1^f)!(|T|-a_1+b_1^f)!}{|T|!(a_1+b_1^f)!(a_2+b_2^f)!{b_1\choose b_1^f}!{b_2\choose b_2^f}!}}\\
            =\sqrt{\frac{(a_1+b_1^f)!(a_2+b_2^f)!}{|T|!(a_1+b_1^f)!(a_2+b_2^f)!{b_1\choose b_1^f}!{b_2\choose b_2^f}!}}\\
            =\sqrt{\frac{1}{|T|!{b_1\choose b_1^f}!{b_2\choose b_2^f}!}}
        \end{split}
    \end{equation}
    and so combining these two equations we get
    \begin{equation}
        \begin{split}
            \frac{1}{\sqrt{{|T|\choose a_1 + b_1^f}}}\sum_{\substack{X \subseteq T\\|X|=a_1+b_1^f}} \ket{\Zero(X,B_1)}_{A_1 B_1}\ket{\Zero(T\setminus X,B_2)}_{A_2 B_2}\\
            =\sqrt{\frac{1}{|T|!{b_1\choose b_1^f}!{b_2\choose b_2^f}!}}\sum_{\substack{X \subseteq T\\|X|=a_1+b_1^f}} \sum_{\substack{|v|=a_1+b_1\\type(v)=X^{0(a_1+b_1)}\\v_i\neq 0\text{ for }i\in A_1}}\sum_{\substack{|w|=a_2+b_2\\type(v)=(T\setminus X)^{0(a_2+b_2)}\\w_i\neq 0\text{ for }i\in A_2}} \ket{v}\ket{w}\\
            =\ket{\Zero^2(T,B_1,B_2)^{b_1^f,b_2^f}}_{A_1A_2B_1B_2}
        \end{split}
    \end{equation}
\end{proof}
\section{The spectrum of extended Kneser graphs}\label{app:kneser}
\begin{definition}
    For any $N,k\in \N$, the Kneser graph $K(N,k)$ is the graph with vertices subsets of $[N]$ of size $k$. There is an edge between $X,Y\subseteq[N]$ in $K(N,k)$ if and only if $X\cap Y = \emptyset$.
\end{definition}

\begin{definition}
    Define $A_{k,k'}^N$ to be the adjacency matrix with rows indexed by $X\subseteq [N]$ of size $k$ and columns indexed by $Y\subseteq [N]$ of size $k$ defined by
    $$A_{k,k'}^N[X,Y] = \begin{cases}
        1 & X\cap Y = \emptyset\\
        0 & X\cap Y\neq \emptyset
    \end{cases}$$

    It is clear that $A_{k,k}^N$ is the adjacency matrix of the Kneser graph $K(N,k)$.

    When clear from context, we will omit $N$.
\end{definition}

\begin{theorem}[\cite{kneserenergy}]\label{thm:kenergybasic}
    When $N\geq 2k + 1$, the trace norm of $A_{k,k}^N$ is 
    $$\sum_{j=0}^k \left({{N \choose j} - {N \choose j-1}}\right){N-k-j\choose k-j} = \frac{(N-1)(N-2)\dots(N-2k+1)2^k}{k!}$$
\end{theorem}

We will show the following extended version of this result
\begin{theorem}\label{thm:kneserfinal}
    For any $k_1 \leq k_2$, when $N\geq 2k_2 + 1$, the trace norm of $A_{k_1,k_2}^N$ is 
    $$\leq k_2 \sqrt{N^{k_1+k_2}}$$
\end{theorem}

To prove this, we will use a number of facts used in proving~\Cref{thm:kenergybasic}. In particular, we will follow the structure of the proof given in~\cite{kneserwayback}.

\begin{definition}
    Define $\wt{A}_{k,k'}^N$ to be the adjacency matrix with rows indexed by $X\subseteq [N]$ of size $k$ and columns indexed by $Y\subseteq [N]$ of size $k$ defined by
    $$\wt{A}_{k,k'}^N[X,Y] = \begin{cases}
        1 & X\subseteq Y\\
        0 & \text{otherwise}
    \end{cases}$$
    When clear from context, we will omit $N$.
\end{definition}

\begin{lemma}[Facts from~\cite{kneserwayback}]\label{lem:kfacts}
    Fix $N$
    \begin{equation}\label{eq:kf1}
        \wt{A}_{ij}\wt{A}_{jk} = {k - i \choose j - i}\wt{A}_{ik}
    \end{equation}
    \begin{equation}\label{eq:kf2}
        \wt{A}_{ij}{A}_{jk} = {N-k-i \choose j-i}A_{ik}
    \end{equation}
    \begin{equation}\label{eq:kf3}
        \wt{A}_{jk} = \sum_{i=0}^j (-1)^i {\wt{A}_{ij}}^\top A_{ik}
    \end{equation}
    \begin{equation}\label{eq:kf4}
        A_{jk} = \sum_{i=0}^j (-1)^i {\wt{A}_{ij}}^{\top} \wt{A}_{ik}
    \end{equation}
    \begin{equation}\label{eq:kf5}
        \text{For }i\leq j\leq k,\ \row(\wt{A}_{ik}) \subseteq \row(\wt{A}_{jk})
    \end{equation}
    \begin{equation}\label{eq:kf6}
        \text{For }i\leq j\leq N-k,\ \row({A_{ik}}) \subseteq \row({A}_{jk})
    \end{equation}
    \begin{equation}\label{eq:kf7}
        \text{For }j\leq k\leq N-j,\ \row({A_{jk}}) = \row(\wt{A}_{jk})
    \end{equation}
    \begin{equation}\label{eq:kf8}
        \text{For }t\leq k\leq N-t,\ \rank(A_{tk})=\rank(\wt{A}_{tk})={N\choose t}
    \end{equation}
\end{lemma}

We then proceed to the proof of~\Cref{thm:kneserfinal}

\begin{proof}
    Let $k_1\leq k_2$ and let $N \geq 2k_2 + 1$. We will explicitly compute the singular values of $A_{k_1,k_2}$. Recall that these are the eigenvalues of $A_{k_1k_2}^{\top} A_{k_1k_2} = A_{k_2k_1}A_{k_1k_2}$.

    In the style of~\cite{kneserwayback}, let $U_0$ be the row space of ${A}_{0k_2}$, and let $U_i$ be the orthogonal complement of $\row({A}_{(i-1)k_2})$ in $\row({A}_{ik_2})$. Note that by~\Cref{eq:kf5}, we know that $\row({A}_{(i-1)k_2}) \subseteq \row({A}_{ik_2})$.

    We then proceed to show that for each $0\leq j\leq k_1$, every $\vec{v}\in U_j$ is an eigenvector of $A_{k_2k_1}A_{k_1k_2}$ with eigenvalue
    $${N-k_1-j\choose k_2-j}{N-k_2-j\choose k_1-j}$$

    In particular, we know by~\Cref{eq:kf7} that $\vec{v}$ is in the row space of $\wt{A}_{jk_2}$, and so there exists a vector $\vec{a}$ such that $\vec{v}=\vec{a} \wt{A}_{jk_2}$. On the other hand, for all $i<j$, by~\Cref{eq:kf5,eq:kf7}, $\vec{v}$ is orthogonal to $A_{ik_2}$. And so we have
    \begin{equation}
        \vec{0}=\vec{u}{A}_{k_2i} = \vec{a} \wt{A}_{jk_2} A_{k_2i} = {N-k_2-i\choose j-i}\vec{a}A_{ji}
    \end{equation}
    and so $\vec{a}A_{ij}^{\top}=\vec{a}A_{ji}=\vec{0}$ for all $i<j$. But since $A_{ij}$ and $\wt{A}_{ij}$ have the same row space, we have $\vec{a}\wt{A}_{ij}^{\top} = \vec{0}$.

    We can then compute $\vec{u}A_{k_2k_1}A_{k_1k_2}$ as follows
    \begin{equation}
        \begin{split}
            \vec{u}A_{k_2k_1}A_{k_1k_2}\\
            =\vec{a}\wt{A}_{jk_2}A_{k_2k_1}A_{k_1k_2}\\
            =\vec{a}{N-k_1-j\choose k_2-j}A_{jk_1}A_{k_1k_2}\\
            ={N-k_1-j\choose k_2-j}\sum_{i=0}^j\left((-1)^i \vec{a}\wt{A}_{ij}^{\top}\wt{A}_{ik_1}\right)A_{k_1k_2}\\
            ={N-k_1-j\choose k_2-j}(-1)^j\vec{a} \wt{A}_{jj}^{\top} \wt{A}_{jk_1} A_{k_1k_2}\\
            ={N-k_1-j\choose k_2-j}(-1)^j\vec{a} \wt{A}_{jk_1} A_{k_1k_2}\\
            ={N-k_1-j\choose k_2-j}{N-k_2-j\choose k_1-j}(-1)^j\vec{a} A_{jk_2}\\
            =(-1)^j{N-k_1-j\choose k_2-j}{N-k_2-j\choose k_1-j} \vec{u}
        \end{split}
    \end{equation}

    By~\Cref{eq:kf8}, we know that the dimension of $U_i$ is ${N\choose j} - {N\choose j-1}$. We will show that the eigenvectors of $A_{k_2k_1}A_{k_1k_2}$ are precisely $U_0,\dots,U_{k_1}$. Thus, since $\rank(A_{k_1k_2}) = {N\choose k_1} = \sum_{j=0}^{k_1}\left({N\choose j} - {N\choose j-1}\right)$, these are all the non-zero eigenvectors.

    Thus, the singular values of $A_{k_1k_2}$ are, for each $0\leq j\leq k_1$,
    $$(-1)^j{N-k_1-j \choose k_2-j}{N-k_2-j \choose k_1-j}$$
    each with multiplicity ${N\choose j} - {N \choose j-1}$.

    Note that the trace norm of $A_{k_1k_2}$ is the sum of the square roots of the absolute values of the singular values. That is,
    \begin{equation}
        \begin{split}
            \norm{A_{k_1k_2}}_1\\
            =\sum_{j=0}^{k_1}\left({N\choose j} - {N \choose j-1}\right)\sqrt{{N-k_1-j \choose k_2-j}{N-k_2-j \choose k_1-j}}\\
            \leq \sum_{k=0}^{k_1} N^j \sqrt{N^{k_2-j+k_1-j}}\\
            = k_1 \sqrt{N^{k_1+k_2}}
        \end{split}
    \end{equation}
\end{proof}

\begin{corollary}[\Cref{cor:tauboundmain} restated]
    Let $\tau_{c_1,c_2,I,C}$ be the value defined in~\Cref{eq:taudef} in the proof of\label{lem:keylemma}. We have
    $$\norm{\tau_{c_1,c_2,I,C}}_1 \leq (a_1+b_1+b_2)N^{a_1+\frac{c_1+c_2}{2}-|I|}$$
\end{corollary}

\begin{proof}
    Observe that $\tau_{c_1,c_2,I,C}$ can be written as a block matrix, with all entries zero except for one which is equal to $A_{a_1+c_1-|I|,a_1+c_2-|I|}^{N-|I\cup C|}$. 

    Since $(a_1+a_2+b_1+b_2+1)^2\leq N$, and since $\max(c_1,c_2)\leq c\leq b_1+b_2$, it is clear that we have $2\max(a_1+c_1-|I|,a_1+c_2-|I|) + 1 \leq N$.

    Thus, by~\Cref{thm:kneserfinal},
    \begin{equation}
        \begin{split}
            \norm{\tau_{c_1,c_2,I,C}}_1 = \norm{A_{a_1+c_1-|I|,a_1+c_2-|I|}^{N-|I\cup C|}}_1\\
            \leq \max(a_1+c_1-|I|,a_1+c_2-|I|)N^{\frac{a_1+c_1-|I|+a_1+c_2-|I|}{2}}\\
            \leq (a_1+c)N^{a_1-|I|+\frac{c_1+c_2}{2}}
        \end{split}
    \end{equation}
\end{proof}
\section{More Details on Applications}
\subsection{OWSG do not exist in the $\Swap$ model}\label{sec:owsg}

\begin{definition}
    A one-way state generator is a cryptographic primitive $OWSG = (StateGen, Ver)$ with the following syntax
    \begin{enumerate}
        \item $StateGen(1^n, k) \to \rho_k$: on input the security parameter $n$ and a key $k \in \{0,1\}^n$, outputs a quantum state $\phi_k$.
        \item $Ver(1^n, k', \rho)$: on input the security parameter, a key $k'$, and a state $\rho$, outputs $0$ or $1$.
    \end{enumerate}
    and satisfying the following guarantees
    \begin{enumerate}
        \item Correctness: $(StateGen, Ver)$ is secure under the (adversary-less) game\\ $G_{OSWG}^{cor}(1^n, (StateGen, Ver))$, defined as follows:
        \begin{enumerate}
            \item Sample $k \gets \{0,1\}^n$.
            \item Run $StateGen(1^n, k) \to \rho_k$.
            \item Output $1$ if $Ver(k, {\rho_k})=0$.
        \end{enumerate}
        \item Security: $(StateGen, Ver)$ is secure under the game \\$G_{OWSG}^{sec-t}(1^n, (StateGen, Ver), \A), \A)$, defined as follows:
        \begin{enumerate}
            \item Sample $k \gets \{0,1\}^n$.
            \item Run $StateGen(1^n, k)$ $t+1$ times, producing ${\rho_k} \otimes {\phi_k}^{\otimes t}$.
            \item Run $\A(1^n, {\rho_k}^{\otimes t}) \to k'$.
            \item Output $Ver(k', {\rho_k})$.
        \end{enumerate}
    \end{enumerate}
\end{definition}

\begin{definition}
    An $(t,\epsilon)$-one-way state generator is a cryptographic primitive $OWSG = (StateGen, Ver)$ satisfying the same syntax as a OWSG but with the following weakened correctness and security requirements
    \begin{enumerate}
        \item Correctness: $(StateGen, Ver)$ is $\epsilon$-secure under the game $G_{OWSG}^{cor}$
        \item Security: $(StateGen, Ver)$ is $\epsilon$-secure under the game $G_{OWSG}^{sec-t}$
    \end{enumerate}
\end{definition}

\begin{lemma}[\Cref{lem:noowsg} restated]
    For any family of distributions $\D$, there does not exist a $\left(O(n^2), \frac{1}{10n^2}\right)$-one-way state generator relative to $\SSM$.
\end{lemma}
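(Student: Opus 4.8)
The plan is to argue by contradiction. Suppose $OWSG^{\SSM} = (StateGen^{\SSM}, Ver^{\SSM})$ is a $\left(O(n^2), \frac{1}{10n^2}\right)$-one-way state generator relative to $\SSM$; I will build an adversary making $O(n^2)$ oracle queries that wins $G_{OWSG}^{sec-O(n^2)}$ with probability exceeding $\frac{1}{10n^2}$. The crucial structural observation, exactly as set up in the preceding text, is that the $\SSM$ oracle has empty input, so one run of $Ver^{\SSM}(k,\cdot)$ is equivalent to applying a fixed unitary $U_k$ to its input register $A$ together with $T_{Ver}$ copies $\ket{\phi_n}^{\otimes T_{Ver}}$ in a register $B$ (where $T_{Ver} = \poly(n)$ bounds $Ver$'s query count), then measuring one qubit; with $\Pi_k := U_k^\dagger(\ketbra{1}\otimes I)U_k$, the quantity $\Tr(\Pi_k\sigma)$ is the acceptance probability of $Ver$ on any state $\sigma$ over $AB$. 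Since the adversary also has $\SSM$ access, it can generate arbitrarily many copies of $\ket{\phi_n}$ itself, and nothing about the distribution $\D$ is ever used — hence the statement for arbitrary $\D$.

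Next I would instantiate threshold search with the collection $\{\Pi_k^{\otimes 10n}\}_{k\in\{0,1\}^n}$ acting on $10n$ independent copies of the register pair $AB$, so that $m = 2^n$. On input $\rho_k^{\otimes O(n^2)}$ from the security game, the adversary produces the needed $\ket{\phi_n}$ copies via $\SSM$ and repackages everything into $O(n)$ copies of $\xi_k := \left(\rho_k \otimes \ket{\phi_n}^{\otimes T_{Ver}}\right)^{\otimes 10n}$. By~\Cref{lem:thresholda} we have $\Tr(\Pi_k^{\otimes 10n}\xi_k)\geq 1 - \tfrac{1}{10n}\geq \tfrac34$ for all large $n$, so the threshold-search promise holds at index $k$. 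Running the randomized (space-efficient) threshold search procedure of~\cite{ARXIV:WatBos22}, whose copy complexity is $O(\log m) = O(n)$ copies of $\xi_k$ — keeping the overall count at $O(n^2)$ copies of $\rho_k$ — outputs, with some absolute constant probability $c > 0$, a key $k'$ with $\Tr(\Pi_{k'}^{\otimes 10n}\xi_k) \geq \tfrac13$.

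Finally I would convert this back into a win. By~\Cref{lem:thresholdb}, $\Tr(\Pi_{k'}^{\otimes 10n}\xi_k)\geq \tfrac13$ implies $\Tr\left(\Pi_{k'}(\rho_k\otimes\ket{\phi_n}^{\otimes T_{Ver}})\right)\geq 1 - \tfrac1n$, i.e.\ $Ver^{\SSM}(k',\rho_k)$ accepts with probability at least $1-\tfrac1n$. Combining the two events, the adversary wins $G_{OWSG}^{sec-O(n^2)}$ with probability at least $c\left(1-\tfrac1n\right)$, which exceeds $\tfrac{1}{10n^2}$ for all sufficiently large $n$ — contradicting $\left(O(n^2),\tfrac{1}{10n^2}\right)$-security, and proving the lemma. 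The only place needing genuine care is the amplification bookkeeping, which is precisely what \Cref{lem:thresholda,lem:thresholdb} handle: the correctness guarantee is only the weak $\tfrac{1}{10n^2}$, so we must run correctness $10n$ times in parallel to clear the $\tfrac34$ promise of threshold search, and then undo the tensoring on the other side, which is lossless up to $O(\tfrac1n)$ because a probability raised to the power $\tfrac1{10n}$ changes by only an additive $O(\tfrac1n)$.
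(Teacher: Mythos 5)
Your proposal follows the paper's proof essentially verbatim: same reduction to threshold search over $\{\Pi_k^{\otimes 10n}\}_k$ with $m=2^n$, same use of \Cref{lem:thresholda} to establish the promise, same use of \Cref{lem:thresholdb} to convert the threshold-search output back into a high-probability verification win, and the same final accounting $c(1-1/n) > \tfrac{1}{10n^2}$. The one place where both you and the paper are slightly loose is the step ``By~\Cref{lem:thresholda} we have $\Tr(\Pi_k^{\otimes 10n}\xi_k)\geq 1-\tfrac{1}{10n}$'': the lemma as stated is an \emph{average} over the sampling of $k$ and $\rho_k$, not a pointwise guarantee for each fixed instance, so to invoke the threshold-search promise one should insert a Markov step (with probability $1 - O(1/n)$ over the choice of $k$ and $\rho_k$, the acceptance probability exceeds $3/4$), then absorb that extra $O(1/n)$ loss into the final constant. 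Since the paper makes exactly the same elision, this does not distinguish your argument from theirs.
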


Since the $\SSMM$ model is just the $\SSM$ model under a different distribution of states, we get the following corollary.

\begin{corollary}
        For any family of distributions $\D$, there does not exist a $\left(10n, \frac{1}{10n^2}\right)$-one-way state generator relative to $\SSMM$.
\end{corollary}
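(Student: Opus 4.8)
The plan is to obtain this as an immediate consequence of Lemma~\ref{lem:noowsg} together with the observation that the $\SSMM$ model over a distribution family $\D$ \emph{is} the $\SSM$ model over a closely related family. First I would define $\D' = \{\D'_n\}_{n\in\N}$ to be the family where $\D'_n$ samples $\ket{\phi_n}\gets\D_n$ and outputs $\ket{\phi_n-}$, viewed as a pure state in the $(n+1)$-qubit space $\C^{2^{n+1}}$ (padding the ambient $\C^{2^n+1}$ with zero amplitudes on the extra coordinates). By construction, the isometry $\ket{0}\mapsto\ket{\phi_n-}$ defining $\SSMM_n$ over $\D$ is literally the isometry defining $\SSM_n$ over $\D'$; the only cosmetic difference is that the output register of the latter is declared to carry $n+1$ qubits rather than $n$.

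Next I would invoke Lemma~\ref{lem:noowsg}, which holds for \emph{every} family of distributions. Its proof only uses that each query to $\SSM$ returns a fresh copy of a single fixed pure state --- the threshold-search attack reconstructs each of $Ver$'s oracle calls by handing out such copies, and the argument is completely insensitive to the dimension of that state. Consequently the lemma applies verbatim to the family $\D'$ even though $\D'_n$ is supported on $(n+1)$-qubit states (alternatively, one re-indexes the security parameter to bring $\D'$ into the stated $n$-qubit template). Thus no one-way state generator with the copy and error parameters of Lemma~\ref{lem:noowsg} exists relative to $\SSM$ over $\D'$, and translating back through the identification of the previous paragraph gives the same impossibility relative to $\SSMM$ over $\D$, which is the stated corollary.

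The only step requiring any care --- and the closest thing to an obstacle --- is the bookkeeping that identifies $\SSMM$ over $\D$ with an instance of the $\SSM$ model covered by Lemma~\ref{lem:noowsg}: one must either justify that that lemma's attack is dimension-agnostic (which it manifestly is, since it treats the oracle's output purely as copies of an opaque pure state) or perform the harmless re-indexing of the security parameter. Beyond this there is no new mathematical content: the attack, the threshold-search primitive, and the correctness and security accounting are all inherited unchanged from the proof of Lemma~\ref{lem:noowsg}.
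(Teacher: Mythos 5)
Your proposal takes the same approach as the paper's, whose entire proof is the single sentence ``Since the $\SSMM$ model is just the $\SSM$ model under a different distribution of states.'' You correctly identify $\SSMM$ over $\D$ with $\SSM$ over the pushforward family $\D'$ of $\ket{\phi-}$ states, and you correctly observe that Lemma~\ref{lem:noowsg} and the underlying threshold-search attack are dimension-agnostic, so the padding of the output space from $\C^{2^n+1}$ up to a qubit register and the attendant security-parameter bookkeeping are harmless details the paper leaves implicit.

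One caution on the parameters, which your final sentence glosses over: Lemma~\ref{lem:noowsg} rules out an $\left(O(n^2),\frac{1}{10n^2}\right)$-OWSG (the threshold-search attack consumes on the order of $n\cdot 10n$ copies of $\rho_k$), whereas the corollary as stated rules out a $\left(10n,\frac{1}{10n^2}\right)$-OWSG. In the paper's convention a $(t,\epsilon)$-OWSG gives the adversary $t$ copies, so lowering $t$ makes the security game easier and the nonexistence claim \emph{stronger}; ``no $(10n,\epsilon)$-OWSG'' is not implied by ``no $(O(n^2),\epsilon)$-OWSG.'' Your assertion that translating Lemma~\ref{lem:noowsg} back to $\SSMM$ ``is the stated corollary'' therefore silently equates two different parameter regimes. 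The mismatch is already present in the paper (the downstream statements in~\Cref{sec:owsg} carry the same $10n$, and for the ultimate goal it is harmless, since a genuine OWSG is $\negl$-secure for every polynomial copy bound), so it reads as a typo rather than an error you introduced; still, a careful write-up should either restate the corollary with the $O(n^2)$ copy count or exhibit an attack using only $10n$ copies.
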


\begin{lemma}
    If there exists a $\left(10n, \frac{1}{20n^2}\right)$-one-way state generator relative to $\Swap$, then there exists a $\left(10n, \frac{1}{10n^2}\right)$-one-way state generator relative to $\SSMM$.
\end{lemma}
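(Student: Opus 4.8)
The plan is to obtain $OWSG^{\SSMM}$ by composing the assumed $\Swap$-model one-way state generator with the reset-indifferentiable construction $C^{\SSMM}$ of $\Swap$ from $\SSMM$ of~\Cref{thm:ssmmtoswap}, and then checking that the resulting inverse-polynomial error stays below $\tfrac{1}{10n^2}$. Write $OWSG^{\Swap} = (StateGen^{\Swap}, Ver^{\Swap})$ for the assumed $\left(10n, \tfrac{1}{20n^2}\right)$-one-way state generator, and let $T_{\Prim} = \poly(n)$ bound the number of oracle queries made by either $StateGen^{\Swap}$ or $Ver^{\Swap}$ in a single invocation. I set $OWSG^{\SSMM} := (StateGen^{C^{\SSMM}}, Ver^{C^{\SSMM}})$, where $C^{\SSMM}$ is the construction of~\Cref{thm:ssmmtoswap} with a single fixed $t = \poly(n)$ pinned down below. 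Replacing each $\Swap$-query of $StateGen^{\Swap}, Ver^{\Swap}$ by a run of $C^{\SSMM}$ (which itself makes $\poly(n)$ queries to $\SSMM$) yields genuine QPT oracle algorithms relative to $\SSMM$, so $OWSG^{\SSMM}$ is a legitimate primitive relative to $\SSMM$.

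Next I would count the queries of the two OWSG games. The correctness game $G^{cor}_{OWSG}$ is adversary-free and invokes the primitive at most $T_{G,1} = 2$ times (one call to $StateGen$, one to $Ver$). The $10n$-copy security game $G^{sec\text{-}10n}_{OWSG}$ invokes the primitive at most $T_{G,1} = 10n+2$ times ($10n+1$ calls to $StateGen$ and one to $Ver$) and its single adversary at most $T_{G,2} = 1$ time. Both games are efficient, and the composition still goes through for the adversary-free correctness game because $C^{\SSMM}$ is \emph{reset} indifferentiable. I then fix $t := (2(10n+2)T_{\Prim} n^3)^2 = \poly(n)$, chosen so that the indifferentiability error $\tfrac{2T_2}{\sqrt{t+1}}$ from~\Cref{thm:ssmmtoswap} is at most $n^{-3}$ for the number of construction-queries $T_2 \le (10n+2)T_{\Prim}$ incurred by either game.

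Finally I would invoke the $\Swap$-to-$\SSMM$ composition,~\Cref{corf:swaptossmm}, with this $C^{\SSMM}$ and $d = 3$. Applied with $G = G^{sec\text{-}10n}_{OWSG}$ it yields that $OWSG^{C^{\SSMM}}$ is $\left(\tfrac{1}{20n^2} + n^{-3}\right)$-secure under $G^{sec\text{-}10n}_{OWSG}$ relative to $\SSMM$; applied with $G = G^{cor}_{OWSG}$ — where the incurred indifferentiability error is only smaller, since $t$ was sized for the larger $T_{G,1}$ and $\tfrac{2T_2}{\sqrt{t+1}}$ is decreasing in $t$ — it yields correctness error at most $\tfrac{1}{20n^2} + n^{-3}$. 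Since $\tfrac{1}{20n^2} + n^{-3} \le \tfrac{1}{10n^2}$ for all sufficiently large $n$, and the definition of a $\left(10n, \tfrac{1}{10n^2}\right)$-one-way state generator only requires its correctness and security bounds to hold for sufficiently large $n$, this shows $OWSG^{C^{\SSMM}}$ is such a generator relative to $\SSMM$, which is the claim.

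There is no real mathematical obstacle here: the argument is bookkeeping on query counts together with a check that the two inverse-polynomial error terms add up to less than $\tfrac{1}{10n^2}$. The one subtle point is that a single fixed $t$ must simultaneously serve both the correctness and security games, since both use the same construction $C^{\SSMM}$; I resolve this by sizing $t$ for the security game (which has the larger query count) and using monotonicity of the indifferentiability error in $t$ to conclude it is also adequate for the correctness game.
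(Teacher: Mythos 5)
Your proof is correct and follows essentially the same route as the paper: compose the $\Swap$-model OWSG with the construction $C^{\SSMM}$ of Theorem~\ref{thm:ssmmtoswap}, invoke Corollary~\ref{corf:swaptossmm} with $d=3$, and observe $\frac{1}{20n^2}+n^{-3}\le\frac{1}{10n^2}$ for large $n$. The paper's proof is just the one-liner ``apply Corollary~\ref{corf:swaptossmm} with $d=3$''; you have filled in the query-count bookkeeping and, in particular, flagged and resolved a genuine small subtlety the paper glosses over — namely that the same fixed $t$ in $C^{\SSMM}$ must serve both the correctness and security games, which you handle by sizing $t$ for the larger (security) query count and using monotonicity of the error bound in $t$.
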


This follows immediately from~\Cref{corf:swaptossmm} setting $d$ to $3$.

\begin{corollary}
    There does not exist a $\left(10n, \frac{1}{2n}\right)$-one-way state generator relative to $\Swap$.
\end{corollary}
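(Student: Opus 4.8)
The plan is to prove this by contradiction, pushing a hypothetical $\Swap$-model one-way state generator down the chain $\Swap \rightsquigarrow \SSMM \rightsquigarrow \SSM$, where~\Cref{lem:noowsg} already rules it out. So suppose $OWSG^{\Swap} = (StateGen^{\Swap},Ver^{\Swap})$ were a $\left(10n,\tfrac{1}{2n}\right)$-one-way state generator relative to $\Swap$. The first step is to apply~\Cref{corf:swaptossmm} (the $\Swap$-to-$\SSMM$ composition corollary) separately to the correctness game and to the $10n$-copy security game of a one-way state generator --- both of which are efficient LOCC games --- obtaining $\overline{OWSG}^{\SSMM} := OWSG^{C^{\SSMM}}$, a primitive relative to $\SSMM$ whose correctness error and security advantage each grow by only an additive $n^{-d}$. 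Picking $d$ large enough that $\tfrac{1}{2n}+n^{-d}$ stays below a convenient bound such as $\tfrac{1}{n}$ (e.g.\ $d=3$, for $n$ large), $\overline{OWSG}^{\SSMM}$ is a bounded-copy one-way state generator relative to $\SSMM$ with inverse-polynomial correctness and security error.

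The second step is to kill $\overline{OWSG}^{\SSMM}$. The point is that $\SSMM$ over the Haar distribution $\mu_n$ is literally $\SSM$ over the balanced distribution $\D$ that outputs $\ket{\phi-}$ for $\ket{\phi}\gets\mu_n$; since~\Cref{lem:noowsg} holds for \emph{every} state distribution, its threshold-search attack applies verbatim in the $\SSMM$ model and rules out bounded-copy one-way state generators there with the relevant inverse-polynomial error. This contradicts the existence of $\overline{OWSG}^{\SSMM}$, so no $\left(10n,\tfrac{1}{2n}\right)$-one-way state generator relative to $\Swap$ exists. At the level of cited black boxes, the corollary immediately preceding this one (nonexistence of a $\left(10n,\tfrac{1}{10n^2}\right)$-OWSG relative to $\SSMM$) together with the lemma just before it (the $\Swap$-to-$\SSMM$ transfer via~\Cref{corf:swaptossmm} with $d=3$) package exactly these two steps, so the argument is essentially "chase the chain."

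The only genuine work, and where I expect the main obstacle to lie, is the parameter bookkeeping: the threshold-search attack behind~\Cref{lem:noowsg} is calibrated for correctness error $\tfrac{1}{10n^2}$, whereas here the correctness error after transfer is $\Theta(1/n)$. This is handled by tensoring the verifier only $m=\Theta(n)$ times before running threshold search, rather than $10n$ times: a union bound still gives that $\Pi_k^{\otimes m}$ accepts honestly generated copies with probability $\geq 3/4$, a computation as in~\Cref{lem:thresholdb} still gives that any key returned by threshold search verifies with probability $1-\Theta(1/n)$, the total number of copies used stays polynomial in $n$, and the constant success probability of randomized threshold search still dominates the $\tfrac{1}{2n}$ security bound for large $n$. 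Checking that both the composition slack $n^{-d}$ and this relaxed correctness budget are absorbed is routine, and with that in hand the contradiction --- and hence the corollary --- follows with no new conceptual ingredient beyond what is already used for~\Cref{lem:noowsg} and~\Cref{corf:swaptossmm}.
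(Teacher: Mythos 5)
Your overall route is the same as the paper's: transfer the hypothetical $\Swap$-model OWSG to $\SSMM$ via \Cref{corf:swaptossmm}, observe that $\SSMM$ is just an $\SSM$ model for the balanced distribution of $\ket{\phi-}$ states, and then invoke the distribution-agnostic threshold-search attack behind \Cref{lem:noowsg}. The paper itself gives no further argument for this corollary beyond chaining the two immediately preceding statements, so at that level of black boxes you and the paper coincide.

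Where I think your ``routine bookkeeping'' does \emph{not} close, however, is the copy budget, and you never address it. A $\left(10n,\tfrac{1}{2n}\right)$-OWSG is only required to be secure in $G_{OWSG}^{sec\text{-}10n}$, i.e.\ against an adversary who receives exactly $10n$ copies of $\rho_k$. The threshold-search attack needs $\Theta(\log 2^n)=\Theta(n)$ (or, taking the $O(\log^2 m)$ bound at face value, $\Theta(n^2)$) samples of the $m$-fold tensored state, so with your $m=\Theta(n)$ tensorings the attacker consumes $\Theta(nm)=\Theta(n^2)$ copies of $\rho_k$ --- far more than the $10n$ it is handed in the security game. Shrinking $m$ to $O(1)$ fixes the threshold-search promise (union bound) and still leaves $p=(1/3)^{1/m}$ a constant $> \tfrac{1}{2n}$, but the sample count is still $\Theta(n)$ with a constant determined by the threshold-search algorithm, not one you control, so ``$\leq 10n$'' is not something you can assert. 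Also note the $\epsilon$-parameters do not chain as cleanly as you suggest: the preceding lemma and corollary yield non-existence of a $\left(10n,\tfrac{1}{20n^2}\right)$-OWSG relative to $\Swap$, and non-existence at $\epsilon=\tfrac{1}{20n^2}$ is strictly weaker than non-existence at $\epsilon=\tfrac{1}{2n}$ (a larger $\epsilon$ is easier to achieve, hence harder to rule out), so ``chasing the chain'' alone does not give the stated bound; your re-derivation of the attack for larger correctness error is the right instinct here, but it then runs into the copy-count obstruction above. To be fair, this mismatch already appears in the paper: \Cref{lem:noowsg} is stated for $\left(O(n^2),\tfrac{1}{10n^2}\right)$-OWSGs, yet the corollary drawn from it claims $10n$ copies, and the subsequent corollaries inherit the slip. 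Your proposal reproduces the paper's intent but does not repair this gap, and the ``total number of copies stays polynomial'' remark is exactly where the argument as written would fail.
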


\begin{corollary}
    There does not exist a one-way state generator relative to $\Swap$.
\end{corollary}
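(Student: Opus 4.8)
The plan is to obtain the statement as an immediate downgrading of the preceding corollary: a (full) one-way state generator relative to $\Swap$ is, in particular, a $\left(10n,\tfrac{1}{2n}\right)$-one-way state generator relative to $\Swap$, and the preceding corollary rules out the latter. So there is nothing deep to do here; the work is entirely in matching the quantitative definitions.

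\noindent Concretely, I would proceed as follows. Suppose toward contradiction that $OWSG=(StateGen,Ver)$ is a one-way state generator relative to $\Swap$. First, unwind the correctness requirement: by definition $(StateGen,Ver)$ is $0$-secure under $G_{OWSG}^{cor}$, i.e.\ $(n^{-d})$-secure under $G_{OWSG}^{cor}$ for every $d\in\N$; taking $d=2$ and using $n^{-2}\le \tfrac{1}{2n}$ for $n\ge 2$ shows $(StateGen,Ver)$ is $\tfrac{1}{2n}$-secure under $G_{OWSG}^{cor}$. Second, unwind the security requirement: $(StateGen,Ver)$ is secure under $G_{OWSG}^{sec\text{-}t}$ for every polynomial $t$, and $10n$ is a polynomial, so $(StateGen,Ver)$ is $(n^{-d})$-secure under $G_{OWSG}^{sec\text{-}10n}$ for every $d$; again taking $d=2$ gives $\tfrac{1}{2n}$-security under $G_{OWSG}^{sec\text{-}10n}$. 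Hence $(StateGen,Ver)$ is a $\left(10n,\tfrac{1}{2n}\right)$-one-way state generator relative to $\Swap$, contradicting the preceding corollary, and the claim follows.

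\noindent The only point requiring any care — and the ``main obstacle'', such as it is — is the bookkeeping of the ``for all sufficiently large $n$'' quantifiers against the chosen inverse-polynomial slack (here one must pick a $d$ large enough that $n^{-d}\le \tfrac{1}{2n}$ holds on the tail, which $d=2$ already achieves). As an alternative route that avoids relying on the exact $\tfrac{1}{2n}$ corollary, one could instead observe that a full $OWSG$ relative to $\Swap$ is in particular a $\left(10n,\tfrac{1}{20n^2}\right)$-one-way state generator relative to $\Swap$, apply~\Cref{corf:swaptossmm} with $d=3$ to get a $\left(10n,\tfrac{1}{10n^2}\right)$-one-way state generator relative to $\SSMM$, and then contradict the corollary to~\Cref{lem:noowsg} which rules out exactly such an object relative to $\SSMM$ for every distribution family $\D$; either chain yields the same conclusion.
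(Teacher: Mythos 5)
Your proposal is correct and matches the paper's (implicit) argument exactly: the corollary follows immediately by observing that a full OWSG relative to $\Swap$ is in particular a $\left(10n,\tfrac{1}{2n}\right)$-OWSG relative to $\Swap$, which the preceding corollary rules out, and the paper indeed leaves this final step unproved because it is this immediate. The alternative chain you mention through $\left(10n,\tfrac{1}{20n^2}\right)$, \Cref{corf:swaptossmm}, and the $\SSMM$ impossibility is also the same route the paper itself uses two steps earlier, so there is no divergence in approach.
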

\subsection{Classical communication key exchange does not exist in the $\Swap$ model}\label{sec:ke}

\begin{definition}
    An $r_{max}$-classical communication key exchange protocol is a cryptographic primitive $(A,B)$ with the following syntax
    \begin{enumerate}
        \item $A(1^n,r,m_B) \to m_A$: a stateful procedure which takes in a security parameter $1^n$, a round number $r$, and Bob's message $m_B$, and produces a message $m_A$
        \item $B(1^n, r, m_A) \to m_B$: a stateful procedure which takes in a security parameter $1^n$, a round number $r$, and Alice's message $m_A$, and produces a message $m_B$
    \end{enumerate}
    satisfying the following properties
    \begin{enumerate}
        \item Correctness: $(A,B)$ is secure under the following (adversary-less) LOCC security game $G_{KE}^{cor}(1^n, (A,B)) = (G_1^A,G_2^B)$ defined as follows:
        \begin{enumerate}
            \item Set $m_B^0$ to be the empty string.
            \item For $i=1,\dots,r_{max}$
            \begin{enumerate}
                \item $G_1^A$ queries $A(1^n, i, m_B^{i-1}) \to m_A^i$ and forwards $m_A^i$ to $G_2^B$
                \item $G_2^B$ queries $B(1^n, i, m_A^i) \to m_B^i$ and forwards $m_B^i$ to $G_1^A$.
            \end{enumerate}
            \item $G_1^A$ receives $A$'s final output $b_A$.
            \item $G_2^B$ receives $B$'s final output $b_B$ and forwards the result to $G_1^A$.
            \item $G_1^A$ outputs $1$ if and only if $b_A\neq b_B$.
        \end{enumerate}
        \item Security: $(A,B)$ is $c=1/2$-secure under the following LOCC security game $G_{KE}^{sec}(1^n, (A,B)) = (G_1^A,G_2^B,G_3^\A)$.
        \begin{enumerate}
            \item Set $m_B^0$ to be the empty string.
            \item For $i=1,\dots,r_{max}$
            \begin{enumerate}
                \item $G_1^A$ queries $A(1^n, i, m_B^{i-1}) \to m_A^i$ and forwards $m_A^i$ to $G_2^B$ and $G_3^\A$.
                \item $G_2^B$ queries $B(1^n, i, m_A^i) \to m_B^i$ and forwards $m_B^i$ to $G_1^A$ and $G_3^\A$.
                \item $G_3^\A$ sends $m_A^i,m_B^i$ to $\A$.
            \end{enumerate}
            \item $G_3^\A$ receives $\A$'s final output $b'$, and forwards the result to $G_1^A$.
            \item $G_1^A$ receives $A$'s final output $b_A$.
            \item $G_1^A$ outputs $1$ if and only if $b_A = b'$.
        \end{enumerate}
    \end{enumerate}

    If $r_{max}\leq \poly(n)$, then we just say that $(A,B)$ is a classical communication key exchange protocol.
\end{definition}

\begin{definition}
    An $(r_{max},\epsilon)$-classical communication key exchange protocol is a protocol $(A,B)$ satisfying the same syntax as a classical communication key exchange protocol but with the following weakened correctness and security requirements
    \begin{enumerate}
        \item Correctness: $(StateGen, Ver)$ is secure under the game $G_{KE}^{cor}$
        \item Security: $(StateGen, Ver)$ is $(\epsilon,1/2)$-secure under the game $G_{KE}^{sec}$
    \end{enumerate}
\end{definition}

\begin{theorem}[Key exchange amplification]\label{thm:keamp}
    Let $r_{max} \leq \poly(n)$. If there exists a $(r_{max},1/n)$-classical communication key exchange protocol relative to any oracle $\mathcal{O}$, then there exists a $r_{max}$-classical communication key exchange protocol relative to $\mathcal{O}$.
\end{theorem}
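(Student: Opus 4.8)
The plan is to amplify by \emph{parallel repetition} together with a Goldreich--Levin hardcore bit. Fix a $(r_{max},1/n)$-classical communication key exchange $(A,B)$ relative to $\mathcal{O}$ and a parameter $\ell=\ell(n)\le\poly(n)$ large enough that $(1/2+1/n)^{\ell}$ is negligible (e.g.\ $\ell=n$). The amplified protocol $(A',B')$ runs $\ell$ copies of $(A,B)$ in parallel, bundling the $i$-th messages of all copies into one round, and additionally has Alice send a uniform $s\in\{0,1\}^{\ell}$ inside her first message; at the end Alice and Bob output $b_{A'}=\bigoplus_i s_i b_{A,i}$ and $b_{B'}=\bigoplus_i s_i b_{B,i}$, where $b_{A,i},b_{B,i}$ are the outputs of the $i$-th copy. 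Parallel composition does not change the round count, so $(A',B')$ is still an $r_{max}$-round classical communication protocol, i.e.\ a classical communication key exchange in the sense of the definition. Correctness is immediate: by correctness of $(A,B)$ and a union bound, $\Pr[\exists i:\, b_{A,i}\ne b_{B,i}]\le\ell\cdot\negl(n)=\negl(n)$, and otherwise $b_{A'}=b_{B'}$, so $(A',B')$ is (negligibly) secure under $G^{cor}_{KE}$.

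The work is in the security proof: showing that every $\poly(n)$-query eavesdropper predicts $b_{A'}$ from the full transcript (all $\ell$ copies plus $s$) with only negligible advantage. I would do this in two steps. \emph{Step 1 (direct product):} no $\poly(n)$-query algorithm, given the $\ell$ parallel transcripts, outputs the whole vector $v=(b_{A,1},\dots,b_{A,\ell})$ except with negligible probability. Given such an algorithm with success probability $p$, one builds a single-copy eavesdropper against $(A,B)$: it picks a random $j\in[\ell]$, plants its challenge copy as copy $j$, and itself simulates the other $\ell-1$ copies using its own direct oracle access to $\mathcal{O}$ (so the joint distribution of the $\ell$ transcripts is exactly that of an honest $\ell$-copy run), then runs the algorithm and returns its $j$-th coordinate. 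Peeling off one coordinate at a time --- using that, conditioned on a full run of all-but-one copies, the remaining copy is again a legitimate single-copy instance --- yields $p\le(1/2+1/n)^{\ell}=\negl(n)$. \emph{Step 2 (Goldreich--Levin):} if some $\poly(n)$-query eavesdropper predicted $b_{A'}=\bigoplus_i s_i b_{A,i}$ with non-negligible advantage $\epsilon$, then on a noticeable fraction of runs it predicts $\bigoplus_i s_i v_i$ well over random $s$, and the Goldreich--Levin list-decoding reduction --- which makes $\poly(\ell,1/\epsilon)=\poly(n)$ calls to the eavesdropper, hence $\poly(n)$ oracle queries in total --- recovers $v$ with probability $\Omega(\epsilon^{3})$, contradicting Step 1. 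Hence the advantage is negligible, $(A',B')$ is $1/2$-secure under $G^{sec}_{KE}$, and we are done. (In the intended application one instantiates $\mathcal{O}=\SSM_n$ or $\SSMM_n$ and combines this with the non-existence of classical-communication key exchange relative to $\SSM$.)

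The main obstacle is Step 1. Since all $\ell$ copies share the \emph{single} oracle instance $\mathcal{O}$, they are not independent, so the peeling induction does not literally apply: conditioning on the outcomes of some copies reshapes the distribution of $\mathcal{O}$, and hence of the remaining copies. For the concrete state models this is handled by observing that the single-copy eavesdropping advantage, viewed as a function of the common state $\ket{\phi_n}$, is $\poly(n)$-Lipschitz (the protocol and the adversary each touch only $\poly(n)$ copies of $\ket{\phi_n}$), so by concentration of measure on the sphere it lies within $2^{-\Omega(n)}$ of its mean --- hence at most $2/n$ --- for all but a negligible fraction of $\ket{\phi_n}$; conditioning on such a ``good'' $\ket{\phi_n}$ makes the $\ell$ copies independent with single-copy advantage $\le 2/n$, so the induction goes through, and the bad states contribute only negligibly. (Alternatively, one may invoke a uniform hardness-amplification/direct-product theorem in the query model directly.) Verifying the Lipschitz and concentration estimates, and keeping every query count polynomial throughout the Goldreich--Levin reduction, are the points that need care; the remainder is routine bookkeeping.
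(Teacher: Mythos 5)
The paper does not actually supply a proof of Theorem~\ref{thm:keamp}; it is stated without argument and used as a black box in the application to key exchange, so there is no paper proof to compare against. Judged on its own, your proposal -- parallel repetition plus a Goldreich--Levin hardcore bit -- is the standard route, and your correctness analysis (union bound, then XOR of agreeing bits) is fine. Crucially, you have also put your finger on exactly the right obstacle: the $\ell$ copies are correlated through the single shared oracle $\mathcal{O}$, so the naive peeling-off induction in Step~1 is invalid as written.

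I want to flag, however, that this obstacle is not merely a nuisance for your particular argument -- it is a genuine threat to the theorem as stated, and your proposed patch does not close the gap at the generality the theorem claims. The theorem is phrased for an arbitrary idealized oracle $\mathcal{O}$, and in that generality a counterexample exists: let the oracle distribution put mass $1/n$ on a set of ``leaky'' oracles for which the transcript determines $b_A$, and mass $1-1/n$ on oracles with perfect secrecy. Single-instance advantage is then $\approx 1/(2n)$, satisfying the hypothesis, yet the XOR (or any deterministic combiner) of $\ell$ copies is still recoverable whenever the shared oracle lands in the leaky set, giving XOR advantage $\approx 1/(2n)$, not negligible. So without an additional hypothesis on $\mathcal{O}$ the conclusion is simply false. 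Your concentration-of-measure fix is precisely such a hypothesis -- you are restricting to the Haar model so that the per-oracle advantage of any fixed $\poly(n)$-query adversary lies within $\negl(n)$ of its mean -- and that is enough for the paper's one application. But you should be explicit that you are proving a weaker, Haar-specific statement, not the theorem as literally written.

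Even in the Haar case a few steps in your sketch need more care than ``routine bookkeeping.'' First, the good/bad partition of $\ket{\phi_n}$ depends on the fixed single-copy reduction adversary, and during the peeling you use $\ell$ different induced adversaries; you need a union bound over all of them (still negligible, but it must be said). Second, and more importantly, the step ``conditioned on a good $\ket{\phi_n}$, single-copy advantage $\le 2/n$, hence the $\ell$-fold success probability is $\le (1/2+1/n)^{\ell}$'' is not a one-line peeling induction even once the copies are conditionally independent -- in the computational/query-bounded setting a direct-product bound of this shape requires a hard-core-set-style argument or a threshold direct product theorem, not just iterated conditioning, and you should invoke (or prove) the specific version you need and track that the resulting reduction's query count stays polynomial. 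Third, your Lipschitz bound on the advantage as a function of $\ket{\phi_n}$ is plausible (one trace-distance factor per oracle invocation, $\poly(n)$ invocations total by protocol plus adversary), but it needs to be stated against the correct metric on the sphere and chained correctly through both the protocol and the reduction adversary. None of these look fatal, but they are where the proof actually lives.
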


For the remainder of this section, let $\D = \{\mu_n\}_{n\in \N}$ be the family of Haar random distributions on $n$ qubits. 

\begin{theorem}[Theorem 8.4 from~\cite{TCC:AnaGulLin24}]\label{thm:noke}
    There does not exist a classical communication key exchange protocol in the $\SSM$ model parameterized by $\D$.
\end{theorem}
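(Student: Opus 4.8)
The plan is to follow the strategy of Anantharaman, Gulati, and Lin~\cite{TCC:AnaGulLin24}. Classical-communication key exchange in the $\SSM$ model is an LOCC protocol: Alice running $A$ and Bob running $B$ each hold $\poly(n)$ copies of the common Haar-random state $\ket{\phi_n}$, they exchange only classical messages, and at the end they should agree on a bit while an eavesdropper $\A$ — who sees the entire classical transcript and, crucially, also holds $\poly(n)$ copies of the same $\ket{\phi_n}$ — cannot predict it. The goal is to exhibit an $\A$ that guesses the agreed bit with probability approaching the honest agreement probability; combined with correctness ($\Pr[b_A=b_B]\ge 1-\negl(n)$) this would contradict $c=1/2$ security, since then $\A$ wins with probability $1-\negl(n) > 1/2+\negl(n)$.

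First I would pass to the type-vector representation. If $t=\poly(n)$ is the total number of copies consumed by Alice, Bob, and the eavesdropper, then by \Cref{lem:typevector} we have $\E_{\ket\phi\gets\mu_n}[\ketbra{\phi}^{\otimes t}] = \E_{T\gets Ty(t,[2^n])}[\ketbra{T}]$, and since $t\ll\sqrt{2^n}$ the type $T$ is collision-free except with probability $O(t^2/2^n)$. So up to negligible error the shared resource is merely a uniformly random size-$t$ subset $S\subseteq[2^n]$, with the three parties holding symmetrized ``slots'' labeled by elements of $S$. This converts the question into a combinatorial one about how information about $S$ can flow through a classical channel when the only quantum resource is copies of a common random labeling.

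The core step is to bound, conditioned on the classical transcript $\tau$, the correlation between $b_A$ and $b_B$ that is not already recoverable by the eavesdropper. I would have $\A$ run a ``simulated Bob'': using its own slots together with $\tau$, it samples a transcript-consistent execution of $B$ and outputs that party's bit. The reason this works is that, because all communication is classical and the only quantum correlation available is copies of the same state, the joint Alice--Bob post-transcript state is negligibly close to one whose two halves are conditionally independent given $\tau$; this is exactly where the random-subset structure and the partial-transpose/LOCC bound \Cref{lem:loccbound} enter, applied round by round to show that each message leaks into $\tau$ essentially all of the cross-correlation it could have created. Summing the per-round losses over the $\poly(n)$ rounds still leaves a negligible gap, giving $\Pr[b_{\A}=b_A]\ge\Pr[b_A=b_B]-\negl(n)$, which closes the contradiction.

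The hard part is precisely this round-by-round accounting: one must control adaptivity (later messages depend on earlier ones and on measurement outcomes that disturb the state), verify that the per-round error is small enough to survive multiplication by the round count, and handle the fact that the ``random subset'' picture is only approximate. This is the technical heart of~\cite{TCC:AnaGulLin24}, which is why we import their Theorem~8.4 as a black box rather than reprove it; for our purposes that suffices, since the key-exchange correctness and security games are LOCC, so \Cref{corf:ssmtoswap} (via the LOCC composition theorem) lifts the impossibility to the $\Swap$ model and hence, through \Cref{thm:ssmmtoswap}, to a unitary oracle.
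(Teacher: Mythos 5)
The paper does not reprove this statement; it is imported verbatim as Theorem~8.4 of~\cite{TCC:AnaGulLin24}, with no proof given in the text. Your proposal does exactly the same thing — you explicitly state that the theorem is used as a black box — and your sketch of the underlying AGL argument (type-vector reduction, eavesdropper simulating Bob, partial-transpose/LOCC bounds applied round by round) is a reasonable outline of that proof, though it cannot be checked against the paper since the paper contains no proof to compare it to.

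One small correction to your closing remark about how the theorem is then deployed: to lift the \emph{impossibility} from $\SSM$ to $\Swap$, you need the implication ``KE secure relative to $\Swap$ $\Rightarrow$ KE secure relative to $\SSM$'' (and then take the contrapositive). The paper's corollary achieves this via the chain \Cref{corf:swaptossmm} ($\Swap \to \SSMM$) followed by \Cref{corf:ssmmtossm} ($\SSMM \to \SSM$, which is where the LOCC composition theorem \Cref{thm:locccomp} is invoked, since that last step requires the stateful simulator and hence an LOCC game). The corollary you cite, \Cref{corf:ssmtoswap}, goes $\SSM \to \Swap$ for \emph{single-stage} games and would lift a construction, not an impossibility; chaining it here would argue in the wrong direction.
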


\begin{corollary}
    There does not exist a classical communication key exchange protocol in the $\Swap$ model parameterized by $\D$.
\end{corollary}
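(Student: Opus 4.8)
The plan is to prove this by contradiction, lifting a hypothetical $\Swap$-model key exchange protocol down through the chain $\Swap \to \SSMM \to \SSM$ and contradicting \Cref{thm:noke} (Theorem~8.4 of~\cite{TCC:AnaGulLin24}).

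First I would record the routine structural fact that both the correctness game $G_{KE}^{cor}$ and the security game $G_{KE}^{sec}$ defined above are \emph{efficient LOCC} cryptographic games in the sense of this paper: $G_{KE}^{cor}$ is the two-party computation $G_1^A \leftrightarrows G_2^B$ and $G_{KE}^{sec}$ is $G_1^A \leftrightarrows G_2^B \leftrightarrows G_3^{\A}$; every message exchanged between these parties is classical, each part queries only its own designated oracle, and the total number of oracle queries is $O(r_{max}) = \poly(n)$. This is exactly what is needed to invoke the stateful-construction composition of \Cref{thm:locccomp}, which is the relevant tool since the construction $C^{\SSM}$ that builds $\SSMM$ from $\SSM$ (used in \Cref{corf:ssmmtossm}) is stateful.

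Now suppose toward a contradiction that $(A,B)$ is a classical communication key exchange protocol relative to $\Swap$ parameterized by the Haar distribution $\D$. Fixing $d=2$, apply \Cref{corf:swaptossmm} to $(A,B)$ with respect to each of $G_{KE}^{cor}$ and $G_{KE}^{sec}$; this is legitimate since both games are efficient and make $\poly(n)$ queries. The result is a protocol $\ol{(A,B)}^{\SSMM}$ — with the same round count $r_{max}\le\poly(n)$ and the same LOCC structure, since the construction merely substitutes oracle calls — that is $(O(n^{-2});0)$-secure under $G_{KE}^{cor}$ and $(O(n^{-2});1/2)$-secure under $G_{KE}^{sec}$ relative to $\SSMM$; that is, an $(r_{max}, O(n^{-2}))$-weak classical communication key exchange relative to $\SSMM$. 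Feeding this into \Cref{corf:ssmmtossm} (legitimate since $G_{KE}^{cor}, G_{KE}^{sec}$ are LOCC) yields $\ol{\ol{(A,B)}}^{\SSM}$ which is $(O(n^{-2}) + \negl(n);\cdot)$-secure under the same two games relative to $\SSM$. For all sufficiently large $n$ this advantage is at most $1/n$, so $\ol{\ol{(A,B)}}^{\SSM}$ is an $(r_{max}, 1/n)$-classical communication key exchange protocol relative to $\SSM$ with $r_{max}\le\poly(n)$. Finally, apply the amplification theorem \Cref{thm:keamp} relative to the oracle $\SSM$ to upgrade $\ol{\ol{(A,B)}}^{\SSM}$ into a standard (negligibly secure and correct) $r_{max}$-classical communication key exchange protocol relative to $\SSM$, contradicting \Cref{thm:noke}. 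Hence no classical communication key exchange protocol exists in the $\Swap$ model parameterized by $\D$.

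The hard part, beyond the bookkeeping of checking the LOCC-game hypotheses, is controlling the accumulation of inverse-polynomial error across the two composition steps: the step from $\Swap$ to $\SSMM$ (\Cref{corf:swaptossmm}) uses a construction whose copy parameter $t$ depends on the target security level, so it only preserves inverse-polynomial rather than negligible security. This is precisely why the weak key exchange obtained in the $\SSM$ model cannot directly contradict \Cref{thm:noke} and must first be boosted via the amplification theorem \Cref{thm:keamp} — which is exactly the role that theorem plays in the argument.
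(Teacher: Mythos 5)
Your proposal is correct and follows essentially the same route as the paper's proof: lift the hypothetical $\Swap$-model protocol down to $\SSMM$ via \Cref{corf:swaptossmm}, then to $\SSM$ via \Cref{corf:ssmmtossm} (justified because the key-exchange games are LOCC), tolerate the accumulated inverse-polynomial error, amplify with \Cref{thm:keamp}, and contradict \Cref{thm:noke}. The only cosmetic difference is bookkeeping: the paper first relabels the $\Swap$-model protocol as a $(r_{max},1/3n)$-weak one before composing, while you track the $O(n^{-d})+\negl(n)$ error directly and observe it is below $1/n$ for large $n$; the substance and the reason \Cref{thm:keamp} is needed are identical.
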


\begin{proof}
    If there exists a protocol secure in the $\Swap$ model, then there also exists a $(r_{max},1/3n)$-classical communication key exchange protocol secure in the $\Swap$ model. And so by~\Cref{corf:swaptossmm} there exists a $(r_{max},1/2n)$ protocol secure in the $\SSMM$ model. But since the security games for classical communication key exchange are LOCC, by~\Cref{corf:ssmmtossm}, there exists a $(r_{max},1/n)$ protocol secure in the $\SSM$ model. By~\Cref{thm:keamp}, there exists a negligibly secure classical communication key exchange protocol secure in the $\SSM$ model.

    And so by~\Cref{thm:noke}, there does not exist a classical communication key exchange protocol in the $\Swap$ model.
\end{proof}

\end{document}